\numberwithin{equation}{section}
\theoremstyle{plain}
\newtheorem{theorem}{Theorem}
\newtheorem{definition}[theorem]{Definition}
\newtheorem{proposition}[theorem]{Proposition}
\newtheorem{remark}[theorem]{Remark}
\newtheorem{assumption}[theorem]{Assumption}
\numberwithin{equation}{section}
\begin{document}

\begin{center}
  \Large \bf Short-maturity options on realized variance in local-stochastic volatility models
\end{center}

\author{}
\begin{center}
{Dan Pirjol}\,\footnote{School of Business, Stevens Institute of Technology, United States of America;
  dpirjol@gmail.com},
  {Xiaoyu Wang}\,\footnote{Hong Kong University of Science and Technology (Guangzhou), People's Republic of China;
  xiaoyuwang@hkust-gz.edu.cn},
  Lingjiong Zhu\,\footnote{Department of Mathematics, Florida State University, United States of America; zhu@math.fsu.edu
 }
\end{center}

\begin{center}
 \today
\end{center}

\begin{abstract}
We derive the short-maturity asymptotics for prices of options on realized variance in local-stochastic volatility models. We consider separately the short-maturity asymptotics for out-of-the-money and in-the-money options cases. 
The analysis for the out-of-the-money case uses large deviations theory and the solution for the rate function
involves solving a two-dimensional variational problem. 
In the special case when the Brownian noises
in the asset price dynamics and the volatility process
are uncorrelated, we solve this variational problem explicitly.
For the correlated case, we obtain upper and lower bounds 
for the rate function, as well as an expansion around the at-the-money point.
Numerical simulations of the prices of variance options in a local-stochastic volatility model with bounded local volatility are in good agreement with the asymptotic results for sufficiently small maturity.
The leading-order asymptotics for at-the-money options on realized variance is dominated by fluctuations of the asset price around the spot value, and is computed in closed form.
\end{abstract}

\section{Introduction}

Options on realized variance are derivative contracts whose payoff is linked to the annualized realized variance of the return of some asset, which can be a stock, index, interest rate, exchange rate, or futures on some asset. They are related to variance swaps, which are instruments that pay an amount equal to the realized variance at maturity.

Denoting the price of the asset on a set of discrete sampling time points $\{S_k\}_{k=0}^n$ which are uniformly spaced $0=t_0< t_1 < \cdots < t_n=T$, i.e. $t_{i}-t_{i-1}=\tau$ for every $i=1,2,\ldots,n$, where $\tau$ is the time step of the sampling period expressed in years (for daily sampling $\tau=\frac{1}{252}$). 
The annualized realized variance is given by
\begin{equation}
\mathrm{RV}_{T,n} = \frac{1}{n\tau} \sum_{i=1}^n \log^2 ( S_i/S_{i-1})\,.
\end{equation}
Denote the discrete-time sum 
\begin{equation}
P_n(T) = \sum_{i=1}^n \log^2 ( S_i/S_{i-1})\,.
\end{equation}
In practice, this sum is approximated, for sufficiently large $n$, with the quadratic variation of the 
log-asset price
\begin{equation}
P(T) = [\log S, \log S]_T\,.
\end{equation}
If the asset price is assumed to follow a diffusion of the form $dS_t/S_t = \sigma_t dW_t + (r-q) dt$, where $\sigma_t$ is an arbitrary stochastic process and
$W_{t}$ is a standard Brownian motion, the quadratic variation of the log-price is 
\begin{equation}
[\log S, \log S]_T = \int_0^T \sigma^2_s ds \,.
\end{equation}
It was noted by Jarrow et al. (2013) \cite{Jarrow2013} that the limit of the expectation $\mathbb{E}[P_n(T)]$ of the discrete-time approximation does not always coincide with the expectation of the continuous time quantity $\mathbb{E}[P(T)]$, since convergence in probability does not imply convergence in $L_1$ norm. For example, this does not hold for the 3/2 model of stochastic volatility model for certain values of the model parameters. 

We will consider in this paper the continuous-time limit, and will use for the underlying of the variance swaps and options with the quadratic variation of $\log S_t$. One must keep in mind that the convergence of the discrete-time result to the continuous time counterpart will have to be checked for each case.

In this paper, we study the short-maturity asymptotics for variance options in Markovian local-stochastic volatility models. 
The short maturity asymptotics of options on realized variance and options on VIX has been studied by Al\'os, Garcia-Lor\'{i}te and Gonzalez \cite{Alos2018} and by Lacombe, Muguruza and Stone \cite{Lacombe2021} in a class of multi-factor rough volatility models. 

Another case which is implicitly covered in the literature is that of variance options under Markovian stochastic volatility models, which are equivalent to Asian options under local volatility models. The short maturity asymptotics of these options have been studied, for both out-of the-money and at-the-money regimes, in \cite{PZAsian}.
We will further elaborate on the connection between Asian options in local volatility models and variance options in Section~\ref{sec:local:stoch:vol}.

To the best of our knowledge, the short-maturity asymptotics for variance options in local-stochastic volatility models has never been rigorously studied in the literature. In this paper, we present explicit results for the leading-order asymptotics, considering both cases of 
out-of-the-money and at-the-money variance options.

Recently, CBOE has started trading futures on realized variance since Monday 23 September 2024. These futures are called \texttt{Cboe S\&P 500 Variance Futures} (Ticker: VA) and are cash-settled futures contracts based on the realized variance of the S\&P 500 
index.\footnote{https://www.cboe.com/variance-futures-pipeline-hub/} 
They will reflect the market view of prices of cash-settled variance swaps. Options on these futures contracts are similar to options on realized variance. Our results can be applied directly to the pricing of these options.

\subsection{Literature review}

Pricing options on realized variance has been discussed in the literature under several types of models. We give next a brief literature review.

A class of popular models with practitioners for pricing variance products are forward variance
models. Such models describe the dynamics of the instantaneous forward variance. 
An example of such models are the Bergomi models \cite{BergomiBook,Bergomi1, Bergomi2}.
The pricing of variance options under the Bergomi models was discussed in \cite{Bergomi1,Bergomi2}.

Pricing of variance options in the Heston model requires the distribution of the time integral of a CIR process $\int_0^T V_s ds$ where
$dV_t = \kappa (\theta - V_t) dt + \xi \sqrt{V_t} dZ_t$. The Laplace transform of the density of
$\int_0^T V_s ds$ is known in closed form; see e.g. \cite{Dufresne2001}. 
Therefore standard transform methods can be used to price the variance options in this model; see for example Sepp (2008) \cite{Sepp2008}
and Sepp (2008) \cite{Sepp2008HestonJumps} which also allow jumps.

Carr, Geman, Madan and Yor (2005) \cite{CGMY2005} proposed a method for pricing options on realized variance in exponential L\'evy models using a Laplace transform method. 
Carr and Itkin (2009) \cite{Itkin2009} proposed a new asymptotic method for pricing variance and volatility swaps and options on these swaps which yields a closed-form
expression for the fair price of these instruments, if the underlying process is modeled by a L\'evy
process with stochastic time change.
Carr and Lee (2010) \cite{CarrLee2010} obtained robust model-free hedges and price bounds for options on the realized variance of the returns on an underlying price process
that is a positive continuous semimartingale.
Kallsen, Muhle-Karbe and Vo{\ss} (2011) \cite{Kallsen2011} studied the pricing of variance options and determined semi-explicit formulas in general affine models allowing for jumps, stochastic volatility, and the leverage effect using the Laplace transform approach.
Drimus (2012) \cite{Drimus2012} discussed the pricing and hedging of options on realized variance in the 3/2 stochastic volatility model using transform-based methods. 
Drimus (2012) \cite{Drimus2012LogOU} studied the pricing of options on realized variance in a general class of Log-OU stochastic volatility models.
Torricelli (2013) \cite{Torricelli2013} studied joint pricing on an asset and its realized variance for various stochastic volatility models.


Although the valuation of variance swaps and options on realized variance is more convenient in continuous time, in actual applications these instruments are defined in discrete time with daily time sampling. Thus, the discrete-time case received a lot of attention in the literature.
Sepp (2012) \cite{Sepp2012} 
analyzed the effect of discrete sampling on the valuation of options on the realized variance in the Heston model. 

The systematic bias of the time discretization and the asymptotics for variance swaps was studied for a few popular stochastic volatility models by Bernard and Cui (2013) \cite{Bernard2013}.
Keller-Ressel and Muhle-Karbe (2013) \cite{KR2013} found that the difference between options on discretely sampled realized variance and the continuous time limit 
strongly depends on whether or not the stock price process has jumps. They proposed an approximation method based on correcting prices of options on quadratic variation by their asymptotic results, and an exact method using a novel randomization approach and applying Fourier-Laplace techniques.
Lian, Chiarella and Kalev (2014) \cite{Lian2014} obtained an accurate approximation for the characteristic function of the discretely sampled realized variance, 
which yielded semi-analytical pricing formulae for variance options and other derivatives.
Zheng and Kwok (2014) \cite{Zheng2014} used a saddlepoint approximation method to price options on discrete realized variance, 
and the same authors derived in \cite{Zheng2014MF} closed-form pricing formulas for discretely sampled variance swaps.
Zheng and Kwok (2014) \cite{Zheng2014Fourier} developed efficient fast Fourier transform algorithms to price and hedge
options on discrete realized variance and other products under time-inhomogeneous L\'{e}vy processes.
Zheng and Kwok (2015) \cite{Zheng2015} used the partially exact and bounded approximations to derive efficient and accurate analytic approximation formulas for pricing options on discrete realized variance under affine stochastic volatility models with jumps.
Zheng, Yuen and Kwok (2016) \cite{Zheng2016} developed recursive algorithms for pricing pricing variance options and volatility swaps on discrete realized variance under general time-changed L\'{e}vy processes.
Drimus, Farkas and Gourier (2016) \cite{Drimus2016} studied the valuation of options on discretely sampled variance
by analyzing the discretization effect and obtaining an analytical correction term to be applied to the value of options on continuously sampled variance
under general stochastic volatility dynamics.
Cui, Kirkby, and Nguyen (2017) \cite{Cui2017} developed a transform-based method to price swaps and options related to discretely-sampled realized variance under a general class of stochastic volatility models with jumps.
A survey on recent results in the pricing of derivatives on discrete realized variance can be found in Kwok and Zheng (2022) \cite{Kwok2022}.

Moreover, various properties for variance options have been studied in the literature.
Carr, Geman, Madan and Yor (2011) \cite{CGMY2011} analyzed the property of monotonicity in maturity for call options at a fixed strike
for realized variance option and options on quadratic variation normalized to unit.
Griessler and Keller-Ressel \cite{Griessler2014} showed that options on variance are typically underpriced if quadratic variation is substituted for the discretely sampled realized variance for a class of models including independently time-changed L\'{e}vy models and Sato processes with symmetric jumps.

Finally, our work is related to the short-maturity asymptotics for path-dependent option prices that have been studied in the recent literature.
The main tools are the sample-path large deviation principle
for small-time diffusion processes that dates back to \cite{Varadhan} 
and the contraction principle from large deviations theory (see e.g. \cite{VaradhanLD,Dembo1998}).
Most of such works have been focused on the
short-maturity asymptotics for Asian options.
The first short-maturity asymptotics result
for Asian options was obtained in \cite{PZAsian}
for local volatility models. 
The out-of-the-money (OTM) case relies
on large deviations for small-time diffusion processes, and the rate function is a one-dimensional variational problem
that can be solved explicitly \cite{PZAsian}.
Similar studies have been carried out for Asian options
under the CEV model \cite{PZAsianCEV}.
The short-maturity forward start Asian option has been studied in \cite{forwardAsian}.
By a combination of a Gaussian process approximation and Malliavin calculus, 
\cite{Park2019} studied 
both pricing and hedging for short-maturity
Asian options for local volatility models.

\subsection{Summary of the paper}

The outline of the paper is as follows.
In Section~\ref{sec:local:stoch:vol} we introduce the model for the underlying and specify technical assumptions on the model parameters.
Section~\ref{sec:main:results} contains the main results.
The short-maturity asymptotics for OTM
variance options are given in Section~\ref{sec:OTM}.
We show that by using large deviations theory,
the leading-order term in the short-maturity asymptotics for OTM
variance options can be formulated
as a two-dimensional variational problem.
In Section~\ref{sec:variational} we discuss the solution of this variational problem.
For the particular case when the Brownian noises
in the asset price dynamics and the volatility process
are uncorrelated, we solve the variational problem for the rate function explicitly
by reducing it to a one-dimensional optimization problem.
The argument of this optimization problem includes the rate function for
short-maturity Asian options under local volatility models obtained in \cite{PZAsian}.
For the correlated case, we obtain upper and lower bounds
for the solution of the variational problem. 
In Section~\ref{sec:perfect}, we solve the variational problem for perfectly correlated
and anti-correlated asset and volatility.
In Section~\ref{sec:expansion:around:ATM}, we give an explicit result for the
rate function of the variance options in an expansion in log-moneyness, which is convenient for 
use in practical applications.
Finally, in Section~\ref{sec:ATM}, further short-maturity
asymptotics results are obtained when the variance options
are at-the-money (ATM).

Section~\ref{sec:appl} discusses the application of the asymptotic results to pricing
variance options in the local-stochastic volatility model.
Numerical tests of the asymptotic results are presented.
For the uncorrelated case $\rho=0$ we compare the expansion in log-moneyness with an exact solution of the variational problem. The expansion is also tested by comparing with general bounds, and by comparing with Monte Carlo simulations in a model with bounded local volatility.

Some background of large deviations theory is presented in Appendix~\ref{sec:LDP}.
The technical proofs of all the results in the main paper
are provided in Appendix~\ref{sec:proofs}.
Finally, some additional technical results are provided in Appendix~\ref{sec:g2:h2}.

\section{Model Setup}\label{sec:local:stoch:vol}

In this paper, we are interested in studying variance options under
a local-stochastic volatility model.
Suppose that under the risk-neutral probability measure $\mathbb{Q}$ the underlying asset $S_{t}$ 
and the variance process $V_{t}$ follow a local-stochastic volatility model
of the form:
\begin{align}
&\frac{dS_{t}}{S_{t}}=(r-q)dt+\eta(S_{t})\sqrt{V_{t}}\rho dZ_{t}+\eta(S_{t})\sqrt{V_{t}}\sqrt{1-\rho^{2}}dW_{t},\label{eqn:S}
\\
&\frac{dV_{t}}{V_{t}}=\mu(V_{t})dt+\sigma(V_{t})dZ_{t},\label{eqn:V}
\end{align}
where $r$ is the risk-free rate, $q$ is the dividend yield, and $W_{t},Z_{t}$ are two independent standard Brownian motions, 
where the functions $\eta(\cdot), \sigma(\cdot):\mathbb{R}^{+}\rightarrow\mathbb{R}^{+}$ and $\mu(\cdot):\mathbb{R}^{+}\rightarrow\mathbb{R}$ are 
assumed to be time-homogeneous for simplicity.

The payoff of variance options is linked to the realized variance of the asset price. In continuous time this is related to the quadratic variation of the log-price, 
which is expressed in the model (\ref{eqn:S}), (\ref{eqn:V}) as
\begin{equation}
[\log S]_T = \int_0^T V_{s}\eta^{2}(S_{s})ds \,.
\end{equation}

The fair strike of a variance swap with maturity $T$ is defined as
\begin{equation}\label{FVT}
F_V(T) = \mathbb{E}\left[ \int_0^T V_{s}\eta^{2}(S_{s})ds \right]\,.
\end{equation}

The call and put variance options prices are given by 
\begin{align}
&C(T)=e^{-rT}\mathbb{E}\left[\left(\frac{1}{T}\int_{0}^{T}V_{s}\eta^{2}(S_{s})ds-K\right)^{+}\right],
\nonumber
\\
&P(T)=e^{-rT}\mathbb{E}\left[\left(K-\frac{1}{T}\int_{0}^{T}V_{s}\eta^{2}(S_{s})ds\right)^{+}\right],
\label{option:price}
\end{align}
where $K>0$ is the strike price and $T>0$ is the maturity. 
A variance call option is out-of-the-money (OTM) if $K > F_V(T)$, in-the-money (ITM)
if $K<F_V(T)$ and at-the-money (ATM) if $K = F_V(T)$.
A variance put option is OTM if $K < F_V(T)$, ITM if $K > F_V(T)$ and ATM if $K=F_V(T)$.

We first assume that 
$\eta(\cdot),\mu(\cdot)$ and $\sigma(\cdot)$ are
uniformly bounded for simplicity. 

\begin{assumption}\label{assump:bounded}
We assume that $\eta(\cdot),\mu(\cdot)$ and $\sigma(\cdot)$ are
uniformly bounded:
\begin{equation}
\sup_{x\in\mathbb{R}^{+}}\eta(x)\leq M_{\eta},
\qquad
\sup_{x\in\mathbb{R}^{+}}|\mu(x)|\leq M_{\mu},
\qquad
\sup_{x\in\mathbb{R}^{+}}\sigma(x)\leq M_{\sigma}.
\end{equation}
\end{assumption}

In the remainder of the paper, we often further assume that $\eta(\cdot)$ is decreasing, which will be stated explicitly where this is assumed,
such that our model satisfies the leverage effect in finance.
More precisely, when $\eta(\cdot)$ is not a constant function,
we assume that $\eta(\cdot)$ is strictly decreasing
so that its inverse function $\eta^{-1}(\cdot)$ exists. 
We also provide the following assumptions about the Lipschitz continuity.

\begin{assumption}\label{assump:lip}
We assume that $\eta$ is $\ell_{\eta}$-Lipschitz and $\sigma$ is $\ell_{\sigma}$-Lipschitz.
\end{assumption}

In addition, we impose the following
assumption on $\eta(\cdot)$ and $\sigma(\cdot)$
that appear in the diffusion terms of \eqref{eqn:S}-\eqref{eqn:V}
that is needed for the small-time large deviations estimates for \eqref{eqn:S}-\eqref{eqn:V}.

\begin{assumption}\label{assump:LDP}
We assume that $\inf_{x\in\mathbb{R}^{+}}\sigma(x)>0$
and $\inf_{x\in\mathbb{R}^{+}}\eta(x)>0$. Moreover, 
there exist some constants $M,\alpha>0$ such that
for any $x,y\in\mathbb{R}^{+}$,
$|\sigma(e^{x})-\sigma(e^{y})|\leq M|x-y|^{\alpha}$
and $|\eta(e^{x})-\eta(e^{y})|\leq M|x-y|^{\alpha}$.
\end{assumption}

To satisfy the leverage effect that is commonly observed in finance,
it is often assumed in the literature that $\eta(\cdot)$ is monotonically decreasing.
That is, the larger value of the asset price $S_{t}$, the smaller value
of the volatility function $\eta(S_{t})$.

When $\eta(\cdot)\equiv 1$, the local-stochastic volatility model \eqref{eqn:S}-\eqref{eqn:V}
reduces to the stochastic volatility model:
\begin{align}\label{SVmodel}
&\frac{dS_{t}}{S_{t}}=(r-q)dt+\sqrt{V_{t}}\left(\sqrt{1-\rho^{2}}dW_{t}+\rho dZ_{t}\right),
\\
&\frac{dV_{t}}{V_{t}}=\mu(V_{t})dt+\sigma(V_{t})dZ_{t},\nonumber
\end{align}
where $r$ is the risk-free rate, $q$ is the dividend yield, and $W_{t}$ and $Z_{t}$ are two independent Brownian motions,
and for variance options, the call
and put options prices in \eqref{option:price} reduce to:
\begin{equation}
C(T)=e^{-rT}\mathbb{E}\left[\left(\frac{1}{T}\int_{0}^{T}V_{s}ds-K\right)^{+}\right],
\qquad
P(T)=e^{-rT}\mathbb{E}\left[\left(K-\frac{1}{T}\int_{0}^{T}V_{s}ds\right)^{+}\right],
\end{equation}
where $K>0$ is the strike price.
This shows that the variance option prices $C(T),P(T)$ in the stochastic volatility model \eqref{SVmodel}
are identical to those of Asian options in the local volatility model with local volatility function $\sigma(V)$.
The short-maturity asymptotics for Asian options under local volatility models have
been studied in \cite{PZAsian,PZAsianCEV,PZAsianRisk,Park2019}.

Certain models used in financial practice, such as the Heston model,  do not satisfy the Assumptions~\ref{assump:bounded} and \ref{assump:LDP}. However, our arguments can be extended to these models as well, under alternative assumptions; see Remark~\ref{rmk:Heston}.

In the rest of the paper, we focus on the variance options under the local-stochastic volatility model \eqref{eqn:S}-\eqref{eqn:V}
and study the call option price $C(T)$ and put option price $P(T)$ given in \eqref{option:price}.
Note the short-maturity limit of the fair strike of the variance swap \eqref{FVT} is
\begin{equation}\label{FVTsmallT}
F_V(0)=\lim_{T\to 0} F_V(T) = \eta^2(S_0) V_0 \,.
\end{equation}
In this paper, we are interested in the short-maturity asymptotics ($T\rightarrow 0$)
for $C(T)$ and $P(T)$. We distinguish two cases: the out-of-the-money (OTM) case
and the at-the-money (ATM) case. In the short maturity limit, by (\ref{FVTsmallT}), the moneyness is measured with respect to $F_V(0) = \eta^2(S_0) V_0$.
More explicitly, the OTM call option corresponds to $V_{0}\eta^{2}(S_{0})<K$, 
and the OTM put option corresponds to $V_{0}\eta^{2}(S_{0})>K$.
The ATM case for both call and put options corresponds to $V_{0}\eta^{2}(S_{0})=K$.
We omit the discussions for in-the-money (ITM) case here, which 
can be analyzed via put-call parity.

\section{Main Results}\label{sec:main:results}

\subsection{OTM case}\label{sec:OTM}

In the short-maturity regime, i.e. as $T\rightarrow 0$, we have $\frac{1}{T}\int_{0}^{T}V_{s}\eta^{2}(S_{s})ds$ converges
a.s. to $V_{0}\eta^{2}(S_{0})$.
Therefore, the OTM case
for variance call corresponds to $V_{0}\eta^{2}(S_{0})<K$ 
and the OTM case for variance put options corresponds to $V_{0}\eta^{2}(S_{0})>K$.
We are interested in studying the OTM short-maturity 
asymptotics for variance call and put options.
We have the following main result.

\begin{theorem}\label{thm:OTM}
Suppose that Assumptions~\ref{assump:bounded} and \ref{assump:LDP} hold, 
and that the correlation parameter is in the range $-1 < \rho < +1$.

(i) For OTM variance call options, i.e. $V_{0}\eta^{2}(S_{0})<K$, we have
\begin{equation}
\lim_{T\rightarrow 0}T\log C(T)=-\mathcal{I}_\rho(S_{0},V_{0},K),
\end{equation}
where
\begin{align}
&\mathcal{I}_\rho(S_{0},V_{0},K)
\nonumber
\\
&=\inf_{\substack{g(0)=\log S_{0},h(0)=\log V_{0}\\
\int_{0}^{1}e^{h(t)}\eta^{2}(e^{g(t)})dt=K}}
\Bigg\{\frac{1}{2(1-\rho^{2})}\int_{0}^{1}\left(\frac{g'(t)}{\eta(e^{g(t)})\sqrt{e^{h(t)}}}-\frac{\rho h'(t)}{\sigma(e^{h(t)})}\right)^{2}dt
\nonumber
\\
&\qquad\qquad\qquad\qquad\qquad\qquad\qquad\qquad\qquad\qquad\qquad\qquad
+\frac{1}{2}\int_{0}^{1}\left(\frac{h'(t)}{\sigma(e^{h(t)})}\right)^{2}dt\Bigg\}.\label{I:rate:function}
\end{align}

(ii) For OTM variance put options, i.e. $V_{0}\eta^{2}(S_{0})>K$, we have
\begin{equation}
\lim_{T\rightarrow 0}T\log P(T)=-\mathcal{I}_\rho(S_{0},V_{0},K),
\end{equation}
where $\mathcal{I}_\rho(S_{0},V_{0},K)$ is defined in \eqref{I:rate:function}.
\end{theorem}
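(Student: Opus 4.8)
The plan is to derive both statements from the small-time sample-path large deviation principle (LDP) for the pair $(\log S_t,\log V_t)$ together with the contraction principle, in the spirit of the Asian-option analysis of \cite{PZAsian}. First I would rescale time by $t=uT$, $u\in[0,1]$, setting $\tilde S_u:=S_{uT}$, $\tilde V_u:=V_{uT}$, $\tilde X_u:=\log\tilde S_u$, $\tilde Y_u:=\log\tilde V_u$. Using It\^o's formula and Brownian scaling $Z_{uT}\overset{d}{=}\sqrt T\,\hat Z_u$, the system \eqref{eqn:S}--\eqref{eqn:V} rescales to a small-noise diffusion for $(\tilde X,\tilde Y)$ with small parameter $\sqrt T$, drift of order $T$, and diffusion matrix $a(x,y)$ having entries $a_{11}(x,y)=\eta^2(e^x)e^y$, $a_{12}(x,y)=a_{21}(x,y)=\rho\,\eta(e^x)e^{y/2}\sigma(e^y)$, $a_{22}(x,y)=\sigma^2(e^y)$, while $\frac1T\int_0^T V_s\eta^2(S_s)\,ds=\Phi(\tilde X,\tilde Y)$ with $\Phi(g,h):=\int_0^1 e^{h(t)}\eta^2(e^{g(t)})\,dt$. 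Under Assumptions~\ref{assump:bounded} and \ref{assump:LDP} — the latter supplying the uniform lower bounds $\inf\eta,\inf\sigma>0$ (so $a$ is invertible when $|\rho|<1$) and the H\"older continuity needed for the small-time Freidlin--Wentzell estimates — $(\tilde X_\cdot,\tilde Y_\cdot)$ obeys a sample-path LDP on $C([0,1];\mathbb R^2)$ with speed $1/T$ and good rate function $I_{FW}(g,h)=\frac12\int_0^1(g'(t),h'(t))\,a^{-1}(g(t),h(t))\,(g'(t),h'(t))^\top\,dt$ over absolutely continuous paths with $(g(0),h(0))=(\log S_0,\log V_0)$; the $O(T)$ drift drops out. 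Inverting $a$ and completing the square in the resulting quadratic form reproduces exactly the integrand of \eqref{I:rate:function}, so $I_{FW}$ equals the objective in the definition of $\mathcal I(S_0,V_0,K)$.

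Next I would apply the contraction principle: $\Phi$ is continuous on $C([0,1];\mathbb R^2)$ (uniform convergence of paths plus continuity of $\eta$ gives uniform convergence of the integrands on the relevant compacta), so $\Phi(\tilde X,\tilde Y)$ satisfies an LDP on $\mathbb R$ with speed $1/T$ and good rate function $\Lambda(y)=\inf\{I_{FW}(g,h):\Phi(g,h)=y\}$. The crux is then a set-reduction lemma: for the OTM call ($V_0\eta^2(S_0)<K$), $\inf_{\Phi(g,h)\ge K}I_{FW}(g,h)=\inf_{\Phi(g,h)>K}I_{FW}(g,h)=\inf_{\Phi(g,h)=K}I_{FW}(g,h)=\mathcal I(S_0,V_0,K)$, and symmetrically with $\le,<,=$ for the OTM put ($V_0\eta^2(S_0)>K$); the common value $\inf_{\Phi=K}I_{FW}$ is precisely the right-hand side of \eqref{I:rate:function}. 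The inequality $\inf_{\ge K}\le\inf_{=K}$ is trivial; for the reverse I would use the time-contracted paths $(g_\beta,h_\beta)(t):=(g(\beta t),h(\beta t))$, $\beta\in(0,1]$, which keep the initial condition fixed, satisfy $I_{FW}(g_\beta,h_\beta)\le\beta\,I_{FW}(g,h)$ by scaling of the nonnegative, $2$-homogeneous integrand, and satisfy $\Phi(g_\beta,h_\beta)\to V_0\eta^2(S_0)$ as $\beta\downarrow0$ while $\Phi(g_1,h_1)=\Phi(g,h)$; by continuity in $\beta$ they attain the value $K$ for some $\beta^\ast$, proving $\inf_{=K}\le\inf_{\ge K}$ (and likewise $\le\inf_{\le K}$ in the put case). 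Replacing $\{\Phi=K\}$ by the open condition $\{\Phi>K\}$ (resp. $\{\Phi<K\}$) costs nothing: perturb a near-optimal level-$K$ path by $h\mapsto h+\varepsilon\psi$ with $\psi\ge0$ (resp. $\psi\le0$), $\psi(0)=0$, $\psi\not\equiv0$, along which $I_{FW}$ is continuous because $1/\sigma$ and $1/\eta$ are bounded.

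Finally I would transfer the LDP for $\Phi(\tilde X,\tilde Y)$ to the prices. For the put this is immediate since $0\le(K-\Phi)^+\le K$: from $\varepsilon\,\mathbb Q(\Phi\le K-\varepsilon)\le\mathbb E[(K-\Phi)^+]\le K\,\mathbb Q(\Phi\le K)$, the LDP upper bound on the closed set $\{\Phi\le K\}$ and lower bound on the open sets $\{\Phi<K-\varepsilon\}$, and $\varepsilon\downarrow0$ with the set-reduction lemma, one gets $\lim_{T\to0}T\log P(T)=-\mathcal I(S_0,V_0,K)$. For the call $(\Phi-K)^+$ is unbounded, so I would split $(\Phi-K)^+\le(N-K)\mathbf 1_{\{\Phi\ge K\}}+\Phi\,\mathbf 1_{\{\Phi>N\}}$, control the second term by Cauchy--Schwarz using a uniform moment bound $\sup_{0<T\le1}\mathbb E[\Phi(\tilde X,\tilde Y)^2]<\infty$ (from $\eta\le M_\eta$ and a Gr\"onwall estimate for $\mathbb E[V_t^2]$ via bounded $\mu,\sigma$), and pick $N$ large enough that $\inf_{\Phi\ge N}I_{FW}\ge2\mathcal I$ — possible because forcing $\int_0^1 e^{h}\ge N/M_\eta^2$ makes $\log V$ travel distance $\gtrsim\log N$ at cost $\gtrsim(\log N)^2\to\infty$; together with $\mathbb E[(\Phi-K)^+]\ge\delta\,\mathbb Q(\Phi\ge K+\delta)$ and $\delta\downarrow0$ this yields $\lim_{T\to0}T\log C(T)=-\mathcal I(S_0,V_0,K)$; the factor $e^{-rT}\to1$ is negligible at this scale. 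The step I expect to be the real obstacle is the first one: establishing the two-dimensional small-time sample-path LDP and pinning down its rate function, because $a$ degenerates as $V\downarrow0$ and the coefficients are only H\"older, so the textbook Freidlin--Wentzell statements do not apply verbatim and one needs a localization/comparison argument (exploiting that $\log V$ cannot drift to $-\infty$ in short time); the moment and tail control for the unbounded call payoff is the secondary technical point.
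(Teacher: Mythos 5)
Your proposal is correct and follows essentially the same route as the paper: small-time sample-path LDP for $(\log S_{tT},\log V_{tT})$ combined with the contraction principle, then a reduction from option price to tail probability, then identification of $\inf_{\Phi\ge K}I_{FW}$ with $\inf_{\Phi=K}I_{FW}$. The only genuine difference is in controlling the unbounded call payoff: the paper establishes a uniform-in-$T$ bound on $\mathbb{E}\bigl[\,\lvert\Phi-K\rvert^p\bigr]$ for every $p>1$ (via a direct exponential-moment estimate of $V_s^p$) and concludes by H\"older with $p\to\infty$, whereas you truncate at level $N$, control the tail by Cauchy--Schwarz and a uniform $L^2$ bound, and discard it by taking $N$ large so that $\inf_{\Phi\ge N}I_{FW}\ge 2\mathcal{I}$ — a slightly more elaborate but equivalent device. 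Your explicit time-dilation argument for the set-reduction lemma and your remark about the degeneracy/H\"older-regularity issues in the two-dimensional small-time LDP are both compatible with what the paper does (it appeals to the small-time diffusion LDP results of Varadhan and Robertson directly, which is exactly why Assumption~\ref{assump:LDP} imposes uniform ellipticity and H\"older continuity in the log-variables).
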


In Theorem~\ref{thm:OTM}, $\mathcal{I}_\rho(S_{0},V_{0},K)$
is the rate function from large deviations theory. 
It is written as the solution to a variational problem
optimizing over two functions $g,h$. 
In the ATM limit, i.e. $V_{0}\eta^{2}(S_{0})=K$, 
by letting $h'(t)\equiv 0$ and $g'(t)\equiv 0$ in \eqref{I:rate:function}, 
one gets $\mathcal{I}_\rho(S_{0},V_{0},K)=0$, which corresponds to the
law of large numbers limit.

The edge cases $\rho = \pm 1$ corresponding to perfectly correlated/anticorrelated asset and volatility processes, require a separate treatment 
due to the singular coefficient in the first term of \eqref{I:rate:function}, and will 
be covered in Proposition \ref{prop:pm:1}.

\begin{remark}\label{rmk:Heston}
Several models popular in financial practice, for example the Heston-type models with
$\sigma(V)=\sigma V^{-1/2}$ for some $\sigma>0$ in Eqn.~\eqref{eqn:V}, do not satisfy Assumptions~\ref{assump:bounded} and \ref{assump:LDP}.
Although Theorem~\ref{thm:OTM} is obtained under the Assumptions~\ref{assump:bounded} and \ref{assump:LDP}, one can see that these results hold also for this case as long as $\mathbb{Q}(\{(\log S_{Tt},\log V_{Tt}),0\leq t\leq 1\}\in\cdot)$ satisfies a sample-path large deviation principle (which is true for the Heston-type and CEV-type SDEs without Assumptions~\ref{assump:bounded} and \ref{assump:LDP} (see e.g. \cite{Baldi,Robertson2010})).
One can also easily check that $p$-th moments of $V_{t}$ are bounded on $[0,T]$ for any $p>1$ (see e.g. \cite{PWZ2024}) as in the proof of Theorem~\ref{thm:OTM}.
On the other hand, we do not pursue Heston-type models in our paper because they are disfavored by market data on options on realized variance. More precisely, they produce a down-sloping implied volatility smile for variance options, which is different from the observed trend (up-sloping smile); see e.g. \cite{Drimus2012}. 
\end{remark}

In the next section, we will analyze the variational problem \eqref{I:rate:function}. 
We will show that in the special case when $\rho=0$,
we can solve the variational problem \eqref{I:rate:function} in closed form,
and for general $\rho$, we will derive upper and lower bounds on the rate function. 
We give also explicit solutions in the limit $\rho=\pm 1$ of perfectly correlated and anti-correlated models. In the limit of a stochastic volatility model $\eta(\cdot )\equiv 1$, the rate function reduces to the rate function for Asian options in the local volatility model which was computed previously in \cite{PZAsian}. We confirm that the result obtained here reduces to the known result in the limit $\eta(\cdot)\equiv 1$. We also derive
an expansion of the rate function around the ATM point which can be used for practical applications.

\subsection{Variational problem}\label{sec:variational}

\subsubsection{Zero correlation}

In this section, we will show that for the special case $\rho=0$,
we can solve the variational problem \eqref{I:rate:function} in closed form.
When $\rho=0$, the underlying asset $S_{t}$ follows a stochastic volatility model
of the form:
\begin{align}
&\frac{dS_{t}}{S_{t}}=(r-q)dt+\eta(S_{t})\sqrt{V_{t}}dW_{t},
\\
&\frac{dV_{t}}{V_{t}}=\mu(V_{t})dt+\sigma(V_{t})dZ_{t},
\end{align}
where $W_{t}$ and $Z_{t}$ are two independent standard Brownian motions.
When $\rho=0$, we are able to solve
the variational problem \eqref{I:rate:function} in closed form.
The solution takes a simpler form for monotonic local volatility $\eta(x)$, but this assumption is not essential and can be relaxed. The general case of non-monotonic $\eta(x)$ can be treated by separating the range of $x$ into domains of monotonicity. For definiteness, we give next the result for a monotonic non-increasing $\eta(x)$.

\begin{proposition}\label{prop:variational}
Suppose the assumptions in Theorem~\ref{thm:OTM} hold.
Assume that the local volatility function $\eta(x)$ is non-increasing.
When $\rho=0$, the variational problem \eqref{I:rate:function}
has the following solution.

(i) For OTM variance call options, i.e. $V_{0}\eta^{2}(S_{0})<K$, we have
\begin{align}
&\mathcal{I}_0(S_{0},V_{0},K)
\nonumber
\\
&=\inf_{z}
\Bigg\{\frac{1}{K^{2}}
\left(\int_{S_{0}}^{G_{c}(z)}\frac{\eta(x)dx}{x\sqrt{2\left(\eta^{2}(G_{c}(z))-\eta^{2}(x)\right)}}\right)^{2}
\left(\eta^{2}(G_{c}(z))z-K\right)
+\mathcal{J}(V_{0},z)\Bigg\},\label{I:formula:call}
\end{align}
where $G_{c}(z)$ satisfies the equation
\begin{equation}\label{G:c:z}
\frac{\int_{S_{0}}^{G_{c}(z)}\frac{dx}{x\eta(x)\sqrt{\eta^{2}(G_{c}(z))-\eta^{2}(x)}}}
{\int_{S_{0}}^{G_{c}(z)}\frac{\eta(x)dx}{x\sqrt{\eta^{2}(G_{c}(z))-\eta^{2}(x)}}}
=\frac{z}{K},
\end{equation}
and
\begin{equation}\label{J:formula}
\mathcal{J}(V_{0},z):=\inf_{h(0)=\log V_{0},
\int_{0}^{1}e^{h(t)}dt=z}\frac{1}{2}\int_{0}^{1}\left(\frac{h'(t)}{\sigma(e^{h(t)})}\right)^{2}dt.
\end{equation}

(ii) For OTM variance put options, i.e. $V_{0}\eta^{2}(S_{0})>K$, 
we have
\begin{align}
&\mathcal{I}_0(S_{0},V_{0},K)
\nonumber
\\
&=\inf_{z}
\Bigg\{\frac{1}{K^{2}}
\left(\int_{S_{0}}^{G_{p}(z)}\frac{\eta(x)dx}{x\sqrt{2\left(\eta^{2}(x)-\eta^{2}(G_{p}(z))\right)}}\right)^{2}
\left(K-\eta^{2}(G_{p}(z))z\right)
+\mathcal{J}(V_{0},z)\Bigg\},\label{I:formula}
\end{align}
where $G_{p}(z)$ satisfies the equation
\begin{equation}\label{G:p:z}
\frac{\int_{S_{0}}^{G_{p}(z)}\frac{dx}{x\eta(x)\sqrt{\eta^{2}(x)-\eta^{2}(G_{p}(z))}}}
{\int_{S_{0}}^{G_{p}(z)}\frac{\eta(x)dx}{x\sqrt{\eta^{2}(x)-\eta^{2}(G_{p}(z))}}}
=\frac{z}{K},
\end{equation}
and $\mathcal{J}(V_{0},z)$ is defined in \eqref{J:formula}.
\end{proposition}

\begin{remark}
Note that when the options are ATM as $T\rightarrow 0$, i.e. when $V_{0}\eta^{2}(S_{0})=K$, the rate function
$\mathcal{I}_0(S_{0},V_{0},K)=0$, which is consistent with the law of large numbers.
Notice that when $V_{0}\eta^{2}(S_{0})=K$, 
if we take $z=V_{0}$, then $\mathcal{J}(V_{0},V_{0})=0$ since we can take $h(t)\equiv h(0)=\log V_{0}$ in \eqref{J:formula}, 
and moreover, with $z=V_{0}$, one can check that $G_{c}(z)=G_{p}(z)=S_{0}$ since
when $z=V_{0}$, we have $\frac{z}{K}=\frac{V_{0}}{K}=\frac{1}{\eta^{2}(S_{0})}$, 
and one can check that 
\begin{equation}
\lim_{G_{c}(z)\rightarrow S_{0}}\frac{\int_{S_{0}}^{G_{c}(z)}\frac{dx}{x\eta(x)\sqrt{\eta^{2}(G_{c}(z))-\eta^{2}(x)}}}
{\int_{S_{0}}^{G_{c}(z)}\frac{\eta(x)dx}{x\sqrt{\eta^{2}(G_{c}(z))-\eta^{2}(x)}}}
=\lim_{G_{p}(z)\rightarrow S_{0}}\frac{\int_{S_{0}}^{G_{p}(z)}\frac{dx}{x\eta(x)\sqrt{\eta^{2}(x)-\eta^{2}(G_{p}(z))}}}
{\int_{S_{0}}^{G_{p}(z)}\frac{\eta(x)dx}{x\sqrt{\eta^{2}(x)-\eta^{2}(G_{p}(z))}}}
=\frac{1}{\eta^{2}(S_{0})}.
\end{equation}
Hence, we conclude that when $V_{0}\eta^{2}(S_{0})=K$, the optimal $z=V_{0}$ in \eqref{I:formula:call} and \eqref{I:formula} and $G_{c}(z)=G_{p}(z)=\frac{1}{\eta^{2}(S_{0})}$.
\end{remark}

We illustrate the application of Proposition~\ref{prop:variational} in Section~\ref{sec:rho0} on a particular local volatility function $\eta(x)$ given by the Tanh model.
For this case the equations \eqref{G:c:z} and \eqref{G:p:z} have unique solutions for $G_c(z), G_p(z)$. In general they may have multiple solutions, and the infimum in \eqref{I:formula:call} and \eqref{I:formula} must be taken also over all solutions of these equations. 

\begin{remark}
The optimization problem
\begin{equation}
\mathcal{J}(V_{0},z)=\inf_{h(0)=\log V_{0},
\int_{0}^{1}e^{h(t)}dt=z}\frac{1}{2}\int_{0}^{1}\left(\frac{h'(t)}{\sigma(e^{h(t)})}\right)^{2}dt
\end{equation}
has already been solved in \cite{PZAsian}.
In particular, Proposition~8 in \cite{PZAsian}
showed that 
\begin{equation}
\mathcal{J}(V_{0},z)=
\begin{cases}
\frac12 F^{(-)}(f_1) G^{(-)}(f_1),
&\text{for $z>V_{0}$},
\\
\frac12 F^{(+)}(h_1) G^{(+)} (h_1),
&\text{for $z<V_{0}$},
\end{cases}
\end{equation}
where $f_1$ is the solution of the equation $e^{f_1} - z/V_0 = G^{(-)}(f_1)/F^{(-)}(f_1)$ and $h_1$ is the solution of the equation $z/V_0 - e^{-h_1} = G^{(+)}(h_1)/F^{(+)}(h_1)$. 
The functions $F^{(\pm)}(x), G^{(\pm)}(x)$ are defined as
\begin{align}
&G^{(-)}(x) := \int_0^{x} \frac{\sqrt{e^{x}-e^y}}{\sigma(V_0 e^y)} dy,\quad
F^{(-)}(x) := \int_0^{x} \frac{dy}{\sigma(V_0 e^y)\sqrt{e^{x}-e^y}} dy
\label{FG:minus:eqn}
\\
&G^{(+)}(x):=\int_0^x \frac{\sqrt{e^{-y}-e^{-x}}}{\sigma(V_0 e^{-y})} dy,\quad
F^{(+)}(x) := \int_0^{x} \frac{dy}{\sigma(V_0 e^{-y})\sqrt{e^{-y}-e^{-x}}} dy .
\label{FG:plus:eqn}
\end{align}
In particular, when $\sigma(\cdot)\equiv\sigma_{0}$, the solution simplifies and  is given by (Proposition 12 in \cite{PZAsian}) 
\begin{equation}
\mathcal{J}(V_{0},z)=
\begin{cases}
\frac{1}{\sigma_{0}^{2}}\left(\frac{1}{2}\beta^{2}-\beta\tanh\left(\frac{\beta}{2}\right)\right),
&\text{for $z>V_{0}$},
\\
\frac{2}{\sigma_{0}^{2}}\xi(\tan\xi-\xi),
&\text{for $z<V_{0}$},
\end{cases}
\end{equation}
where $\beta$ is the solution of the equation $\frac{1}{\beta}\sinh\beta=\frac{z}{V_{0}}$
for $z\geq V_{0}$ and
$\xi$ is the solution in the interval $[0,\frac{\pi}{2}]$ of the equation
$\frac{1}{2\xi}\sin(2\xi)=\frac{z}{V_{0}}$ for $z\leq V_{0}$.
\end{remark}

\begin{remark}\label{remark:upper:bound:by:J}
Note that for call options in \eqref{I:formula:call}, 
if we let $z$ such that $G_{c}(z)=S_{0}$,
then $\mathcal{I}_0(S_{0},V_{0},K)\leq\mathcal{J}(V_{0},z)$. 
Moreover, when $G_{c}(z)\rightarrow S_{0}$, it follows from \eqref{G:c:z}
that the left hand side of \eqref{G:c:z} converges to $\frac{1}{\eta^{2}(S_{0})}$ such that $z\rightarrow\frac{K}{\eta^{2}(S_{0})}$.
Thus, when $G_{c}(z)=S_{0}$, we have $z=\frac{K}{\eta^{2}(S_{0})}$.
Hence, we obtain the upper bound $\mathcal{I}_0(S_{0},V_{0},K)\leq\mathcal{J}\left(V_{0},\frac{K}{\eta^{2}(S_{0})}\right)$.
It is similar to check that a similar upper bound holds for the put options in \eqref{I:formula}.
\end{remark}

Proposition~\ref{prop:variational} solves the variational problem \eqref{I:rate:function}
when $\rho=0$ and obtains a simplified expression for $\mathcal{I}_0(S_{0},V_{0},K)$.
As a corollary from the proof of Proposition~\ref{prop:variational}, 
we are able to obtain the optimal $g$ and $h$ that solve the variational problem \eqref{I:rate:function} for the uncorrelated case $\rho=0$.

\begin{remark}\label{cor:variational}
Denote $g_{0},h_{0}$ the optimizers of the variational problem for the uncorrelated case $\rho=0$. They are given explicitly as follows.

(i) For OTM variance call options, i.e. $V_{0}\eta^{2}(S_{0})<K$, 
then the optimal $g_{0}$ is given by
\begin{equation}
\int_{S_{0}}^{e^{g_{0}(t)}}\frac{dx}{x\eta(x)\sqrt{2\left(\eta^{2}(G_{c}(z_{c}))-\eta^{2}(x)\right)}}
=\frac{\int_{S_{0}}^{G_{c}(z_{c})}\frac{\eta(x)dx}{x\sqrt{2(\eta^{2}(G_{c}(z_{c}))-\eta^{2}(x))}}}{K}\int_{0}^{t}e^{h_{0}(s)}ds,
\end{equation}
where the optimal $h_{0}(t)=\log V_{0}+f_{0}(t;z_{c})$ with
\begin{equation}
\begin{cases}
\int_{0}^{f_{0}(t;z)}\frac{dy}{\sigma(V_{0}e^{y})\sqrt{e^{\alpha_{-}}-e^{y}}}=F^{(-)}(\alpha_{-}(z))t &\text{when $z>V_{0}$}, 
\\
\int_{0}^{f_{0}(t;z)}\frac{dy}{\sigma(V_{0}e^{y})\sqrt{e^{y}-e^{-\alpha_{+}}}}=-F^{(+)}(\alpha_{+}(z))t &\text{when $z<V_{0}$}, 
\end{cases}
\end{equation}
for any $0\leq t\leq 1$, where $\alpha_{+}=\alpha_{+}(z)$ is the solution of the equation
\begin{equation}
\frac{z}{V_0} - e^{-\alpha_{+}} = \frac{G^{(+)}(\alpha_{+})}{F^{(+)}(\alpha_{+})},
\end{equation}
with
\begin{align}
G^{(+)}(\alpha_{+}) = \int_{0}^{\alpha_{+}} \frac{ \sqrt{e^{-y} - e^{-\alpha_{+}}}}{\sigma( V_0 e^{-y})} dy, \quad
F^{(+)}(\alpha_{+}) = \int_{0}^{\alpha_{+}} \frac{1}{\sigma( V_0 e^{-y})} 
\frac{1}{\sqrt{e^{-y} - e^{-\alpha_{+}}}}dy \,,
\end{align}
and $\alpha_{-}=\alpha_{-}(z)$ is the solution of the equation
\begin{equation}
e^{\alpha_{-}} - \frac{z}{V_0} = \frac{G^{(-)}(\alpha_{-})}{F^{(-)}(\alpha_{-})},
\end{equation}
with
\begin{align}
G^{(-)}(\alpha_{-})
=\int_{0}^{\alpha_{-}} \frac{\sqrt{e^{\alpha_{-}} - e^{y}}}{\sigma(V_0 e^{y})} dy,
\quad
F^{(-)}(\alpha_{-})=\int_{0}^{\alpha_{-}} \frac{1}{\sigma(V_0 e^{y})} 
\frac{1}{\sqrt{e^{\alpha_{-}} -  e^{y}}}dy \,.
\end{align}

(ii) For OTM variance put options, i.e. $V_{0}\eta^{2}(S_{0})>K$,
then the optimal $g_{0}$ is given by
\begin{equation}
\int_{S_{0}}^{e^{g_{0}(t)}}\frac{dx}{x\eta(x)\sqrt{2\left(\eta^{2}(x)-\eta^{2}(G_{p}(z_{p}))\right)}}
=\frac{\int_{S_{0}}^{G_{p}(z_{p})}\frac{\eta(x)dx}{x\sqrt{2(\eta^{2}(x)-\eta^{2}(G_{p}(z_{p})))}}}{K}\int_{0}^{t}e^{h_{0}(s)}ds,
\end{equation}
where the optimal $h_{0}(t)=\log V_{0}+f_{0}(t;z_{p})$, with $f_{0}(t;z)$ being defined in (i).
\end{remark}


\subsubsection{Non-zero correlation}

In this section, we discuss the general $\rho\neq 0$ case.
We first recall the variational problem for the rate function, see \eqref{I:rate:function}
\begin{align}
\mathcal{I}_{\rho}(S_{0},V_{0},K)
=\inf_{\substack{g(0)=\log S_{0},h(0)=\log V_{0}\\
\int_{0}^{1}e^{h(t)}\eta^{2}(e^{g(t)})dt=K}}\Lambda_{\rho}[g,h],\label{I:rho:1}
\end{align}
where
\begin{align}
\Lambda_{\rho}[g,h]:=\frac{1}{2(1-\rho^{2})}\int_{0}^{1}\left(\frac{g'(t)}{\eta(e^{g(t)})\sqrt{e^{h(t)}}}-\frac{\rho h'(t)}{\sigma(e^{h(t)})}\right)^{2}dt
+\frac{1}{2}\int_{0}^{1}\left(\frac{h'(t)}{\sigma(e^{h(t)})}\right)^{2}dt \,.\label{I:rho:2}
\end{align}

The optimal $g$ and $h$ for the variational problem \eqref{I:rho:1} can be determined implicitly via the Euler-Lagrange equations. First, let us define:
\begin{align}\label{I:rho:lambda}
\Lambda_{\rho}[g,h;\lambda]&:=\frac{1}{2(1-\rho^{2})}\int_{0}^{1}\left(\frac{g'(t)}{\eta(e^{g(t)})\sqrt{e^{h(t)}}}-\frac{\rho h'(t)}{\sigma(e^{h(t)})}\right)^{2}dt
+\frac{1}{2}\int_{0}^{1}\left(\frac{h'(t)}{\sigma(e^{h(t)})}\right)^{2}dt
\nonumber
\\
&\qquad
+\lambda\int_{0}^{1}e^{h(t)}\eta^{2}(e^{g(t)})dt
\nonumber
\\
&=\frac{1}{2(1-\rho^{2})}\int_{0}^{1}\left(\frac{g'(t)}{\eta(e^{g(t)})\sqrt{e^{h(t)}}}\right)^{2}dt
+\frac{1}{2(1-\rho^{2})}\int_{0}^{1}\left(\frac{h'(t)}{\sigma(e^{h(t)})}\right)^{2}dt
\nonumber
\\
&\qquad
-\frac{\rho}{1-\rho^{2}}\int_{0}^{1}\frac{g'(t)}{\eta(e^{g(t)})\sqrt{e^{h(t)}}}\frac{h'(t)}{\sigma(e^{h(t)})}dt
+\lambda\int_{0}^{1}e^{h(t)}\eta^{2}(e^{g(t)})dt,
\end{align}
where $\lambda$ is the Lagrange multiplier.
The optimal $g$ and $h$ for the variational problem \eqref{I:rho:1} satisfy the Euler-Lagrange equations for the functional $\Lambda_\rho$. They are given by
$\frac{\delta \Lambda_{\rho}}{\delta g}=\frac{d}{dt}\left(\frac{\delta \Lambda_{\rho}}{\delta g'}\right)$
and $\frac{\delta \Lambda_{\rho}}{\delta h}=\frac{d}{dt}\left(\frac{\delta \Lambda_{\rho}}{\delta h'}\right)$
which leads to
\begin{align}
&\frac{d}{dt}\left(\frac{1}{1-\rho^{2}}\frac{g'(t)}{\eta^{2}(e^{g(t)})e^{h(t)}}
-\frac{\rho}{1-\rho^{2}}\frac{h'(t)}{\eta(e^{g(t)})\sqrt{e^{h(t)}}\sigma(e^{h(t)})}\right)
\nonumber
\\
&=-\frac{1}{1-\rho^{2}}\frac{(g'(t))^{2}\eta'(e^{g(t)})e^{g(t)}}{\eta^{3}(e^{g(t)})e^{h(t)}}
+\frac{\rho}{1-\rho^{2}}\frac{g'(t)h'(t)\eta'(e^{g(t)})e^{g(t)}}{\eta^{2}(e^{g(t)})\sqrt{e^{h(t)}}\sigma(e^{h(t)})}
\nonumber
\\
&\qquad\qquad\qquad
+2\lambda e^{h(t)}\eta(e^{g(t)})\eta'(e^{g(t)})e^{g(t)},\label{EL:1}
\end{align}
and
\begin{align}
&\frac{d}{dt}\left(\frac{1}{1-\rho^{2}}\frac{h'(t)}{\sigma^{2}(e^{h(t)})}
-\frac{\rho}{1-\rho^{2}}\frac{g'(t)}{\eta(e^{g(t)})\sqrt{e^{h(t)}}\sigma(e^{h(t)})}\right)
\nonumber
\\
&=-\frac{1}{2(1-\rho^{2})}\frac{(g'(t))^{2}}{\eta^{2}(e^{g(t)})e^{h(t)}}
-\frac{1}{1-\rho^{2}}\frac{(h'(t))^{2}\sigma'(e^{h(t)})e^{h(t)}}{\sigma^{3}(e^{h(t)})}
\nonumber
\\
&\qquad
+\frac{\rho}{1-\rho^{2}}\frac{g'(t)}{\eta(e^{g(t)})\sqrt{e^{h(t)}}}\frac{h'(t)\sigma'(e^{h(t)})e^{h(t)}}{\sigma^{2}(e^{h(t)})}
+\frac{\rho}{2(1-\rho^{2})}\frac{g'(t)}{\eta(e^{g(t)})\sqrt{e^{h(t)}}}\frac{h'(t)}{\sigma(e^{h(t)})}
\nonumber
\\  
&\qquad\qquad\qquad
+\lambda e^{h(t)}\eta^{2}(e^{g(t)}),\label{EL:2}
\end{align}
with the constraints $g(0)=\log S_{0}$, $h(0)=\log V_{0}$ and $\int_{0}^{1}e^{h(t)}\eta^{2}(e^{g(t)})dt=K$.
The transversality condition gives $g'(1)=h'(1)=0$.

Even though it seems impossible to solve the 
Euler-Lagrange equations~\eqref{EL:1}-\eqref{EL:2} (and hence the variational problem for $\mathcal{I}_{\rho}(S_{0},V_{0},K)$)
in closed form,
we can obtain the following lower and upper bounds for $\mathcal{I}_{\rho}(S_{0},V_{0},K)$.

\begin{proposition}\label{prop:bounds}
For any $\rho\in(-1,1)$, we have
\begin{equation}
\frac{1}{1+|\rho|}\mathcal{I}_{0}(S_{0},V_{0},K)
\leq\mathcal{I}_{\rho}(S_{0},V_{0},K)
\leq\frac{1}{1-|\rho|}\mathcal{I}_{0}(S_{0},V_{0},K),
\end{equation}
where $\mathcal{I}_{0}(S_{0},V_{0},K)$ is computed in closed form in Proposition~\ref{prop:variational}.
\end{proposition}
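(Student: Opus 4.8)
The plan is to compare the functionals $\Lambda_\rho[g,h]$ and $\Lambda_0[g,h]$ pointwise over admissible paths, using the crucial fact that the constraint set $\{g(0)=\log S_0,\ h(0)=\log V_0,\ \int_0^1 e^{h(t)}\eta^2(e^{g(t)})\,dt=K\}$ does not depend on $\rho$. Introduce the shorthand $a(t):=\frac{g'(t)}{\eta(e^{g(t)})\sqrt{e^{h(t)}}}$ and $b(t):=\frac{h'(t)}{\sigma(e^{h(t)})}$, so that $\Lambda_0[g,h]=\frac12\int_0^1(a^2+b^2)\,dt$ while $\Lambda_\rho[g,h]=\frac{1}{2(1-\rho^2)}\int_0^1(a-\rho b)^2\,dt+\frac12\int_0^1 b^2\,dt$. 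Expanding the square and using the identity $\frac{\rho^2}{2(1-\rho^2)}+\frac12=\frac{1}{2(1-\rho^2)}$ gives
\[
\Lambda_\rho[g,h]=\frac{1}{1-\rho^2}\,\Lambda_0[g,h]-\frac{\rho}{1-\rho^2}\int_0^1 a(t)b(t)\,dt.
\]

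Next I would control the cross term. By the Cauchy--Schwarz inequality followed by the AM--GM inequality,
\[
\left|\int_0^1 a(t)b(t)\,dt\right|\le\left(\int_0^1 a^2\,dt\right)^{1/2}\left(\int_0^1 b^2\,dt\right)^{1/2}\le\frac12\int_0^1(a^2+b^2)\,dt=\Lambda_0[g,h].
\]
Plugging this into the identity above and factoring $1-\rho^2=(1-|\rho|)(1+|\rho|)$ yields, for every admissible pair $(g,h)$,
\[
\frac{1}{1+|\rho|}\,\Lambda_0[g,h]\le\Lambda_\rho[g,h]\le\frac{1}{1-|\rho|}\,\Lambda_0[g,h],
\]
where both denominators are strictly positive because $\rho\in(-1,1)$.

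Finally, since this chain of inequalities holds for every $(g,h)$ in the common ($\rho$-independent) constraint set, taking the infimum over that set on all three sides gives
\[
\frac{1}{1+|\rho|}\,\mathcal{I}_0(S_0,V_0,K)\le\mathcal{I}_\rho(S_0,V_0,K)\le\frac{1}{1-|\rho|}\,\mathcal{I}_0(S_0,V_0,K),
\]
which is the asserted bound; the closed-form expression for $\mathcal{I}_0=\mathcal{I}_0(S_0,V_0,K)$ is then the one provided by Proposition~\ref{prop:variational}. There is no genuinely hard step here: the only point requiring care is the observation that the admissibility constraints are the same for all $\rho$, so that the pointwise comparison of the Lagrangians transfers verbatim to the two variational problems, and that the degenerate limits $\rho\to\pm1$ are excluded by hypothesis. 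One may optionally remark that the AM--GM step is tight exactly when $\|a\|_{L^2}=\|b\|_{L^2}$ and Cauchy--Schwarz is tight when $a$ and $b$ are proportional, which indicates that the bounds are of the correct order in $|\rho|$ near the at-the-money point.
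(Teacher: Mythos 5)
Your proof is correct and follows essentially the same strategy as the paper: rewrite $\Lambda_\rho$ in terms of $\Lambda_0$ plus a cross term, bound the cross term by $\Lambda_0$, factor $1-\rho^2=(1-|\rho|)(1+|\rho|)$, and take the infimum over the $\rho$-independent constraint set. The only cosmetic difference is in how the cross term is controlled: the paper applies the pointwise AM--GM inequality $2|a(t)b(t)|\le a(t)^2+b(t)^2$ directly under the integral, while you pass through Cauchy--Schwarz on $L^2$ and then AM--GM on the resulting norms; both yield the identical bound $\left|\int_0^1 ab\,dt\right|\le\Lambda_0[g,h]$, so the arguments are equivalent.
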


The bounds in Proposition~\ref{prop:bounds} can be used to obtain bounds
for the asymptotics of the OTM variance options.
We also notice that Proposition~\ref{prop:bounds} works well
when $|\rho|$ is small.
Indeed, when $|\rho|$ is small, we expect
that the optimal $g,h$ for the variational problem \eqref{I:rho:1}
should be close to $g_{0},h_{0}$ which are the optimal solutions
for the variational problem \eqref{I:rho:1} when $\rho=0$ that
can be solved analytically; see Corollary~\ref{cor:variational}.
This helps us establish the following upper bound
for $\mathcal{I}_{\rho}(S_{0},V_{0},K)$ which we expect
to work well when $|\rho|$ is small.

\begin{proposition}\label{prop:bounds:2}
For any $\rho\in(-1,1)$, we have
\begin{align}
\mathcal{I}_{\rho}(S_{0},V_{0},K)
\leq\Lambda_{\rho}[g_{0},h_{0}],
\end{align}
where $\Lambda_{\rho}[\cdot,\cdot]$ is defined in \eqref{I:rho:2}
and $g_{0},h_{0}$ are the optimal solutions for the variational problem \eqref{I:rho:1} 
when $\rho=0$.
\end{proposition}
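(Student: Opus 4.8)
The plan is to exploit the fact that $\mathcal{I}_{\rho}(S_{0},V_{0},K)$ is defined as an \emph{infimum} over all admissible pairs $(g,h)$ satisfying the endpoint conditions $g(0)=\log S_{0}$, $h(0)=\log V_{0}$ and the constraint $\int_{0}^{1}e^{h(t)}\eta^{2}(e^{g(t)})\,dt=K$. Since the infimum is at most the value of the functional at any particular admissible pair, it suffices to check that the pair $(g_{0},h_{0})$ produced by Corollary~\ref{cor:variational} is admissible for the $\rho\neq 0$ variational problem, and then simply to read off $\mathcal{I}_{\rho}(S_{0},V_{0},K)\le\Lambda_{\rho}[g_{0},h_{0}]$ by definition.

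The key observation is that the admissibility conditions for the variational problem \eqref{I:rho:1} do \emph{not} depend on $\rho$: the constraint set $\{g(0)=\log S_{0},\,h(0)=\log V_{0},\,\int_{0}^{1}e^{h(t)}\eta^{2}(e^{g(t)})dt=K\}$ is identical for every $\rho\in(-1,1)$, and in particular coincides with the one for $\rho=0$. Therefore $(g_{0},h_{0})$, being by construction (Corollary~\ref{cor:variational}) the minimizer of the $\rho=0$ problem, lies in the feasible set for every $\rho$. One should note in passing that $(g_{0},h_{0})$ has finite energy — indeed $\Lambda_{0}[g_{0},h_{0}]=\mathcal{I}_{0}(S_{0},V_{0},K)<\infty$ by Proposition~\ref{prop:variational} — so that $\Lambda_{\rho}[g_{0},h_{0}]$ is also finite (the additional cross term and the rescaling by $1/(1-\rho^{2})$ only multiply finite quantities), hence the bound is nontrivial.

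The argument then proceeds in one line: by the variational characterization \eqref{I:rho:1}, taking the particular feasible competitor $(g,h)=(g_{0},h_{0})$ gives
\begin{equation}
\mathcal{I}_{\rho}(S_{0},V_{0},K)
=\inf_{\substack{g(0)=\log S_{0},\,h(0)=\log V_{0}\\ \int_{0}^{1}e^{h(t)}\eta^{2}(e^{g(t)})dt=K}}\Lambda_{\rho}[g,h]
\le\Lambda_{\rho}[g_{0},h_{0}],
\end{equation}
which is exactly the assertion. There is essentially no obstacle here beyond verifying feasibility; the only mild point requiring care is to confirm that $g_{0},h_{0}$ as defined through the implicit relations in Corollary~\ref{cor:variational} are genuine absolutely continuous functions on $[0,1]$ with the claimed endpoint values and constraint (which follows from the construction in the proof of Proposition~\ref{prop:variational}, where they are obtained precisely as the Euler--Lagrange minimizers of the $\rho=0$ problem). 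For practical purposes one may also expand $\Lambda_{\rho}[g_{0},h_{0}]$ in powers of $\rho$: the cross term is $O(\rho)$ and, since $(g_{0},h_{0})$ is stationary for $\Lambda_{0}$, this yields $\mathcal{I}_{\rho}=\mathcal{I}_{0}+O(\rho)$, explaining why the bound is expected to be tight for small $|\rho|$.
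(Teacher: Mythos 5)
Your proposal is correct and follows exactly the paper's own argument: since the feasible set of the variational problem \eqref{I:rho:1} is independent of $\rho$, the $\rho=0$ minimizer $(g_{0},h_{0})$ is admissible for every $\rho$, and the infimum defining $\mathcal{I}_{\rho}$ is at most the value of $\Lambda_{\rho}$ at that admissible competitor. The additional observations on finiteness and the $O(\rho)$ expansion are sensible but not needed for the proof.
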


Next, we provide another upper bound for $\mathcal{I}_{0}(S_{0},V_{0},K)$, 
which is an extension to the upper bound in Remark~\ref{remark:upper:bound:by:J} when $\rho=0$.

\begin{proposition}\label{prop:bounds:3}
For any $\rho\in(-1,1)$, we have 
\begin{align}
\mathcal{I}_{\rho}(S_{0},V_{0},K)
\leq\frac{1}{1-\rho^{2}}\mathcal{J}\left(V_{0},\frac{K}{\eta^{2}(S_{0})}\right),
\end{align}
where $\mathcal{J}(\cdot,\cdot)$ is defined in \eqref{J:formula}.
\end{proposition}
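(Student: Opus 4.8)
The plan is to exhibit an explicit admissible pair $(g,h)$ for the variational problem \eqref{I:rho:1} whose functional value $\Lambda_{\rho}[g,h]$ equals $\frac{1}{1-\rho^{2}}\mathcal{J}\!\left(V_{0},\frac{K}{\eta^{2}(S_{0})}\right)$, and then invoke the fact that $\mathcal{I}_{\rho}(S_{0},V_{0},K)$ is the infimum of $\Lambda_{\rho}$ over all admissible pairs. The natural candidate, following Remark~\ref{remark:upper:bound:by:J}, is to keep the asset-price path constant: take $g(t)\equiv\log S_{0}$, so that $g'(t)\equiv 0$, and let $h$ be the minimizer in the definition \eqref{J:formula} of $\mathcal{J}\!\left(V_{0},z\right)$ with $z=\frac{K}{\eta^{2}(S_{0})}$, i.e. $h(0)=\log V_{0}$ and $\int_{0}^{1}e^{h(t)}dt=\frac{K}{\eta^{2}(S_{0})}$.

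First I would check admissibility. The initial conditions $g(0)=\log S_{0}$ and $h(0)=\log V_{0}$ are immediate. For the constraint $\int_{0}^{1}e^{h(t)}\eta^{2}(e^{g(t)})\,dt=K$, note that since $g(t)\equiv\log S_{0}$ we have $\eta^{2}(e^{g(t)})=\eta^{2}(S_{0})$, a constant, so the left-hand side equals $\eta^{2}(S_{0})\int_{0}^{1}e^{h(t)}\,dt=\eta^{2}(S_{0})\cdot\frac{K}{\eta^{2}(S_{0})}=K$, as required. Next I would evaluate $\Lambda_{\rho}[g,h]$ from \eqref{I:rho:2}: since $g'(t)\equiv 0$, the cross term $\frac{\rho h'(t)}{\sigma(e^{h(t)})}$ inside the first square is all that survives, giving
\begin{equation}
\Lambda_{\rho}[g,h]=\frac{1}{2(1-\rho^{2})}\int_{0}^{1}\left(\frac{\rho h'(t)}{\sigma(e^{h(t)})}\right)^{2}dt+\frac{1}{2}\int_{0}^{1}\left(\frac{h'(t)}{\sigma(e^{h(t)})}\right)^{2}dt=\left(\frac{\rho^{2}}{2(1-\rho^{2})}+\frac{1}{2}\right)\int_{0}^{1}\left(\frac{h'(t)}{\sigma(e^{h(t)})}\right)^{2}dt.
\end{equation}
Since $\frac{\rho^{2}}{2(1-\rho^{2})}+\frac{1}{2}=\frac{\rho^{2}+(1-\rho^{2})}{2(1-\rho^{2})}=\frac{1}{2(1-\rho^{2})}$, and $h$ is the minimizer realizing $\mathcal{J}\!\left(V_{0},\frac{K}{\eta^{2}(S_{0})}\right)=\frac{1}{2}\int_{0}^{1}\left(\frac{h'(t)}{\sigma(e^{h(t)})}\right)^{2}dt$, we obtain $\Lambda_{\rho}[g,h]=\frac{1}{1-\rho^{2}}\mathcal{J}\!\left(V_{0},\frac{K}{\eta^{2}(S_{0})}\right)$. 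Since $\mathcal{I}_{\rho}(S_{0},V_{0},K)\leq\Lambda_{\rho}[g,h]$ for this admissible pair, the claimed bound follows.

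The only genuine subtlety is to make sure the minimizing $h$ in \eqref{J:formula} exists (or, if one prefers to avoid an existence discussion, to argue via a minimizing sequence $h_{n}$ and pass to the limit in the inequality $\mathcal{I}_{\rho}\leq\frac{1}{1-\rho^{2}}\cdot\frac{1}{2}\int_{0}^{1}(h_{n}'/\sigma(e^{h_{n}}))^{2}dt$), and that $z=\frac{K}{\eta^{2}(S_{0})}$ is a value for which $\mathcal{J}(V_{0},z)$ is finite — which it is, since one can always steer $h$ from $\log V_{0}$ to any target time-average with finite cost given $\sigma$ bounded away from $0$ (Assumption~\ref{assump:LDP}) and bounded above (Assumption~\ref{assump:bounded}); the explicit formulas quoted from \cite{PZAsian} in the preceding remark already make this finiteness manifest. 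This is a minor point and not a real obstacle — the heart of the proof is just the algebraic identity $\frac{\rho^{2}}{2(1-\rho^{2})}+\frac12=\frac{1}{2(1-\rho^{2})}$ together with the observation that freezing $g$ at $\log S_0$ decouples the constraint into a pure $\mathcal{J}$-problem.
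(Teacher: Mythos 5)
Your proposal is correct and coincides with the paper's own proof: you freeze $g(t)\equiv\log S_0$, so the cross term reduces to $\rho h'/\sigma(e^h)$, combine the coefficients $\frac{\rho^2}{2(1-\rho^2)}+\frac12=\frac{1}{2(1-\rho^2)}$, and observe that the remaining constrained infimum over $h$ is $\frac{1}{1-\rho^2}\mathcal J\bigl(V_0,\frac{K}{\eta^2(S_0)}\bigr)$. Your extra remark about existence of a minimizer versus a minimizing sequence is a harmless refinement the paper elides but does not change the argument.
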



Although Proposition~\ref{prop:bounds} works well
when $|\rho|$ is small, when $|\rho|\rightarrow 1$, the upper bound in Proposition~\ref{prop:bounds} becomes trivial. 
Next, we will analyze the $|\rho|\rightarrow 1$ case in detail.

\subsection{Perfectly correlated and anti-correlated cases}\label{sec:perfect}

In this section, we consider the case of perfect correlation $\rho=+1$ and 
perfectly anti-correlated asset price and volatility $\rho = -1$. We show that for
these cases 
the variational problem for $\mathcal{I}_{\rho}(S_{0},V_{0},K)$
given in Theorem~\ref{thm:OTM} can be further analyzed.
We have the following result.

\begin{proposition}\label{prop:pm:1}
Denote $\mathcal{I}_{\pm 1}(S_{0},V_{0},K)$ the rate function for the perfectly correlated/anti-correlated case $\rho=\pm 1$. Then we have
\begin{equation}\label{pm:variational}
\mathcal{I}_{\pm 1}(S_{0},V_{0},K)
=\inf_{\substack{h(0)=\log V_{0}\\
\int_{0}^{1}e^{h(t)}\eta^{2}(\mathcal{F}_{\pm}(e^{h(t)}))dt=K}}\frac{1}{2}\int_{0}^{1}\left(\frac{h'(t)}{\sigma(e^{h(t)})}\right)^{2}dt,
\end{equation}
where $\mathcal{F}_{\pm}(\cdot)$ is defined as:
\begin{equation}\label{pm:F}
\int_{S_{0}}^{\mathcal{F}_{\pm}(x)}\frac{dy}{y\eta(y)}=\int_{V_{0}}^{x}\frac{\pm dy}{\sqrt{y}\sigma(y)},\qquad\text{for any $x>0$}.
\end{equation}
\end{proposition}

\begin{remark}\label{remark:pm:1}
For the special case $\eta(\cdot)\equiv\eta_{0}$, 
the optimization problem
\begin{equation}
\mathcal{J}(V_{0},K/\eta_{0}^{2})=\inf_{h(0)=\log V_{0},
\int_{0}^{1}e^{h(t)}dt=K/\eta_{0}^{2}}\frac{1}{2}\int_{0}^{1}\left(\frac{h'(t)}{\sigma(e^{h(t)})}\right)^{2}dt
\end{equation}
has already been solved in \cite{PZAsian}.
The solution is given in Proposition~8 in \cite{PZAsian} and reads
\begin{equation}
\mathcal{J}(V_{0},K/\eta_{0}^{2})=
\begin{cases}
\frac12 F^{(-)}(f_1) G^{(-)}(f_1),
&\text{for $K/\eta_{0}^{2}>V_{0}$},
\\
\frac12 F^{(+)}(h_1) G^{(+)} (h_1),
&\text{for $K/\eta_{0}^{2}<V_{0}$},
\end{cases}
\end{equation}
where $f_1$ is the solution of the equation $e^{f_1} - K/(\eta_{0}^{2}V_0) = G^{(-)}(f_1)/F^{(-)}(f_1)$ and $h_1$ is the solution of the equation $K/(\eta_{0}^{2}V_0) - e^{-h_1} = G^{(+)}(h_1)/F^{(+)}(h_1)$, 
where $\mathcal{F}^{(\mp)}(\cdot)$ and $\mathcal{G}^{(\mp)}(\cdot)$ are defined in \eqref{FG:minus:eqn}-\eqref{FG:plus:eqn}.

In the particular limit when $\sigma(\cdot)\equiv\sigma_{0}$, this simplifies further and can be expressed in closed form through $J_{\mathrm{BS}}(k)$, the rate function for Asian options in the Black-Scholes model, given by
(see Proposition~12 in \cite{PZAsian})
\begin{equation}\label{JBS:def}
\mathcal{J}(V_{0},K/\eta_{0}^{2})= \frac{1}{\sigma_0^2}J_{\mathrm{BS}}(K/(\eta_0^2 V_0)) :=
\begin{cases}
\frac{1}{\sigma_{0}^{2}}\left(\frac{1}{2}\beta^{2}-\beta\tanh\left(\frac{\beta}{2}\right)\right),
&\text{for $K>\eta_{0}^{2}V_{0}$},
\\
\frac{2}{\sigma_{0}^{2}}\xi(\tan\xi-\xi),
&\text{for $K<\eta_{0}^{2}V_{0}$},
\end{cases}
\end{equation}
where $\beta$ is the solution of the equation $\frac{1}{\beta}\sinh\beta=\frac{K}{\eta_{0}^{2}V_{0}}$
for $K\geq\eta_{0}^{2}V_{0}$ and
$\xi$ is the solution in the interval $[0,\frac{\pi}{2}]$ of the equation
$\frac{1}{2\xi}\sin(2\xi)=\frac{K}{\eta_{0}^{2}V_{0}}$ for $K\leq\eta_{0}^{2}V_{0}$.
\end{remark}

\begin{remark}\label{remark:pm:general}
More generally, the variational problem \eqref{pm:variational}
can be solved analytically. By a change of variable it can be reformulated 
as the variational problem for the asymptotics of Asian options
in local volatility models which was solved in \cite{PZAsian}.
To see this, let us first define $\mathcal{G}_{\pm}(x):=x\eta^{2}(\mathcal{F}_{\pm}(x))$
for any $x>0$ and assume that  the inverse function $\mathcal{G}^{-1}_{\pm}(\cdot)$ exists. 
Next, let $g(t)=\log\mathcal{G}_{\pm}(e^{h(t)})$ in \eqref{pm:variational}, 
such that $h(t)=\log\mathcal{G}_{\pm}^{-1}(e^{g(t)})$.
Then, we can rewrite \eqref{pm:variational} as
\begin{equation}\label{pm:variational:2}
\mathcal{I}_{\pm 1}(S_{0},V_{0},K)
=\inf_{\substack{g(0)=\log\mathcal{G}_{\pm}(V_{0})\\
\int_{0}^{1}e^{g(t)}dt=K}}\frac{1}{2}\int_{0}^{1}\left(\frac{g'(t)}{\hat{\sigma}(e^{g(t)})}\right)^{2}dt\,,
\end{equation}
where $\hat{\sigma}(S):=\frac{1}{S}\mathcal{G}_{\pm}^{-1}(S)\mathcal{G}'_{\pm}(\mathcal{G}^{-1}_{\pm}(S))\sigma(\mathcal{G}^{-1}_{\pm}(S))$ for any $S>0$.
Then, \eqref{pm:variational:2} is exactly the rate function for Asian option
for local volatility models with the local volatility $\hat{\sigma}(\cdot)$, the spot asset price $\mathcal{G}_{\pm}(V_{0})$ and the strike price $K$; see \cite{PZAsian}.
The variational problem \eqref{pm:variational:2} has been solved analytically in \cite{PZAsian}.
\end{remark}

Note that Proposition~\ref{prop:pm:1}
concerns the case $\rho=\pm 1$. 
When $\rho$ is close to $\pm 1$, 
the factor $\frac{1}{2(1-\rho^{2})}$ in \eqref{I:rho:2}
is large, and we expect that
the choice of $g,h$ to make the first term in \eqref{I:rho:2}
zero becomes a good choice since otherwise
the first term in \eqref{I:rho:2} would become large when
$\rho$ is close to $\pm 1$. 
Using this intuition, we obtain the following result,
that provides an upper bound for $\mathcal{I}_{\rho}(S_{0},V_{0},K)$
in \eqref{I:rho:1} and we expect that this provides a good
approximation when $\rho$ is close to $\pm 1$.

\begin{proposition}\label{prop:rho}
For any $\rho\in(-1,1)$,
\begin{equation}
\mathcal{I}_{\rho}(S_{0},V_{0},K)
\leq\inf_{\substack{h(0)=\log V_{0}\\
\int_{0}^{1}e^{h(t)}\eta^{2}(\mathcal{F}_{\rho}(e^{h(t)}))dt=K}}\frac{1}{2}\int_{0}^{1}\left(\frac{h'(t)}{\sigma(e^{h(t)})}\right)^{2}dt,
\end{equation}
where $\mathcal{F}_{\rho}(\cdot)$ is defined as:
\begin{equation}
\int_{S_{0}}^{\mathcal{F}_{\rho}(x)}\frac{dy}{y\eta(y)}=\int_{V_{0}}^{x}\frac{\rho dy}{\sqrt{y}\sigma(y)},\qquad\text{for any $x>0$}.
\end{equation}
\end{proposition}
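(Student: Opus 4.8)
The plan is to prove the inequality by restricting the infimum in the variational problem \eqref{I:rho:1}--\eqref{I:rho:2} to a subfamily of admissible curves on which the cross term in $\Lambda_{\rho}$ vanishes identically; this is the same mechanism used in the proof of Proposition~\ref{prop:pm:1} for $\rho=\pm1$, now carried through for arbitrary $\rho\in(-1,1)$. Concretely, I would couple $g$ to $h$ via $e^{g(t)}=\mathcal{F}_{\rho}(e^{h(t)})$ and show that on such pairs $\Lambda_{\rho}[g,h]$ collapses to $\frac12\int_0^1(h'(t)/\sigma(e^{h(t)}))^2\,dt$, while the constraint $\int_0^1 e^{h(t)}\eta^2(e^{g(t)})\,dt=K$ becomes exactly the constraint appearing on the right-hand side of the statement.

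First I would verify that $\mathcal{F}_{\rho}$ is well-defined and $C^1$ on $(0,\infty)$. By Assumptions~\ref{assump:bounded} and~\ref{assump:LDP}, $\eta$ and $\sigma$ are bounded and bounded away from zero, so $y\mapsto\int_{S_0}^y\frac{dx}{x\eta(x)}$ is a strictly increasing bijection of $(0,\infty)$ onto $\mathbb{R}$, while $x\mapsto\int_{V_0}^x\frac{dy}{\sqrt{y}\,\sigma(y)}$ is a finite continuous function on $(0,\infty)$. Hence the defining relation uniquely determines $\mathcal{F}_{\rho}(x)\in(0,\infty)$ for each $x>0$; in particular $\mathcal{F}_{\rho}(V_0)=S_0$, and implicit differentiation gives $\mathcal{F}_{\rho}'(x)=\rho\,\mathcal{F}_{\rho}(x)\,\eta(\mathcal{F}_{\rho}(x))/(\sqrt{x}\,\sigma(x))$, which is continuous.

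Next, let $h$ be any absolutely continuous function on $[0,1]$ with $h(0)=\log V_0$, finite energy $\int_0^1(h'(t)/\sigma(e^{h(t)}))^2\,dt<\infty$, and $\int_0^1 e^{h(t)}\eta^2(\mathcal{F}_{\rho}(e^{h(t)}))\,dt=K$, and set $g(t):=\log\mathcal{F}_{\rho}(e^{h(t)})$. Since $h$ is continuous on $[0,1]$, $e^{h(\cdot)}$ takes values in a compact subset of $(0,\infty)$ on which $\mathcal{F}_{\rho}$ is $C^1$, so $g$ is absolutely continuous with $g(0)=\log\mathcal{F}_{\rho}(V_0)=\log S_0$, and the chain rule together with the formula for $\mathcal{F}_{\rho}'$ gives $g'(t)/(\eta(e^{g(t)})\sqrt{e^{h(t)}})=\rho\,h'(t)/\sigma(e^{h(t)})$ for almost every $t$. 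Therefore the first integrand of $\Lambda_{\rho}[g,h]$ is identically zero, so $\Lambda_{\rho}[g,h]=\frac12\int_0^1(h'(t)/\sigma(e^{h(t)}))^2\,dt$, and the pair $(g,h)$ satisfies $g(0)=\log S_0$, $h(0)=\log V_0$ and $\int_0^1 e^{h(t)}\eta^2(e^{g(t)})\,dt=K$, hence is admissible for \eqref{I:rho:1}. Consequently $\mathcal{I}_{\rho}(S_0,V_0,K)\le\Lambda_{\rho}[g,h]=\frac12\int_0^1(h'(t)/\sigma(e^{h(t)}))^2\,dt$, and taking the infimum over all such $h$ yields the claimed upper bound.

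I do not expect a genuine obstacle: the entire content is the choice of the coupling $e^{g}=\mathcal{F}_{\rho}(e^{h})$ that annihilates the cross term. The only points that need a careful word are (a) the existence and $C^1$ regularity of $\mathcal{F}_{\rho}$, which follows from the boundedness assumptions as above, and (b) that the transformed pair $(g,h)$ is genuinely admissible (absolute continuity, correct initial value, finite action) — all immediate once the chain-rule identity is in hand, since after the substitution the functional is literally $\frac12\int_0^1(h'/\sigma(e^{h}))^2$ and $g(0)=\log S_0$ follows from $\mathcal{F}_{\rho}(V_0)=S_0$.
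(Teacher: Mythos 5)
Your proposal is correct and follows the same approach as the paper: you restrict the infimum in \eqref{I:rho:1} to pairs coupled via $e^{g(t)}=\mathcal{F}_{\rho}(e^{h(t)})$, which makes the first integrand of $\Lambda_{\rho}$ vanish identically, exactly as the paper does (citing the proof of Proposition~\ref{prop:pm:1} for the derivation of the coupling). Your version fills in the regularity details of $\mathcal{F}_{\rho}$ and the chain-rule identity more explicitly than the paper does, but the argument is the same.
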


\begin{remark}
For the special case $\eta(\cdot)\equiv\eta_{0}$, 
the upper bound in Proposition~\ref{prop:rho}
can be solved in closed form as discussed in Remark~\ref{remark:pm:1}.
More generally, the upper bound in Proposition~\ref{prop:rho}
can be reformulated and solved as discussed in Remark~\ref{remark:pm:general}.
\end{remark}

The previous simplification of the variational problem still requires solving
a variational problem of a single function or can be reduced further
to a finite-dimensional optimization problem. 

\subsection{Expansion of the rate function around the ATM point}\label{sec:expansion:around:ATM}

In this section we consider the expansion of the rate function around the ATM point.
Our approach is inspired by the recent work \cite{PZAsianLSV} that conducts an expansion of the rate function around the ATM point for Asian options under the local-stochastic volatility model \eqref{eqn:S}-\eqref{eqn:V} to solve a two-dimensional variational problem. We mention also the expansion 
used in \cite{Bayer2017,Friz2022} to solve a variational problem for the rate function for a stochastic volatility model with rough volatility.
The expansion approach has the advantage of providing explicit formulas that are easier to use in practice than a direct numerical solution of the variational problem for the rate function.

Recall that the ATM case corresponds to strike $K=V_{0}\eta^{2}(S_{0})$. 
Thus, we are seeking the solution of the variational problem in an expansion around the ATM point using the log-moneyness  $x:=\log\left(\frac{K}{V_{0}\eta^{2}(S_{0})}\right)$ as a small parameter.

The result is formulated in terms of the coefficients in the expansion of the local volatility function around $S_0$
\begin{equation}\label{eta:expansion}
\eta(S) = \eta_0 + \eta_1 \log \frac{S}{S_0} + \eta_2 \log^2 \frac{S}{S_0} + O\left(\log^3(S/S_0)\right),
\end{equation}
and analogous for the expansion of the volatility-of-volatility function around $V_0$
\begin{equation}\label{sigma:expansion}
\sigma(V) = \sigma_0 + \sigma_1 \log \frac{V}{V_0} + \sigma_2 \log^2 \frac{V}{V_0} 
+ O\left(\log^3(V/V_0)\right) \,.
\end{equation}
More explicitly, $\eta_0 = \eta(S_0), \eta_1=S_0 \eta'(S_0)$ and $\sigma_0=\sigma(V_0), \sigma_1 = V_0 \sigma'(V_0)$.
We have the following expansion of the rate function in powers of $x:=\log\left(\frac{K}{V_{0}\eta_0^{2}}\right)$.

\begin{proposition}\label{prop:first:order}
Suppose that $\sigma(\cdot), \eta(\cdot )$ are twice continuously differentiable such that the expansions \eqref{sigma:expansion} and \eqref{eta:expansion} are valid. 
Then we have the following expansion of the rate function in powers of the log-moneyness $x:=\log\left(\frac{K}{V_{0}\eta^{2}(S_{0})}\right)$:
\begin{align}\label{JAexp}
& \mathcal{I}_{\rho}(S_{0},V_{0},V_{0}\eta^{2}(S_{0})e^{x})
=\frac{3}{2\left(\sigma_{0}^{2} + 4\rho\sigma_{0} \sqrt{V_0} \eta_{1}+4\eta_{1}^{2} V_0\right)}x^{2} \\
& \quad
- \frac{3}{10(\sigma_{0}^{2} + 4\rho\sigma_{0} \sqrt{V_0} \eta_{1}+4\eta_{1}^{2} V_0)^3} \left( \sigma_0^4 + 2\sigma_0^3\left( 3\sigma_1 + 7\eta_1 \rho \sqrt{V_0}\right) 
+ \beta_2 \sigma_0^2 + \beta_1 \sigma_0 + \beta_0 \right) x^3 \nonumber \\
& \qquad
+O(x^{4}), \nonumber
\end{align}
as $x\rightarrow 0$, with
\begin{align}\label{beta0}
\beta_0 &= 16\eta_1^2 V_0^2 \left(\eta_1^2 + 6\eta_0 \eta_2\right), \\
\beta_1 &= 8\eta_1 \rho V_0 \left(3\eta_1 \rho \sigma_1 + 7\eta_1^2 \sqrt{V_0} + 12 \eta_0 \eta_2 \sqrt{V_0} \right), \\
\label{beta2}
\beta_2 &= 4\left(6\eta_1 \rho \sigma_1 \sqrt{V_0} + 6\eta_0 \eta_2 \rho^2 V_0 + \eta_1^2 \left(5+7\rho^2\right) V_0\right) \,.
\end{align}


The optimal $g,h$ that solve the variational problem \eqref{I:rho:1} admit 
the expansion
\begin{equation}\label{g:h:expansion:first}
g(t)=g_{0}(t)+xg_{1}(t)+O(x^{2}),
\qquad
h(t)=h_{0}(t)+xh_{1}(t)+O(x^{2}),
\end{equation}
as $x\rightarrow 0$, where $g_{0}(t)\equiv\log S_{0}$, $h_{0}(t)\equiv\log V_{0}$ and
\begin{align}
&g_{1}(t)=\frac{3}{2} \frac{(2 \eta_1 \sqrt{V_0} + \rho \sigma_0) \sqrt{V_0}}{\sigma_{0}^{2} + 4\rho\sigma_{0} \sqrt{V_0} \eta_{1}+4\eta_{1}^{2} V_0}(2t-t^{2}),
\\
&h_{1}(t)=\frac{3}{2}\frac{\sigma_0 (\sigma_0 + 2\eta_1 \rho \sqrt{V_0})}{\sigma_{0}^{2} + 4\rho\sigma_{0} \sqrt{V_0} \eta_{1}+4\eta_{1}^{2} V_0}
(2t-t^{2}).
\end{align}
The functions $g_2(t), h_2(t)$ are given in Appendix \ref{sec:g2:h2}.
\end{proposition}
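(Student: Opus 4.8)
The plan is to expand the constrained variational problem \eqref{I:rho:1}--\eqref{I:rho:2} perturbatively about the ATM configuration. At $x=0$ the constraint $\int_0^1 e^{h(t)}\eta^2(e^{g(t)})\,dt=V_0\eta_0^2$ is met by the constant pair $g(t)\equiv\log S_0$, $h(t)\equiv\log V_0$, for which $\Lambda_\rho=0$; so for small $x=\log\!\big(K/(V_0\eta_0^2)\big)$ I posit $g(t)=\log S_0+xg_1(t)+x^2g_2(t)+O(x^3)$ and $h(t)=\log V_0+xh_1(t)+x^2h_2(t)+O(x^3)$ with $g_i(0)=h_i(0)=0$. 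Substituting these together with the Taylor expansions \eqref{eta:expansion}, \eqref{sigma:expansion} into the constraint $\int_0^1 e^{h(t)}\eta^2(e^{g(t)})\,dt=V_0\eta_0^2 e^x$ and matching powers of $x$, the order-$x$ identity is the linear isoperimetric condition $\int_0^1\!\big(h_1+\tfrac{2\eta_1}{\eta_0}g_1\big)dt=1$, while the order-$x^2$ identity pins down $\int_0^1\!\big(h_2+\tfrac{2\eta_1}{\eta_0}g_2\big)dt$ in terms of $g_1,h_1$. This is the mechanism by which $\eta_2$ enters the final answer (through the quadratic term in $\eta^2$ in the constraint), whereas $\sigma_2$ does not appear at this order since it first contributes at $O(x^4)$.

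Next I expand the objective. As $g',h'$ are $O(x)$ while the coefficients $\tfrac1{\eta^2(e^{g})e^{h}}$, $\tfrac1{\sigma^2(e^{h})}$ and $\tfrac1{\eta(e^{g})\sqrt{e^{h}}\,\sigma(e^{h})}$ equal their ATM values plus $O(x)$, one obtains $\Lambda_\rho[g,h]=x^2\,\Lambda_2[g_1,h_1]+x^3\,\Lambda_3[g_1,h_1,g_2,h_2]+O(x^4)$ with
\[
\Lambda_2[g_1,h_1]=\frac1{2(1-\rho^2)}\int_0^1\!\Big(\frac{g_1'}{\eta_0\sqrt{V_0}}-\frac{\rho h_1'}{\sigma_0}\Big)^2dt+\frac12\int_0^1\!\Big(\frac{h_1'}{\sigma_0}\Big)^2dt .
\]
At order $x^2$ the problem reduces to minimizing the quadratic form $\Lambda_2$ over $(g_1,h_1)$ subject to the single linear constraint above and $g_1(0)=h_1(0)=0$. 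The Euler--Lagrange equations with a Lagrange multiplier $\lambda$ force $g_1'',h_1''$ to be constant, the transversality conditions at the free endpoint give $g_1'(1)=h_1'(1)=0$, so $g_1$ and $h_1$ are the stated multiples of $2t-t^2$, and $\lambda$ is fixed by substituting back into the linear constraint ($\int_0^1(1-t)^2dt=\tfrac13$ supplies the factor $3$). The classical identity for a quadratic form with one linear constraint gives $\min\Lambda_2=\lambda/2$, and evaluating $\lambda$ yields the leading coefficient $\tfrac3{2(\sigma_0^2+4\rho\sigma_0\sqrt{V_0}\eta_1+4\eta_1^2V_0)}$.

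For the $O(x^3)$ coefficient the key point is that $\Lambda_3$ depends on $(g_2,h_2)$ only through a term linear in $(g_2',h_2')$; integrating that term by parts and invoking the Euler--Lagrange equations for $(g_1,h_1)$, together with the vanishing of all boundary contributions (using $g_2(0)=h_2(0)=0$ and the transversality of $(g_1,h_1)$ at $t=1$), turns it into $\lambda\int_0^1\!\big(h_2+\tfrac{2\eta_1}{\eta_0}g_2\big)dt$, which is already known from the order-$x^2$ constraint. Consequently the whole $O(x^3)$ coefficient becomes an explicit polynomial in $\eta_0,\eta_1,\eta_2,\sigma_0,\sigma_1,\rho,V_0$, obtained by inserting the explicit quadratic $g_1,h_1$ and performing elementary integrals over $[0,1]$; organizing the outcome gives the form \eqref{JAexp}--\eqref{beta2}. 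To make the expansion rigorous I would note that, since $\eta,\sigma\in C^2$ and the Hessian of $\Lambda_2$ is positive definite on the constraint surface, the implicit function theorem in the relevant Sobolev space produces a minimizing family $(g^x,h^x)$ depending smoothly on $x$, justifying the termwise expansion (the leading order can also be cross-checked against the bounds in Propositions~\ref{prop:bounds} and \ref{prop:bounds:3} and the $\rho=0$ closed form). The main obstacle is the $O(x^3)$ bookkeeping: verifying the cancellation of the $(g_2,h_2)$-dependence and then keeping the ensuing polynomial algebra under control so that it collapses to the compact coefficients $\beta_0,\beta_1,\beta_2$.
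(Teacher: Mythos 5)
Your proposal is correct and, at the $O(x^3)$ step, takes a genuinely different and more elegant route than the paper. Both proofs begin identically: posit $g=\log S_0 + xg_1 + x^2 g_2 + \cdots$, $h=\log V_0 + xh_1 + x^2 h_2 + \cdots$, expand the constraint $\int_0^1 e^{h}\eta^2(e^g)\,dt = V_0\eta_0^2 e^x$ order by order, solve the $O(x)$ Euler--Lagrange system with the transversality conditions $g_1'(1)=h_1'(1)=0$ to obtain the quadratic $g_1,h_1$ (your ``$\min \Lambda_2 = \lambda/2$'' shortcut for the leading coefficient is valid once one restates the $O(x)$ constraint in terms of $g_1',h_1'$ via $\int_0^1 g_1\,dt = \int_0^1 (1-t)g_1'\,dt$). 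The divergence is at $O(x^3)$: the paper solves the $O(x^2)$ Euler--Lagrange system explicitly, obtaining polynomials $g_2,h_2$ with coefficients given in Appendix~\ref{sec:g2:h2}, determining $\lambda_2$ from the normalization~\eqref{norm:2}, and then substituting everything into the expansion of $\Lambda_\rho$. You instead observe that $\Lambda_3$ depends on $(g_2,h_2)$ only through the bilinear term $D^2\Lambda_\rho\big[(g_1,h_1),(g_2,h_2)\big]$; integrating by parts (the boundary terms vanish because $g_2(0)=h_2(0)=0$ and $g_1'(1)=h_1'(1)=0$) and using the $O(x)$ Euler--Lagrange equations \eqref{EL:1:first}--\eqref{EL:2:first} converts this into $-\lambda_1 V_0\eta_0^2 \int_0^1 (h_2 + 2\tilde\eta_1 g_2)\,dt$, which is already determined by the $O(x^2)$ constraint~\eqref{norm:2} in terms of $g_1,h_1$ alone. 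This is the envelope-theorem/adjoint observation (the variational analogue of the ``$2n{+}1$ theorem'') and it sidesteps the explicit computation of $g_2,h_2,\lambda_2$ entirely, trading the paper's Appendix~\ref{sec:g2:h2} for a few lines. Your remarks on which Taylor coefficients ($\eta_2$, $\sigma_1$ but not $\sigma_2$) survive at $O(x^3)$, and on justifying the expansion via smooth dependence of the minimizer on $x$, are also correct; both match the paper's conclusions even though the paper simply posits the expansion without appealing to the implicit function theorem. The one thing your sketch defers, as you note, is the final polynomial bookkeeping that collapses to $\beta_0,\beta_1,\beta_2$, but the structure of the argument is sound.
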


In the limit $\eta(x)\equiv 1$ the rate function for variance options reduces to the rate function for Asian options in a local volatility model with volatility $\sigma(v)$ \cite{PZAsian}. The expansion of this rate function in powers of log-strike was computed for the latter case in Corollary~16 of \cite{PZAsian}. We can check that the two results agree indeed. 
By taking $\eta_0=1, \eta_2=\eta_3 = 0$ in (\ref{JAexp}), we get
\begin{align}\label{JAexpSV}
 \mathcal{I}_{\rho}(S_{0},V_{0},V_{0}\eta^{2}(S_{0})e^{x})\Big|_{\eta(x)\equiv 1}
=\frac{3}{2\sigma_0^2 }x^{2} - \frac{3}{10\sigma_{0}^{3} } \left( \sigma_0 + 6 \sigma_1   \right) x^3  +O(x^{4}), 
\end{align}
which agrees indeed with the first two terms in equation~(40) of \cite{PZAsian}.

\subsection{ATM case}\label{sec:ATM}

We give in this section the leading short maturity asymptotics for ATM variance options. 
Recall that in the short maturity limit $T\to 0$ the ATM case corresponds to variance options with strike $K=V_{0}\eta^{2}(S_{0})$.
Unlike the OTM case, the $T$ scaling for the ATM variance options 
will be seen to be different.
In probabilistic language, the ATM regime
corresponds to fluctuations associated with the central limit theorem
regime, which is in contrast to the large deviations regime that dominates the short maturity asymptotics for the OTM case.
We have the following result.

\begin{theorem}\label{thm:ATM}
Suppose that Assumptions~\ref{assump:bounded}, \ref{assump:lip} hold.
Further assume that $\eta(\cdot)$ is twice differentiable with $\sup_{x\in\mathbb{R}^{+}}|(\eta^{2})''(x)|<\infty$
and there exists some $C'\in(0,\infty)$ 
such that $\max_{0\leq t\leq T}\mathbb{E}[(S_{t})^{4}]\leq C'$
for any sufficiently small $T>0$.

For ATM variance call and put options, i.e. $K=V_{0}\eta^{2}(S_{0})$, 
we have
\begin{align}
&\lim_{T\rightarrow 0}\frac{C(T)}{\sqrt{T}}
=\lim_{T\rightarrow 0}\frac{P(T)}{\sqrt{T}}
\nonumber
\\
&=\frac{1}{\sqrt{6\pi}}(\eta^2(S_0) V_0) 
\sqrt{4V_{0}(S_0 \eta'(S_{0}))^{2}
+ \sigma^{2}(V_{0})
+4\rho S_0 \eta'(S_{0}) \sqrt{V_{0}} \sigma(V_{0})}.
\end{align}
\end{theorem}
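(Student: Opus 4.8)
\textbf{Proof plan for Theorem~\ref{thm:ATM}.}
The plan is to isolate, in the recentered average $A_T:=\frac1T\int_0^T V_s\eta^2(S_s)\,ds$, a Gaussian fluctuation of order $\sqrt T$ coming from the frozen-coefficient diffusion near $(S_0,V_0)$, and to control everything else by $o(\sqrt T)$. Concretely, I would establish a decomposition $A_T-K=\sqrt T\,\xi_T+\epsilon_T$ in which, for every $T>0$, $\xi_T$ is a centered Gaussian random variable with a $T$-independent variance $\tau^2$, and the remainder satisfies $\mathbb{E}|\epsilon_T|=O(T)$. Granting this, the elementary inequality $|x^+-y^+|\le|x-y|$ gives $\big|\mathbb{E}[(A_T-K)^+]-\mathbb{E}[(\sqrt T\,\xi_T)^+]\big|\le\mathbb{E}|\epsilon_T|=O(T)$, while $\mathbb{E}[(\sqrt T\,\xi_T)^+]=\sqrt T\,\mathbb{E}[\xi_T^+]=\sqrt T\cdot\tau/\sqrt{2\pi}$; dividing by $\sqrt T$ and using $e^{-rT}\to1$ yields $\lim_{T\to0}C(T)/\sqrt T=\tau/\sqrt{2\pi}$. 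The put case is identical because the limiting Gaussian is symmetric, so $\mathbb{E}[(K-A_T)^+]/\sqrt T\to\tau/\sqrt{2\pi}$ as well.

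To obtain the decomposition, set $F(S,V):=V\eta^2(S)$ so that $K=F(S_0,V_0)$ in the ATM regime, and apply It\^o's formula to $F(S_t,V_t)$. Using $d[S]_t=S_t^2\eta^2(S_t)V_t\,dt$, $d[V]_t=V_t^2\sigma^2(V_t)\,dt$, and $d[S,V]_t=\rho\,S_t\eta(S_t)V_t^{3/2}\sigma(V_t)\,dt$, one writes $F(S_t,V_t)-K=M_t+R_t$, with martingale part $M_t=\int_0^t\phi_u\,dW_u+\int_0^t\psi_u\,dZ_u$ where $\phi_u=\sqrt{1-\rho^2}\,F_S(S_u,V_u)S_u\eta(S_u)\sqrt{V_u}$ and $\psi_u=\rho\,F_S(S_u,V_u)S_u\eta(S_u)\sqrt{V_u}+F_V(S_u,V_u)V_u\sigma(V_u)$, and with $R_t$ collecting the drift and It\^o-correction terms. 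Freezing the diffusion coefficients at $t=0$ and using $F_S(S_0,V_0)=2V_0\eta(S_0)\eta'(S_0)$, $F_V(S_0,V_0)=\eta^2(S_0)$, set $c_W:=\phi_0=\sqrt{1-\rho^2}\cdot2V_0^{3/2}S_0\eta^2(S_0)\eta'(S_0)$ and $c_Z:=\psi_0=\rho\cdot2V_0^{3/2}S_0\eta^2(S_0)\eta'(S_0)+\eta^2(S_0)V_0\sigma(V_0)$, and put $\tilde A_T:=\frac1T\int_0^T(c_WW_s+c_ZZ_s)\,ds$, $\epsilon_T:=\frac1T\int_0^T(M_s-c_WW_s-c_ZZ_s)\,ds+\frac1T\int_0^T R_s\,ds$. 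By Brownian scaling, $\tilde A_T=\sqrt T\,\xi_T$ with $\xi_T:=c_W\int_0^1\widehat W_u\,du+c_Z\int_0^1\widehat Z_u\,du$ for the independent standard Brownian motions $\widehat W_u=W_{uT}/\sqrt T$, $\widehat Z_u=Z_{uT}/\sqrt T$, so $\xi_T$ is centered Gaussian of variance $\tau^2=\tfrac13(c_W^2+c_Z^2)$ since $\mathrm{Var}\big(\int_0^1\widehat W_u\,du\big)=\tfrac13$. A direct computation, factoring out $\eta^4(S_0)V_0^2$, gives $c_W^2+c_Z^2=\eta^4(S_0)V_0^2\big(4V_0(S_0\eta'(S_0))^2+\sigma^2(V_0)+4\rho S_0\eta'(S_0)\sqrt{V_0}\,\sigma(V_0)\big)$, whence $\tau/\sqrt{2\pi}$ equals exactly the constant in the statement.

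It remains to show $\mathbb{E}|\epsilon_T|=O(T)$. For the drift part, every term in the integrand of $R_s$ is a polynomial in $S_u,V_u$ times bounded functions ($\eta,\sigma,\mu$ by Assumption~\ref{assump:bounded}, and $(\eta^2)''$ by hypothesis), so using the assumed bound $\max_{t\le T}\mathbb{E}[S_t^4]\le C'$ together with the standard exponential moment bounds for $V_t=V_0\exp\!\big(\int_0^t(\mu(V_s)-\tfrac12\sigma^2(V_s))\,ds+\int_0^t\sigma(V_s)\,dZ_s\big)$ one bounds the expectation of this integrand by a constant uniformly for small $s$, giving $\mathbb{E}\big|\tfrac1T\int_0^TR_s\,ds\big|\le\sup_{s\le T}\mathbb{E}|R_s|=O(T)$. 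For the martingale mismatch, It\^o's isometry gives $\mathbb{E}|M_s-c_WW_s-c_ZZ_s|^2=\int_0^s\big(\mathbb{E}|\phi_u-\phi_0|^2+\mathbb{E}|\psi_u-\psi_0|^2\big)\,du$; since $(\eta^2)'=2\eta\eta'$ and $\eta,\sigma$ are Lipschitz (Assumption~\ref{assump:lip} and the boundedness of $(\eta^2)''$), $S_u\eta(S_u)$ has a finite fourth moment, and $\mathbb{E}|S_u-S_0|^2$, $\mathbb{E}|\sqrt{V_u}-\sqrt{V_0}|^2$, $\mathbb{E}|V_u^{3/2}-V_0^{3/2}|^2$ are all $O(u)$ (the last two via the log-representation of $V$ and its exponential moments), one gets $\mathbb{E}|\phi_u-\phi_0|^2+\mathbb{E}|\psi_u-\psi_0|^2=O(u)$, hence $\mathbb{E}|M_s-c_WW_s-c_ZZ_s|^2=O(s^2)$ and $\mathbb{E}\big|\tfrac1T\int_0^T(M_s-c_WW_s-c_ZZ_s)\,ds\big|\le\tfrac1T\int_0^T\big(\mathbb{E}|M_s-c_WW_s-c_ZZ_s|^2\big)^{1/2}\,ds=O(T)$. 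Combining these bounds yields $\mathbb{E}|\epsilon_T|=O(T)=o(\sqrt T)$, which closes the argument. The main obstacle is precisely this last step: propagating the Lipschitz/moment hypotheses through the product structure of $\phi$ and $\psi$ and, in particular, obtaining the $O(u)$ continuity estimates for $\sqrt{V_u}$ and $V_u^{3/2}$ near the (mild) square-root singularity — this is where $\inf_x\sigma(x)>0$, the boundedness of $\sigma$, and the fourth-moment control of $S_t$ are used.
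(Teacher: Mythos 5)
Your proposal is correct and reaches the stated limit, but it follows a genuinely different route from the paper, so a comparison is worth giving.

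\textbf{Where the approaches differ.} You apply It\^o's formula directly to $F(S,V)=V\eta^2(S)$, split $F(S_t,V_t)-K$ into a true martingale $M_t$ plus a drift remainder $R_t$, freeze the martingale integrands at $t=0$ to obtain the Gaussian proxy $c_W W_s + c_Z Z_s$, and control both error pieces by It\^o isometry and moment bounds. The paper instead imports a pathwise Gaussian approximation from \cite{PWZ2024} (Theorem 3.2 there), namely $\mathbb{E}|S_t-\hat S_t|^2=O(T^{3/2})$ and $\mathbb{E}|V_t-\hat V_t|=O(T)$ with $\hat S_t, \hat V_t$ the frozen-coefficient linear approximations of the state variables themselves, then Taylor-expands $\eta^2(\hat S_t)$ around $S_0$ in a second step. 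Both routes terminate at the identical Gaussian $c_W W_s + c_Z Z_s$ (your $\phi_0,\psi_0$ coincide algebraically with the coefficients obtained by the paper's linearization), and both use $\mathrm{Var}\bigl(\int_0^1 B_u\,du\bigr)=1/3$ and $\mathbb{E}[X^+]=\tau/\sqrt{2\pi}$ for $X\sim N(0,\tau^2)$. Your final algebraic identity $c_W^2+c_Z^2 = \eta^4(S_0)V_0^2\bigl(4V_0(S_0\eta'(S_0))^2+\sigma^2(V_0)+4\rho S_0\eta'(S_0)\sqrt{V_0}\sigma(V_0)\bigr)$ is correct. What your route buys is self-containment: you do not rely on the external estimates from \cite{PWZ2024}, and the one-step It\^o decomposition on $F$ itself is arguably cleaner than the paper's two-step (approximate the states, then linearize the payoff) scheme. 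What the paper's route buys is that the delicate moment-counting is already packaged in the cited theorem, and the paper states the remainder rate honestly as $O(T^{3/4})$ rather than $O(T)$.

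\textbf{One caution.} Your claim that $\mathbb{E}|\phi_u-\phi_0|^2+\mathbb{E}|\psi_u-\psi_0|^2=O(u)$ requires care under the stated assumptions. The integrands $\phi_u,\psi_u$ involve products of the form $(\eta^2)'(S_u)\eta(S_u)S_u V_u^{3/2}$, so the H\"older splittings you invoke push you toward moments of $S_u$ beyond the fourth (e.g.\ sixth or eighth), while the theorem only grants $\max_{t\le T}\mathbb{E}[S_t^4]\le C'$. As you observe, this is the "main obstacle"; the paper sidesteps it by bounding the payoff difference via the Lipschitz constant of $\eta$ and the second moment of $\hat S_s - S_s$, at the cost of a weaker $O(T^{3/4})$ remainder. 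But since the argument only needs $\mathbb{E}|\epsilon_T|=o(\sqrt T)$, any positive rate in the $u$-continuity of $\phi_u,\psi_u$ suffices, and this weaker requirement is comfortably within reach of the stated hypotheses. So the gap is only in the claimed $O(T)$ rate, not in the conclusion.
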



Note that the small-maturity asymptotics of the ATM option prices is of the order $O(\sqrt{T})$, 
in contrast with that of the OTM options which are exponentially suppressed in the order of $O(e^{-1/T})$.
This is similar to the behavior obtained for Asian options in \cite{PZAsian}.

\section{Applications and Numerical Tests}
\label{sec:appl}

We present in this section the application of our asymptotic results for the pricing of 
variance options. Following the same approach as that used for Asian options in 
\cite{PZAsian}, the prices of these options are represented in Black-Scholes form as
\begin{align}\label{CPV}
C(K,T) &= e^{-rT} [ F_V(T) N(d_1) - K N(d_2)]\,, \\
P(K,T) &= e^{-rT} [ - F_V(T) N(- d_1) + K N(-d_2)]\,, \nonumber
\end{align}
where $F_V(T)$ is the varswap fair strike defined in (\ref{FVT}), $N(\cdot)$ is the cumulative distribution function of a standard Gaussian random variable with mean $0$ and variance $1$ and $d_{1,2}
= \frac{1}{\Sigma_V \sqrt{T}} ( \log(F_V(T)/K) \pm \frac12 \Sigma_V^2 T)$, with
$\Sigma_V(K,T)$ being the \textit{implied volatility of a variance option}. This volatility
is defined such that the price of the variance option is equal to the Black-Scholes price given above in (\ref{CPV}).

The short-maturity limit in Theorem~\ref{thm:OTM} implies a prediction for the implied 
volatility of the variance options
\begin{equation}\label{SigVdef}
\lim_{T\to 0} \Sigma_V^2(K,T) := \Sigma_V^2(K) = 
\frac{\log^2(K/F_V(T))}{2 \mathcal{I}_\rho(S_0,V_0,K)} \,.
\end{equation}

Using the expansion of the rate function around the ATM point in Proposition~\ref{prop:first:order} 
yields an expansion of the asymptotic implied volatility $\Sigma_V(K)$ in powers of log-strike: 
\begin{equation}\label{SigVapp}
\Sigma_V(K) = \Sigma_{V,\mathrm{ATM}} + s_V x + O(x^2)\,,
\end{equation}
where 
\begin{equation}\label{SigVATM}
\Sigma_{V,\mathrm{ATM}} = \frac{1}{\sqrt3} \sqrt{\sigma_0^2 + 4\rho \sigma_0 \sqrt{V_0} \eta_1 + 4\eta_1^2 V_0}\,,
\end{equation}
and 
\begin{align}
s_V = \frac{\sigma_0^4 + 2\sigma_0^3\left(3\sigma_1 + 7\eta_1 \rho \sqrt{V_0}\right) + \beta_2 \sigma_0^2 + \beta_1 \sigma_0 + \beta_0}{10 \sqrt3 (\sigma_0^2 + 4\rho \sigma_0 \sqrt{V_0} \eta_1 + 4\eta_1^2 V_0)^{3/2}}\,,
\end{align}
where $\beta_{0,1,2}$ are given above in (\ref{beta0}) - (\ref{beta2}).

We will use the approximation (\ref{SigVapp}) for the numerical tests of the asymptotic expansion in the next section.


For the numerical tests, we will assume the local-stochastic volatility model with log-normal volatility
\begin{equation}
dS_t/S_t = \eta(S_t) \sqrt{V_t} dB_t \,,\qquad dV_t/V_t= \sigma dZ_t\,,
\end{equation}
where $(B_t, Z_t)$ are two standard Brownian motions correlated with correlation $\rho$, 
and the local volatility function is given by
\begin{equation}\label{loc:vol:eta:S} 
\eta(S) = f_0 + f_1 \tanh\left(\log (S/S_0) - x_0\right) \,.
\end{equation}

This is the so-called Tanh model which was used in Forde and Jacquier (2011) \cite{Forde2011}
 to test the predictions of their asymptotic results for the uncorrelated local-stochastic volatility model. This model was also used in \cite{PWZ2024} for numerical tests of the short maturity asymptotics for VIX options.

The local volatility function \eqref{loc:vol:eta:S} is expanded in powers of the log-asset $\log(S/S_0)$ as
\begin{equation}
\eta(S) = \eta_0 + \eta_1 \log\frac{S}{S_0} +  \eta_2 \log^2\frac{S}{S_0} + \cdots\,,
\end{equation}
with 
\begin{eqnarray}
\eta_0 = f_0 - f_1 \tanh x_0\,, \quad \eta_1 = \frac{f_1}{\cosh^2 x_0}\,,  \quad
\eta_2 = \frac{f_1}{\cosh^2 x_0} \tanh x_0 \,.
\end{eqnarray}



\subsection{The uncorrelated case}
\label{sec:rho0}

In this section we present a test for the solution of the variational problem \eqref{I:rate:function} for the rate function which does not rely on an expansion in log-moneyness. For the uncorrelated case $\rho=0$ an exact solution for the variational problem was given in Proposition~\ref{prop:variational}. We present a test of the accuracy of the rate function obtained by expanding in log-moneyness in Proposition~\ref{prop:rho} by comparing with the exact solution following from Proposition~\ref{prop:variational}.

For this test, we take the constant vol-of-vol $\sigma(v) \equiv \sigma$ and $\mu(v)\equiv 0$.
The rate function $\mathcal{J}(V_0, K/\eta_0^2)$ corresponding to $\sigma(v)\equiv\sigma$ is the same as the rate function for Asian options in the Black-Scholes model and is known exactly from Proposition~12 in \cite{PZAsian}.
The result of Proposition~\ref{prop:variational} can be reformulated in a simpler form as follows.

i) For the OTM variance call options, we have
\begin{equation}
\mathcal{I}_0(S_0,V_0,K) = \inf_y \left( \frac{1}{K} F_c(y) +
\frac{1}{\sigma^2} J_{\mathrm{BS}}(y K/V_0) \right)\,,
\end{equation}
where $J_{\mathrm{BS}}(k)$ is defined in \eqref{JBS:def} which is the rate function for Asian options in the Black-Scholes model and is given in closed form in Proposition~12 of \cite{PZAsian}, and
\begin{equation}
F_c(y) := \frac12 \left(I_1^c(g_c(y))\right)^2 
\left[ \left(\eta(g_c(y))\right)^2 y -1 \right]\,, 
\end{equation}
where $g_c(y)$ is the solution of the equation
$\frac{I_2^c(g_c)}{I_1^c(g_c)} = y$
and
\begin{align}
I_1^c(g_c) := \int_{S_0}^{g_c} \frac{\eta(x) dx}{x \sqrt{\eta^2(g_c) - \eta^2(x)} }\,,\quad
I_2^c(g_c) := \int_{S_0}^{g_c} \frac{dx}{x\eta(x) \sqrt{\eta^2(g_c) - \eta^2(x)} }\,.
\end{align}

ii) For the OTM variance put options, we have
\begin{equation}
\mathcal{I}_0(S_0,V_0,K) = \inf_y \left( \frac{1}{K} F_p(y) +
\frac{1}{\sigma^2} J_{\mathrm{BS}}(y K/V_0) \right)\,,
\end{equation}
with
\begin{equation}
F_p(y) := \frac12 \left(I_1^c(g_p(y))\right)^2 
\left[ 1 - \left(\eta(g_p(y))\right)^2 y  \right]\,, 
\end{equation}
where $g_p(y)$ is the solution of the equation
\begin{equation*}
\frac{I_2^p(g_p)}{I_1^p(g_p)} = y,
\end{equation*}
where
\begin{align}
I_1^p(g_p) := \int_{S_0}^{g_p} \frac{\eta(x) dx}{x \sqrt{\eta^2(x) - \eta^2(g_p)} }\,,\quad
I_2^p(g_p) := \int_{S_0}^{g_p} \frac{dx}{x\eta(x) \sqrt{\eta^2(x) - \eta^2(g_p)} }\,.
\end{align}

This formulation follows from Proposition~\ref{prop:variational} by the change of variable $y = z/K$.

The local volatility function is chosen as the Tanh model \eqref{loc:vol:eta:S}
with model parameters
\begin{equation}\label{params1}
V_0 = 1.0, \quad\sigma=2.0,\quad f_0=1,\quad f_1=-0.5,\quad x_0=0\,.
\end{equation}
The choice of $V_0$ is chosen such that the numerical value of the parameter is sufficiently large $V_0 T \sim O(1)$ for maturities $T\sim 1$ year. 
This ensures that the testing is distinctive from that of pricing Asian options in the local volatility model.
If this parameter is small $V_0 T \ll 1$ then the model predictions are effectively the same as those for Asian options in the local volatility model with local volatility $\eta(x)$. This case was covered by \cite{PZAsian}.

The left plot in Figure~\ref{Fig:Rho0} compares the exact rate function $\mathcal{I}_0(S_0,V_0,K)$ (solid black) with the approximation of Proposition \ref{prop:rho} by expansion to $O(x^3)$ (dashed blue). 
The exact result for the rate function $\mathcal{I}_0(S_0,V_0,K)$ is used to compute the asymptotic implied volatility of the variance options $\Sigma_V(K)$ which is shown in the right plot of Figure \ref{Fig:Rho0}
as a function of log-moneyness $x=\log(K/(\eta_0^2 V_0))$ (solid black curve). This is compared with the linear approximation by an expansion in log-moneyness to $O(x^3)$ from Proposition~\ref{prop:rho}, see \eqref{SigVlin} (dotted black).

From this plot we conclude that the approximation by expansion in $x$ is reasonably good for sufficiently small log-moneyness, for options close to the ATM point. The approximation error increases away from this point.

\begin{figure}[htbp!]
\centering
\includegraphics[scale=0.55]{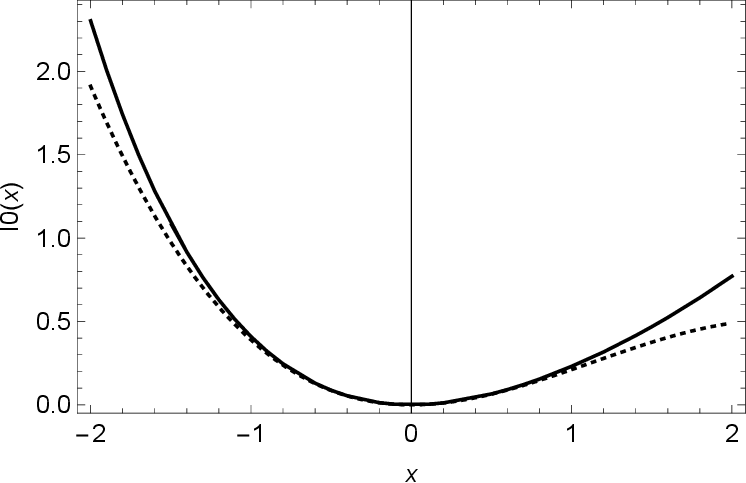}
\includegraphics[scale=0.55]{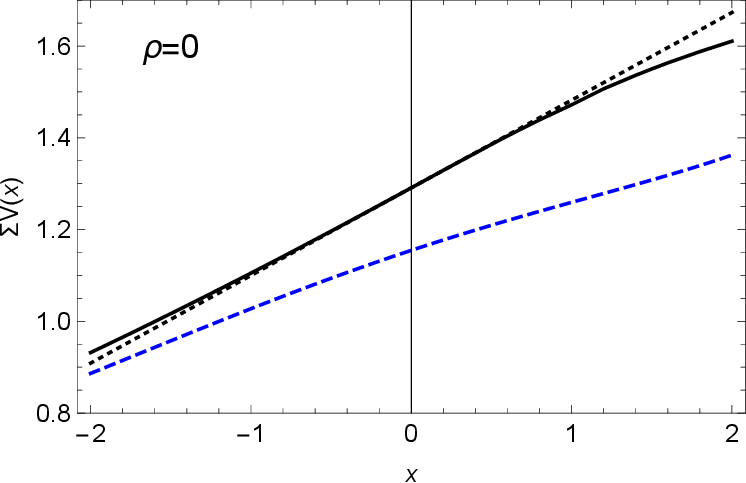}
\caption{Left: the exact rate function $\mathcal{I}_0(S_0,V_0,K)$ vs log-moneyness $x=\log(K/(\eta_0^2 V_0)$ from Proposition \ref{prop:variational} (solid black), comparing with the expansion to $O(x^3)$ from Proposition \ref{prop:rho} (dotted black).
Right: Asymptotic implied volatility $\Sigma_V(K)$ vs $x$, for the 
uncorrelated case $\rho=0$. 
The exact result from Proposition \ref{prop:variational} 
is shown as the solid black curve, and the linear approximation obtained by expanding in log-moneyness to $O(x)$ is the dotted black line. 
The dashed blue curve is the lower bound \eqref{bound2}. Model parameters are as in \eqref{params1}. }
\label{Fig:Rho0}
\end{figure}

\subsection{Bounds}

We discuss further in this section the two bounds on the rate function $\mathcal{I}_\rho(S_0,V_0,K)$ proved in Section~\ref{sec:main:results}.
Proposition~\ref{prop:bounds} gives upper and lower bounds for the correlated case in terms of the rate function for $\rho=0$. 
Proposition~\ref{prop:bounds:3} gives an upper bound in terms of the rate function $\mathcal{J}$ for Asian options in the local volatility model.
We present in this section several tests for the tightness of these bounds.

\textbf{Bound 1.}
Proposition~\ref{prop:bounds} gives lower and upper bounds on $I_\rho(K)$ in terms of $I_0(K)$. They can be translated into bounds on the implied volatility $\Sigma_V(K)$:
\begin{equation}\label{bound1}
\sqrt{1- |\rho| } \Sigma_V(K;\rho=0) \leq \Sigma_V(K) \leq \sqrt{1 + |\rho| } \Sigma_V(K;\rho=0)\,.
\end{equation}

Figure~\ref{Fig:b} shows these bounds for $\rho = \pm 0.5$ (left) and $\rho=\pm 0.7$ (right) as the solid blue and red curves, for the parameters given in \eqref{params1}. The bounds depend only on $|\rho|$ so they are identical for positive and negative correlation.
The dotted black lines show the linear approximations for $\Sigma_V(K)$ following from Proposition~\ref{prop:first:order}. The bounds are well satisfied for a range of strikes around the ATM point, although the upper bound is violated by the linear approximation at the upper range of the strikes considered.

\textbf{Bound 2.} Proposition~\ref{prop:bounds:3} gives an upper bound on $I_\rho(K)$ in terms of $\mathcal{J}(V_0,K/\eta_0^2)$, the rate function for Asian options in the local volatility model with volatility function $\sigma(v)$. This translates into a lower bound on the asymptotic implied volatility of variance options $\Sigma_V(K)$.

For constant vol-of-vol $\sigma(v) \equiv \sigma$, this bound reads
\begin{equation}\label{bound2}
\Sigma_V(K) \geq \sigma\sqrt{1-\rho^2} \Sigma_{\mathrm{LN}}^{(\mathrm{BS})}\left(K/(\eta_0^2 V_0)\right)\,,
\end{equation}
where $\Sigma_{\mathrm{LN}}^{(\mathrm{BS})}(k) = \frac{|\log k|}{\sqrt{2J_{\mathrm{BS}}(k)}}$ is the asymptotic
implied log-normal volatility for Asian options in the Black-Scholes model, expressed in terms of $J_{\mathrm{BS}}(k)$ introduced in \eqref{JBS:def}. 
For $k$ sufficiently close to 1 (near-ATM Asian options), $\Sigma_{\mathrm{LN}}^{(\mathrm{BS})}(k)$ is well approximated by the series expansion (see equation (55) in \cite{PZAsian})
\begin{equation}
\Sigma^{\mathrm{(BS)}}_{LN}(k) = \frac{1}{\sqrt3} \Big( 1 + \frac{1}{10} \log k - \frac{23}{2100} \log^2 k + O(\log^3 k)
\Big)\,.
\end{equation}

The dashed blue curve in 
Figure~\ref{Fig:Rho0} (right panel) shows the lower bound \eqref{bound2} for $\rho=0$, comparing with the exact asymptotic implied volatility $\Sigma_V(K)$ (solid black curve). 
The model parameters are as in \eqref{params1}.
The bound is indeed satisfied in the range of log-moneyness covered by the plot. 

The lower bound \eqref{bound2} is shown also in Figure~\ref{Fig:b} as the dashed blue curve, for several choices of the correlation $\rho = \pm 0.5$ (left) and $\rho = \pm 0.7$ (right). The bound is independent of the sign of $\rho$; thus we show on the same plot both cases with either sign for $\rho$. 
This bound is violated by the linear approximation for $\rho=-0.7$ for positive log-moneyness larger than about $0.5$. The bound is satisfied for all strikes by the linear approximation for $\rho = -0.5$. At the ATM point $K_{\mathrm{ATM}}=\eta_0^2 V_0$ one can see that the bound \eqref{bound2} is automatically satisfied by the asymptotic prediction for the ATM volatility \eqref{SigVATM}. Indeed, this bound reads
\begin{equation}
\Sigma_V(K_{\mathrm{ATM}}) = \frac{1}{\sqrt3}\sigma 
\sqrt{\sigma^2 + 4\eta_1 \rho \sigma \sqrt{V_0} + 4 \eta_1^2 V_0} \geq \frac{1}{\sqrt3} \sigma (1-\rho^2)\,,
\end{equation}
which is equivalent to $\rho^2 \sigma^2 + 4\eta_1 \rho \sigma \sqrt{V_0} + 4\eta_1^2 V_0 = (\rho \sigma + 2\eta_1 \sqrt{V_0})^2 \geq 0$, which holds indeed. 
By continuity, this implies that the lower bound \eqref{bound2} is always satisfied in a non-vanishing region around the ATM point.

\begin{figure}[htbp!]
\centering
\includegraphics[scale=0.5]{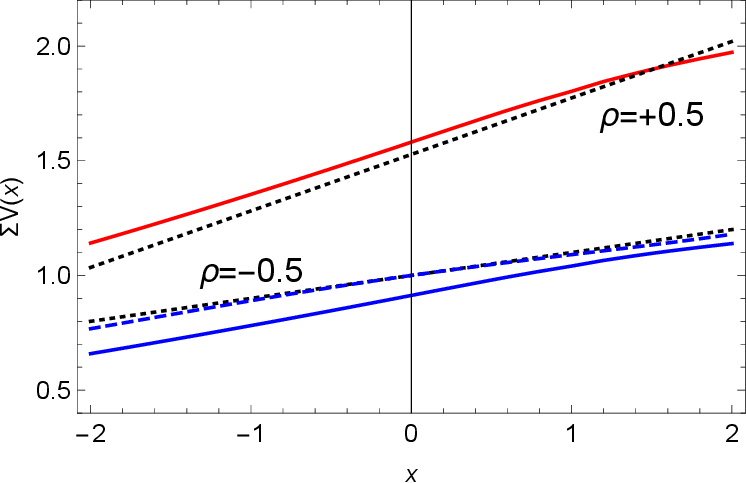}
\includegraphics[scale=0.5]{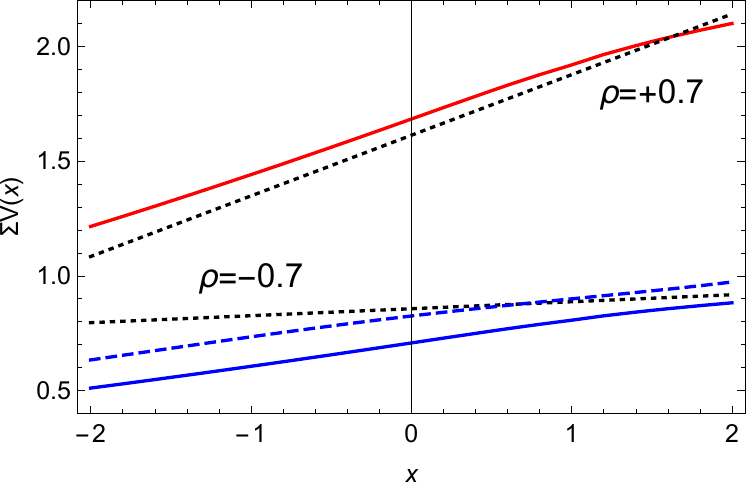}
\caption{Lower and upper bounds on the implied volatility $\Sigma_V(K)$ from Proposition \ref{prop:bounds} (bound 1) 
(solid blue and red curves), and lower bound from Proposition \ref{prop:bounds:3} (bound 2)
(dashed blue). The dotted black lines show the linear approximation for 
$\Sigma_V(K)$. Left: $\rho = \pm 0.5$, right: $\rho=\pm 0.7$.}
\label{Fig:b}
\end{figure}


\subsection{Monte Carlo tests}

The result of Proposition~\ref{prop:first:order} can be reformulated as a prediction for the ATM implied volatility $\Sigma_{V,\mathrm{ATM}}$
and ATM skew $s_V$ of the variance options.
They read explicitly
\begin{equation}\label{sigV}
\Sigma_{V,\mathrm{ATM}} = \frac{1}{\sqrt3} \sqrt{\sigma^2 + 4\eta_1 \sigma \rho \sqrt{V_0} + 4\eta_1^2 V_0}\,,
\end{equation}
and 
\begin{align}\label{sV}
s_V &= \frac{1}{10 \sqrt3 \left(\sigma^2 + 4\eta_1 \sigma \rho \sqrt{V_0} + 4\eta_1^2 V_0\right)^{3/2}} \\
&\qquad\qquad\cdot \Big( \sigma^4 + 14\sigma^3 \eta_1 \rho \sqrt{V_0} 
+ 4 \sigma^2 V_0\left( 6\eta_0\eta_2 \rho^2 + \eta_1^2 \left(5 + 7\rho^2\right)\right)  \nonumber \\
& \qquad\qquad\qquad\qquad + 8 \sigma \eta_1 \rho V_0^{3/2} \left(7 \eta_1^2 + 12 \eta_0 \eta_2\right) 
+ 16 \eta_1^2 V_0^2 \left(\eta_1^2 + 6\eta_0 \eta_2\right) \Big)\,.\nonumber
\end{align}
The result \eqref{sigV} coincides with the prediction of Theorem~\ref{thm:ATM}
for the short-maturity asymptotics of ATM variance options. 

Using these predictions, we construct the following linear approximation for the asymptotic
implied volatility of the variance options using the ATM level and skew given in \eqref{sigV} and \eqref{sV}:
\begin{equation}\label{SigVlin}
\Sigma_V^{\rm lin}(K) := \Sigma_{V,\mathrm{ATM}} + s_V x\,.
\end{equation}
This is expected to give a good approximation for $\Sigma_V(K)$ for strikes sufficiently close to the ATM point.
We will test this approximation for the implied volatility by comparing it with the numerical pricing of the variance options by Monte Carlo simulation.

We will use for this numerical test the Tanh local volatility model \eqref{loc:vol:eta:S}
with model parameters $f_0=1,f_1=-0.1, x_0=0$.
For the $V_t$ process, we assume $\sigma = 2.0, V_0 = 0.1$.
The spot asset price is $S_0=1$. The timeline is discretized by $N_T=2000$ steps, and we use $N_{\mathrm{MC}}=10^5$ MC paths for the simulation.

\begin{table}[h!]
  \centering
  \caption{The ATM volatility level, skew and convexity 
  for the short-maturity asymptotics of the variance options in the Tanh
  model. The last column shows the forward $F_V(T)$ for $T=1/12$ computed by MC simulation.}
    \begin{tabular}{|c|cc|c|}
    \hline
$\rho$ & $\sigma_{V,\mathrm{ATM}}$ & $s_V$  & $F_V(T)$ \\
               \hline\hline
$-0.7$ & 1.1806 & 0.1257 & $0.1004 \pm 0.0001$ \\
0     & 1.1553 & 0.1553 & $1.0006 \pm 0.0001$ \\
$+0.7$ & 1.1294 & 0.1053 & $0.0997 \pm 0.0001$ \\
    \hline
    \end{tabular}%
  \label{tab:params}%
\end{table}%

\begin{figure}[h]
\centering
\includegraphics[width=1.9in]{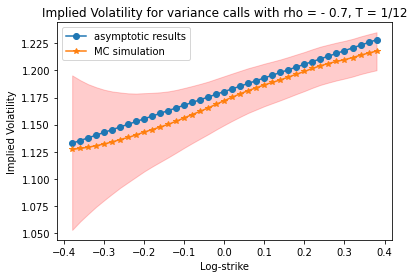}
\includegraphics[width=1.9in]{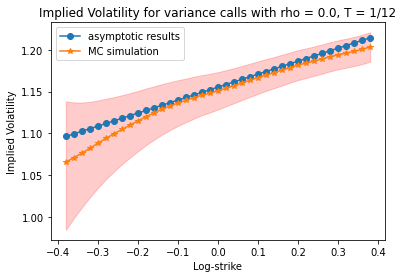}
\includegraphics[width=1.9in]{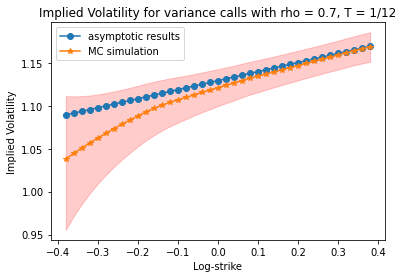}
\caption{The implied volatility of options on variance with maturity $T=1/12$ (1 month) in the Tanh LSV model, for three values of the correlation $\rho \in \{-0.7, 0, 0.7\}$. The orange points denote the MC simulation and the blue dots show the asymptotic prediction (\ref{SigVlin}).}
\label{Fig:Tanh}
\end{figure}

The MC simulation results for the implied volatility of variance options are shown in Figure~\ref{Fig:Tanh} for several choices of $\rho$ for variance options with maturity $T=1/12$ (orange dots). The bands show the MC errors, defined as $\frac{1}{\sqrt{N_{\mathrm{MC}}}} \mathrm{sd}(\mathrm{pay})$ in terms of the standard deviation of the option payoff for the MC sample.

The asymptotic prediction for the ATM implied vol from the linear approximation (\ref{SigVlin}) is shown as the blue dots in Figure~\ref{Fig:Tanh}. 
The numerical values of the ATM implied vol and the ATM skew from (\ref{sigV}) and (\ref{sV}) for each test case are shown in Table~\ref{tab:params}.
The asymptotic prediction agrees well with the MC simulation for a range of strikes around the ATM point, although it slightly overestimates the simulation result at the ATM point for $\rho = \pm 0.7$.

In order to investigate the size of the subleading corrections of $O(T)$, we show in Figure~\ref{Fig:Tanh1Day} the same results for variance options with maturity $T=1/252$ (1 business day), and the same model parameters as in Figure~\ref{Fig:Tanh}. These plots we compare the MC simulation with the asymptotic prediction in a more narrow range of log-moneyness $x\in [-0.1,+0.1]$. For this case the difference between the asymptotic prediction and the simulation results is much smaller. We conclude that the difference between the asymptotic result and the MC simulation in Figure~\ref{Fig:Tanh} can be attributed to subleading $O(T)$ corrections to the asymptotic result.

\begin{figure}[h]
\centering
\includegraphics[width=1.9in]{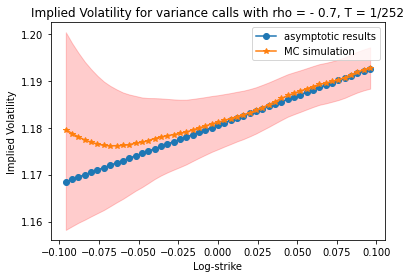}
\includegraphics[width=1.9in]{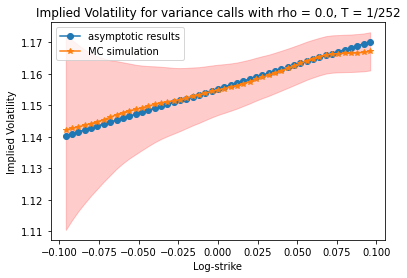}
\includegraphics[width=1.9in]{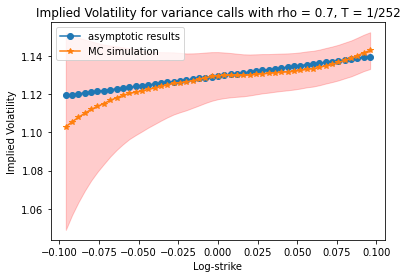}
\caption{Same as Figure \ref{Fig:Tanh} but the maturity of the variance options is  $T=1/252$ (1 day). }
\label{Fig:Tanh1Day}
\end{figure}

\section*{Acknowledgements}
The authors thank the Associate Editor 
and two anonymous referees for helpful comments and suggestions.
Xiaoyu Wang is supported by the Guangzhou-HKUST(GZ) Joint Funding Program (No.2024A03J0630), 
Guangzhou Municipal Key Laboratory of Financial Technology Cutting-Edge Research.
Lingjiong Zhu is partially supported by grants NSF DMS-2053454 and NSF DMS-2208303.

\bibliographystyle{plain}
\bibliography{ShortMaturityVariance}

\appendix


\section{Background on Large Deviations Theory}\label{sec:LDP}

We give in this Appendix a few basic concepts of large deviations theory from probability theory
which are in the proofs.
We refer to Dembo and Zeitouni \cite{Dembo1998} and Varadhan \cite{VaradhanLD} for more details on large deviations theory and its applications.

\begin{definition}[Large Deviation Principle]
A sequence $(P_\epsilon)_{\epsilon \in \mathbb{R}^+}$ of probability measures
on a topological space $X$ satisfies the large deviation principle with rate function $I: X \to \mathbb{R}$
if $I$ is non-negative, lower semicontinuous and for any measurable set $A$, we have
\begin{equation}
- \inf_{x\in A^o} I(x) \leq \liminf_{\epsilon\to 0} \epsilon \log P_\epsilon(A) \leq
\limsup_{\epsilon\to 0} \epsilon \log P_\epsilon(A) \leq - \inf_{x\in \bar A} I(x) \,,
\end{equation}
where $A^o$ denotes the interior of $A$ and $\bar A$ its closure.
\end{definition}

\begin{theorem}[Contraction Principle, see e.g. Theorem 4.2.1. in \cite{Dembo1998}]\label{Contraction:Thm}
If $F:X\rightarrow Y$ is a continuous map and 
$P_{\epsilon}$ satisfies a large deviation principle on $X$ with the rate 
function $I(x)$,
then the probability measures $Q_{\epsilon}:=P_{\epsilon}F^{-1}$ satisfies
a large deviation principle on $Y$ with the rate function
$J(y)=\inf_{x: F(x)=y}I(x)$.
\end{theorem}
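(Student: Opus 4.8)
The plan is to derive the large deviation principle for $Q_\epsilon = P_\epsilon F^{-1}$ directly from the large deviation principle for $P_\epsilon$ on $X$ together with the continuity of $F$, and to verify separately that $J(y) = \inf_{x:F(x)=y} I(x)$ is a legitimate rate function. As is standard for this statement, I will take $I$ to be a \emph{good} rate function, i.e. its sublevel sets $\Psi_\alpha := \{x \in X : I(x) \le \alpha\}$ are compact for every $\alpha \ge 0$, and I will work in Hausdorff spaces so that compact sets are closed; this goodness hypothesis is precisely what guarantees that $J$ is lower semicontinuous.

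First I would analyze the sublevel sets of $J$. The key claim is that $\{y \in Y : J(y) \le \alpha\} = F(\Psi_\alpha)$ for every $\alpha \ge 0$. The inclusion $\supseteq$ is immediate from the definition of $J$. For $\subseteq$, given $y$ with $J(y) \le \alpha$, choose $x_n$ with $F(x_n) = y$ and $I(x_n) \to J(y)$; then $x_n \in \Psi_{\alpha+1}$ for all large $n$, so by compactness of $\Psi_{\alpha+1}$ there is a subsequence $x_{n_k} \to x$, and lower semicontinuity of $I$ together with continuity of $F$ give $I(x) \le \liminf_k I(x_{n_k}) = J(y) \le \alpha$ and $F(x) = y$, hence $y \in F(\Psi_\alpha)$. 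Therefore each sublevel set of $J$ is the continuous image of a compact set, hence compact, hence closed; this shows $J$ is non-negative and lower semicontinuous, in fact a good rate function.

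Next I would prove the upper bound. For a closed set $C \subseteq Y$, continuity of $F$ makes $F^{-1}(C)$ closed in $X$ and $Q_\epsilon(C) = P_\epsilon(F^{-1}(C))$, so the upper bound in the LDP for $P_\epsilon$ gives $\limsup_{\epsilon \to 0} \epsilon \log Q_\epsilon(C) \le -\inf_{x \in F^{-1}(C)} I(x)$, and rewriting $\inf_{x \in F^{-1}(C)} I(x) = \inf_{y \in C}\inf_{x:F(x)=y} I(x) = \inf_{y \in C} J(y)$ yields the claim. The lower bound is obtained identically: for an open set $U \subseteq Y$ the preimage $F^{-1}(U)$ is open in $X$, and the lower bound in the LDP for $P_\epsilon$ gives $\liminf_{\epsilon \to 0} \epsilon \log Q_\epsilon(U) = \liminf_{\epsilon \to 0} \epsilon \log P_\epsilon(F^{-1}(U)) \ge -\inf_{x \in F^{-1}(U)} I(x) = -\inf_{y \in U} J(y)$. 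Combining the two bounds with the rate-function property established above shows that $Q_\epsilon$ satisfies the LDP on $Y$ with rate function $J$.

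I expect the main obstacle — indeed the only step that is not pure bookkeeping — to be the compactness argument in the second paragraph showing that $J$ inherits good-rate-function status from $I$: this is where the goodness of $I$ is essential, since without compact sublevel sets $J$ need not even be lower semicontinuous. The two LDP bounds themselves are then formal consequences of the continuity of $F$ and the LDP for $P_\epsilon$.
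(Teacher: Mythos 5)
Your proof is correct and is the standard argument for the contraction principle; the paper does not actually prove this statement but cites Theorem~4.2.1 of Dembo and Zeitouni, whose proof proceeds exactly as yours does (goodness of $I$ gives $\{y: J(y)\le\alpha\}=F(\{x: I(x)\le\alpha\})$, hence $J$ is a good rate function, and the two LDP bounds follow by taking preimages of closed and open sets). The only cosmetic point is that in a general Hausdorff topological space you should replace the extraction of a convergent subsequence from $\Psi_{\alpha+1}$ by the observation that the lower semicontinuous function $I$ attains its infimum on the nonempty compact set $F^{-1}(\{y\})\cap\Psi_{\alpha+1}$, since compact sets need not be sequentially compact; in the metric path spaces where the paper applies the theorem, your sequential argument is perfectly fine.
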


\section{Technical Proofs}\label{sec:proofs}

\begin{proof}[Proof of Theorem~\ref{thm:OTM}]
Let us consider OTM case for call options, that is, $K>V_{0}\eta^{2}(S_{0})$.
The case for the put options is similar and the proof is omitted here.
First, it is easy to see that
\begin{equation}
\lim_{T\rightarrow 0}T\log C(T)
=\lim_{T\rightarrow 0}T\log\mathbb{E}\left[\left(\frac{1}{T}\int_{0}^{T}V_{s}\eta^{2}(S_{s})ds-K\right)^{+}\right],
\end{equation}
if the limit exists.
Next, we will show that
\begin{equation}\label{eqn:equality}
\lim_{T\rightarrow 0}T\log\mathbb{E}\left[\left(\frac{1}{T}\int_{0}^{T}V_{s}\eta^{2}(S_{s})ds-K\right)^{+}\right]
=\lim_{T\rightarrow 0}T\log\mathbb{Q}\left(\frac{1}{T}\int_{0}^{T}V_{s}\eta^{2}(S_{s})ds\geq K\right),
\end{equation}
if the limit exists.
The equality in \eqref{eqn:equality} can be established by considering the upper bound, i.e. $\limsup$ 
on the left hand side of \eqref{eqn:equality} and the lower bound, i.e. $\liminf$ on the left hand side
of \eqref{eqn:equality}. The argument for the lower bound is standard and is omitted here
The argument for the upper bound can be established via the following estimates.
For any $p,q>1$ with $\frac{1}{p}+\frac{1}{q}=1$, by H\"{o}lder's inequality, we have
\begin{align}
&\mathbb{E}\left[\left(\frac{1}{T}\int_{0}^{T}V_{s}\eta^{2}(S_{s})ds-K\right)^{+}\right]
\nonumber
\\
&=\mathbb{E}\left[\left(\frac{1}{T}\int_{0}^{T}V_{s}\eta^{2}(S_{s})ds-K\right)1_{\frac{1}{T}\int_{0}^{T}V_{s}\eta^{2}(S_{s})ds\geq K}\right]
\nonumber
\\
&\leq\left(\mathbb{E}\left[\left|\frac{1}{T}\int_{0}^{T}V_{s}\eta^{2}(S_{s})ds-K\right|^{p}\right]\right)^{\frac{1}{p}}\left(\mathbb{E}\left[\left(1_{\frac{1}{T}\int_{0}^{T}V_{s}\eta^{2}(S_{s})ds\geq K}\right)^{q}\right]\right)^{\frac{1}{q}}
\nonumber
\\
&=\left(\mathbb{E}\left[\left|\frac{1}{T}\int_{0}^{T}V_{s}\eta^{2}(S_{s})ds-K\right|^{p}\right]\right)^{\frac{1}{p}}\left(\mathbb{Q}\left(\frac{1}{T}\int_{0}^{T}V_{s}\eta^{2}(S_{s})ds\geq K\right)\right)^{\frac{1}{q}}.\label{ineq:holder}
\end{align}
For any $p>1$, we can show that
\begin{align}
\mathbb{E}\left[\left|\frac{1}{T}\int_{0}^{T}V_{s}\eta^{2}(S_{s})ds-K\right|^{p}\right]
&\leq
\mathbb{E}\left[\left(\frac{1}{T}\int_{0}^{T}V_{s}\eta^{2}(S_{s})ds+K\right)^{p}\right]
\nonumber
\\
&\leq
2^{p-1}\mathbb{E}\left[\left(\frac{1}{T}\int_{0}^{T}V_{s}\eta^{2}(S_{s})ds\right)^{p}+K^{p}\right]
\nonumber
\\
&\leq
2^{p-1}\left(\frac{1}{T}\int_{0}^{T}\mathbb{E}\left[V_{s}^{p}\eta^{2p}(S_{s})\right]ds+K^{p}\right),
\end{align}
where we used Jensen's inequality.

Furthermore, we can compute that
\begin{align}
\frac{1}{T}\int_{0}^{T}\mathbb{E}\left[V_{s}^{p}\eta^{2p}(S_{s})\right]ds
\leq
\frac{M_{\eta}^{2p}}{T}\int_{0}^{T}\mathbb{E}\left[V_{s}^{p}\right]ds,
\end{align}
and for any $0\leq s\leq T$,
\begin{align}
\mathbb{E}\left[V_{s}^{p}\right]
&=V_{0}^{p}\mathbb{E}\left[e^{\int_{0}^{s}\left(p\mu(V_{u})-\frac{p}{2}\sigma^{2}(V_{u})\right)du+\int_{0}^{s}p\sigma(V_{u})dZ_{u}}\right]
\nonumber
\\
&\leq
V_{0}^{p}e^{pM_{\mu}s+\frac{p^{2}}{2}M_{\sigma}^{2}s}\mathbb{E}\left[e^{\int_{0}^{s}\frac{-p^{2}}{2}\sigma^{2}(V_{u})du+\int_{0}^{s}p\sigma(V_{u})dZ_{u}}\right]
\nonumber
\\
&=V_{0}^{p}e^{pM_{\mu}s+\frac{p^{2}}{2}M_{\sigma}^{2}s}.\label{V:p:bound}
\end{align}
Therefore, we conclude that for any $p>1$,
\begin{align}
\mathbb{E}\left[\left|\frac{1}{T}\int_{0}^{T}V_{s}\eta^{2}(S_{s})ds-K\right|^{p}\right]
\leq
2^{p-1}\left(M_{\eta}^{2p}V_{0}^{p}e^{pM_{\mu}T+\frac{p^{2}}{2}M_{\sigma}^{2}T}+K^{p}\right).\label{p:finiteness}
\end{align}
Hence, by \eqref{ineq:holder} and \eqref{p:finiteness}, we showed that
\begin{equation}\label{q:eqn}
\limsup_{T\rightarrow 0}T\log\mathbb{E}\left[\left(\frac{1}{T}\int_{0}^{T}V_{s}\eta^{2}(S_{s})ds-K\right)^{+}\right]
\leq\limsup_{T\rightarrow 0}\frac{T}{q}\log\mathbb{Q}\left(\frac{1}{T}\int_{0}^{T}V_{s}\eta^{2}(S_{s})ds\geq K\right).
\end{equation}
Since \eqref{q:eqn} holds
for any $q>1$, we conclude that
\begin{equation}
\limsup_{T\rightarrow 0}T\log\mathbb{E}\left[\left(\frac{1}{T}\int_{0}^{T}V_{s}\eta^{2}(S_{s})ds-K\right)^{+}\right]
\leq\limsup_{T\rightarrow 0}T\log\mathbb{Q}\left(\frac{1}{T}\int_{0}^{T}V_{s}\eta^{2}(S_{s})ds\geq K\right).
\end{equation}
Therefore, we established the upper bound for \eqref{eqn:equality}.
Next, let us show that the limit on the right hand side of \eqref{eqn:equality} exists, 
which can be established through large deviations theory.

Under Assumptions~\ref{assump:bounded} and \ref{assump:LDP}, by the sample-path large deviations for 
small time diffusions (see for example \cite{Varadhan} and \cite{Robertson2010}),
one can see that $\mathbb{Q}(\{(\log S_{tT},\log V_{tT}),0\leq t\leq 1\}\in\cdot)$
satisfies a sample-path large deviation principle with the rate function:
\begin{equation}\label{rate:function:LDP}
\frac{1}{2(1-\rho^{2})}\int_{0}^{1}\left(\frac{g'(t)}{\eta(e^{g(t)})\sqrt{e^{h(t)}}}-\frac{\rho h'(t)}{\sigma(e^{h(t)})}\right)^{2}dt
+\frac{1}{2}\int_{0}^{1}\left(\frac{h'(t)}{\sigma(e^{h(t)})}\right)^{2}dt,    
\end{equation}
with $g(0)=\log S_{0}$, $h(0)=\log V_{0}$ and $g,h$ being absolutely continuous
and the rate function is $+\infty$ otherwise.

By an application of the contraction principle (see for example Theorem~4.2.1. in \cite{Dembo1998}, restated in Theorem~\ref{Contraction:Thm}),
\begin{align}
&\lim_{T\rightarrow 0}T\log\mathbb{Q}\left(\frac{1}{T}\int_{0}^{T}V_{s}\eta^{2}(S_{s})ds\geq K\right)
\nonumber
\\
&=\lim_{T\rightarrow 0}T\log\mathbb{Q}\left(\int_{0}^{1}V_{tT}\eta^{2}(S_{tT})dt\geq K\right)
\nonumber
\\
&=-\inf_{\substack{g(0)=\log S_{0},h(0)=\log V_{0}\\
\int_{0}^{1}e^{h(t)}\eta^{2}(e^{g(t)})dt=K}}
\Bigg\{\frac{1}{2(1-\rho^{2})}\int_{0}^{1}\left(\frac{g'(t)}{\eta(e^{g(t)})\sqrt{e^{h(t)}}}-\frac{\rho h'(t)}{\sigma(e^{h(t)})}\right)^{2}dt
\nonumber
\\
&\qquad\qquad\qquad\qquad\qquad\qquad\qquad\qquad
+\frac{1}{2}\int_{0}^{1}\left(\frac{h'(t)}{\sigma(e^{h(t)})}\right)^{2}dt\Bigg\}.
\end{align}
Hence, we conclude that
\begin{align}
&\lim_{T\rightarrow 0}T\log\mathbb{E}\left[\left(\frac{1}{T}\int_{0}^{T}V_{s}\eta^{2}(S_{s})ds-K\right)^{+}\right]
\nonumber
\\
&=-\inf_{\substack{g(0)=\log S_{0},h(0)=\log V_{0}\\
\int_{0}^{1}e^{h(t)}\eta^{2}(e^{g(t)})dt=K}}
\Bigg\{\frac{1}{2(1-\rho^{2})}\int_{0}^{1}\left(\frac{g'(t)}{\eta(e^{g(t)})\sqrt{e^{h(t)}}}-\frac{\rho h'(t)}{\sigma(e^{h(t)})}\right)^{2}dt
\nonumber
\\
&\qquad\qquad\qquad\qquad\qquad\qquad\qquad\qquad
+\frac{1}{2}\int_{0}^{1}\left(\frac{h'(t)}{\sigma(e^{h(t)})}\right)^{2}dt\Bigg\}.
\end{align}
This completes the proof.
\end{proof}


\begin{proof}[Proof of Proposition~\ref{prop:variational}]
(i) Let us first consider the OTM variance call options, i.e. $V_{0}\eta^{2}(S_{0})<K$.
When $\rho=0$, one can compute that
\begin{align}
&\mathcal{I}(S_{0},V_{0},K)
\nonumber
\\
&=\inf_{\substack{g(0)=\log S_{0},h(0)=\log V_{0}\\
\int_{0}^{1}e^{h(t)}\eta^{2}(e^{g(t)})dt=K}}
\left\{\frac{1}{2}\int_{0}^{1}\left(\frac{g'(t)}{\eta(e^{g(t)})\sqrt{e^{h(t)}}}\right)^{2}dt
+\frac{1}{2}\int_{0}^{1}\left(\frac{h'(t)}{\sigma(e^{h(t)})}\right)^{2}dt\right\}.
\end{align}
For this proof, we will proceed by assuming 
that the solutions $g,h$ of the Euler-Lagrange equation for this variational problem exist.
Some motivation for this assumption is offered by Remark~\ref{cor:variational} where we give explicit solutions for these functions.

Given $h$, let us define:
\begin{equation}
\Lambda[g]:=\frac{1}{2}\int_{0}^{1}\left(\frac{g'(t)}{\eta(e^{g(t)})\sqrt{e^{h(t)}}}\right)^{2}dt
+\lambda\left(\int_{0}^{1}e^{h(t)}\eta^{2}(e^{g(t)})dt-K\right),
\end{equation}
where $\lambda$ is the Lagrange multiplier. 
The Euler-Lagrange equation gives:
\begin{align}
-\frac{\eta'(e^{g})e^{g}(g')^{2}}{\eta^{3}(e^{g})e^{h}}
+\lambda e^{h}2\eta(e^{g})\eta'(e^{g})e^{g}
&=\frac{d}{dt}\left(\frac{g'}{\eta^{2}(e^{g})e^{h}}\right)
\nonumber
\\
&=\frac{g''}{\eta^{2}(e^{g})e^{h}}
-\frac{g'(2\eta(e^{g})\eta'(e^{g})e^{g}g'e^{h}+\eta^{2}(e^{g})e^{h}h')}{\eta^{4}(e^{g})e^{2h}},\label{eqn:g}
\end{align}
and equation~\eqref{eqn:g} implies that
\begin{align}
\frac{\eta'(e^{g})e^{g}(g')^{2}}{\eta^{3}(e^{g})e^{h}}
+\lambda e^{h}2\eta(e^{g})\eta'(e^{g})e^{g}
=\frac{g''}{\eta^{2}(e^{g})e^{h}}
-\frac{g'h'}{\eta^{2}(e^{g})e^{h}},
\end{align}
which is equivalent to
\begin{align}\label{E-L-g}
2\lambda e^{h}\eta^{2}(e^{g})\eta'(e^{g})e^{g}
&=\frac{g''}{\eta(e^{g})e^{h}}
-\frac{\eta'(e^{g})e^{g}(g')^{2}}{\eta^{2}(e^{g})e^{h}}-\frac{g'h'}{\eta(e^{g})e^{h}}
\nonumber
\\
&=\frac{d}{dt}\left(\frac{g'}{\eta(e^{g})e^{h}}\right).
\end{align}
This implies that
\begin{align}
\frac{1}{2}\frac{d}{dt}\left(\frac{g'}{\eta(e^{g})e^{h}}\right)^{2}
=\frac{g'}{\eta(e^{g})e^{h}}\frac{d}{dt}\left(\frac{g'}{\eta(e^{g})e^{h}}\right)
=2\lambda\eta(e^{g})\eta'(e^{g})e^{g}g'.
\end{align}
Integrating the above equation from $t=1$ to $t$ and taking into account the transversality condition $g'(1)=0$ gives
\begin{align}\label{gprime-sq}
\frac{1}{2}\left(\frac{g'(t)}{\eta(e^{g(t)})e^{h(t)}}\right)^{2}
=\lambda\left(\eta^{2}(e^{g(t)})-\eta^{2}(e^{g(1)})\right).
\end{align}
Therefore,
\begin{align}
\frac{1}{2}\left(\frac{g'(t)}{\eta(e^{g(t)})\sqrt{e^{h(t)}}}\right)^{2}
=\lambda\left(\eta^{2}(e^{g(t)})e^{h(t)}-\eta^{2}(e^{g(1)})e^{h(t)}\right).
\end{align}
By integrating the above equation from $t=0$ to $t=1$, we get
\begin{align}
\int_{0}^{1}\frac{1}{2}\left(\frac{g'(t)}{\eta(e^{g(t)})\sqrt{e^{h(t)}}}\right)^{2}dt
&=\lambda\left(\int_{0}^{1}\eta^{2}(e^{g(t)})e^{h(t)}dt-\eta^{2}(e^{g(1)})\int_{0}^{1}e^{h(t)}dt\right)
\nonumber
\\
&=\lambda\left(K-\eta^{2}(e^{g(1)})\int_{0}^{1}e^{h(t)}dt\right).\label{eqn:plug}
\end{align}

For OTM variance call option, $\eta^{2}(e^{g(1)})\int_{0}^{1}e^{h(t)}dt\geq K$,
which implies that $\lambda\leq 0$. 
Furthermore, the assumed monotonicity of $\eta(\cdot)$ implies a monotonicity property for $g(t)$. Integrating \eqref{E-L-g} from $t$ to $1$ and using the transversality condition $g'(1)=0$ gives
\begin{align}\label{gprime}
\frac{g'(t)}{\eta(e^{g(t)}) e^{h(t)}} =
- 2\lambda \int_t^1 e^{g(s)+h(s)} \eta^2(e^{g(s)}) \eta'(e^{g(s)}) ds
\end{align}
By assumption $\eta'(\cdot)\leq 0$ which implies that the integral is non-positive, and thus $g'(t)\leq 0$. 

From \eqref{gprime-sq} we obtain
\begin{equation}\label{g:prime}
g'(t)= - \eta(e^{g(t)})e^{h(t)}
\sqrt{2\lambda\left(\eta^{2}(e^{g(t)})-\eta^{2}(e^{g(1)})\right)}.
\end{equation}
This implies that
\begin{equation}
\frac{dg}{\eta(e^{g})\sqrt{2\lambda\left(\eta^{2}(e^{g})-\eta^{2}(e^{g(1)})\right)}}
= - e^{h(t)}dt.
\end{equation}
Therefore,
\begin{equation}
\int_{S_{0}}^{e^{g(t)}}\frac{dx}{x\eta(x)\sqrt{2\lambda\left(\eta^{2}(x)-\eta^{2}(e^{g(1)})\right)}}
= - \int_{0}^{t}e^{h(s)}ds.
\end{equation}
By letting $t=1$, we can see that $g(1)$ solves the equation
\begin{equation}\label{divide:1}
\int_{S_{0}}^{e^{g(1)}}\frac{dx}{x\eta(x)\sqrt{2\lambda\left(\eta^{2}(x)-\eta^{2}(e^{g(1)})\right)}}
= - \int_{0}^{1}e^{h(s)}ds.
\end{equation}

Moreover, from \eqref{g:prime}, we have
\begin{equation}
\frac{\eta(e^{g(t)})g'(t)}{\sqrt{2\lambda\left(\eta^{2}(e^{g(t)})-\eta^{2}(e^{g(1)})\right)}}
= - \eta^{2}(e^{g(t)})e^{h(t)}.
\end{equation}
By integrating from $t=0$ to $t=1$, we obtain
\begin{equation}
\int_{e^{g(0)}}^{e^{g(1)}}\frac{\eta(x)dx}{x\sqrt{2\lambda\left(\eta^{2}(x)-\eta^{2}(e^{g(1)})\right)}}
= - \int_{0}^{1}\eta^{2}(e^{g(t)})e^{h(t)}dt,
\end{equation}
which implies that
\begin{equation}\label{divide:2}
\int_{S_{0}}^{e^{g(1)}}\frac{\eta(x)dx}{x\sqrt{2\lambda\left(\eta^{2}(x)-\eta^{2}(e^{g(1)})\right)}}
= - K.
\end{equation}
This yields that
\begin{equation}\label{lambda:eqn}
\sqrt{-\lambda}
=\frac{1}{- K}\int_{S_{0}}^{e^{g(1)}}\frac{\eta(x)dx}{x\sqrt{2\left(\eta^{2}(e^{g(1)})-\eta^{2}(x)\right)}}.
\end{equation}

By plugging \eqref{lambda:eqn} into \eqref{eqn:plug}, we get
\begin{align}
&\int_{0}^{1}\frac{1}{2}\left(\frac{g'(t)}{\eta(e^{g(t)})\sqrt{e^{h(t)}}}\right)^{2}dt
\nonumber
\\
&=\frac{1}{K^{2}}
\left(\int_{S_{0}}^{e^{g(1)}}\frac{\eta(x)dx}{x\sqrt{2\left(\eta^{2}(e^{g(1)})-\eta^{2}(x)\right)}}\right)^{2}
\left(\eta^{2}(e^{g(1)})\int_{0}^{1}e^{h(t)}dt-K\right).
\end{align}

Finally, by dividing \eqref{divide:1} by \eqref{divide:2}, 
we get
\begin{equation}
\frac{\int_{S_{0}}^{e^{g(1)}}\frac{dx}{x\eta(x)\sqrt{\eta^{2}(e^{g(1)})-\eta^{2}(x)}}}
{\int_{S_{0}}^{e^{g(1)}}\frac{\eta(x)dx}{x\sqrt{\eta^{2}(e^{g(1)})-\eta^{2}(x)}}}
=\frac{\int_{0}^{1}e^{h(s)}ds}{K},
\end{equation}
which determines the value of $g(1)$.

Hence, we conclude that 
with fixed $h$, 
\begin{align}
&\inf_{g(0)=\log S_{0},
\int_{0}^{1}e^{h(t)}\eta^{2}(e^{g(t)})dt=K}
\frac{1}{2}\int_{0}^{1}\left(\frac{g'(t)}{\eta(e^{g(t)})\sqrt{e^{h(t)}}}\right)^{2}dt
\nonumber
\\
&=\frac{1}{K^{2}}
\left(\int_{S_{0}}^{e^{g(1)}}\frac{\eta(x)dx}{x\sqrt{2\left(\eta^{2}(e^{g(1)})-\eta^{2}(x)\right)}}\right)^{2}
\left(\eta^{2}(e^{g(1)})\int_{0}^{1}e^{h(t)}dt-K\right),
\end{align}
where $g(1)$ is a function of $z:=\int_{0}^{1}e^{h(s)}ds$
such that
\begin{align}
&\inf_{g(0)=\log S_{0},
\int_{0}^{1}e^{h(t)}\eta^{2}(e^{g(t)})dt=K}
\frac{1}{2}\int_{0}^{1}\left(\frac{g'(t)}{\eta(e^{g(t)})\sqrt{e^{h(t)}}}\right)^{2}dt
\nonumber
\\
&=\frac{1}{K^{2}}
\left(\int_{S_{0}}^{G_{c}(z)}\frac{\eta(x)dx}{x\sqrt{2\left(\eta^{2}(G_{c}(z))-\eta^{2}(x)\right)}}\right)^{2}
\left(\eta^{2}(G_{c}(z))z-K\right),
\end{align}
where $G_{c}(z):=e^{g(1)}$ is the solution of the equation
\begin{equation}
\frac{\int_{S_{0}}^{G_{c}(z)}\frac{dx}{x\eta(x)\sqrt{\eta^{2}(G_{c}(z))-\eta^{2}(x)}}}
{\int_{S_{0}}^{G_{c}(z)}\frac{\eta(x)dx}{x\sqrt{\eta^{2}(G_{c}(z))-\eta^{2}(x)}}}
=\frac{z}{K}.
\end{equation}
Hence, we conclude that
\begin{align}
\mathcal{I}(S_{0},V_{0},K)
&=\inf_{z}
\Bigg\{\frac{1}{K^{2}}
\left(\int_{S_{0}}^{G_{c}(z)}\frac{\eta(x)dx}{x\sqrt{2\left(\eta^{2}(G_{c}(z))-\eta^{2}(x)\right)}}\right)^{2}
\left(\eta^{2}(G_{c}(z))z-K\right)
\nonumber
\\
&\qquad\qquad\qquad\qquad\qquad
+\inf_{h(0)=\log V_{0},
\int_{0}^{1}e^{h(t)}dt=z}\frac{1}{2}\int_{0}^{1}\left(\frac{h'(t)}{\sigma(e^{h(t)})}\right)^{2}dt\Bigg\}.
\end{align}

(ii) Next, let us consider solving the variational problem for the OTM put options
(i.e. $K<V_{0}\eta^{2}(S_{0})$)
when the correlation $\rho=0$. 
In this case, one can compute that
\begin{align}
&\mathcal{I}(S_{0},V_{0},K)
\nonumber
\\
&=\inf_{\substack{g(0)=\log S_{0},h(0)=\log V_{0}\\
\int_{0}^{1}e^{h(t)}\eta^{2}(e^{g(t)})dt=K}}
\left\{\frac{1}{2}\int_{0}^{1}\left(\frac{g'(t)}{\eta(e^{g(t)})\sqrt{e^{h(t)}}}\right)^{2}dt
+\frac{1}{2}\int_{0}^{1}\left(\frac{h'(t)}{\sigma(e^{h(t)})}\right)^{2}dt\right\}.
\end{align}

Similar as in the case for call options, given $h$, we can
write down the Euler-Lagrange equation for $g$.
For OTM variance put option, $\eta^{2}(e^{g(1)})\int_{0}^{1}e^{h(t)}dt\leq K$ 
which implies $\lambda\geq 0$.
The assumed monotonicity property $\eta'(\cdot) \leq 0$ implies by equation \eqref{gprime} that for this case we have $g'(t)\geq 0$.
Thus we have 
\begin{equation}\label{g:prime:put}
g'(t)= \eta(e^{g(t)})e^{h(t)}
\sqrt{2\lambda\left(\eta^{2}(e^{g(t)})-\eta^{2}(e^{g(1)})\right)}.
\end{equation}
This implies that
\begin{equation}
\frac{dg}{\eta(e^{g})\sqrt{2\lambda\left(\eta^{2}(e^{g})-\eta^{2}(e^{g(1)})\right)}}
= e^{h(t)}dt.
\end{equation}
Therefore,
\begin{equation}
\int_{S_{0}}^{e^{g(t)}}\frac{dx}{x\eta(x)\sqrt{2\lambda\left(\eta^{2}(x)-\eta^{2}(e^{g(1)})\right)}}
=  \int_{0}^{t}e^{h(s)}ds.
\end{equation}
By letting $t=1$, we can see that $g(1)$ solves the equation
\begin{equation}\label{divide:1:put}
\int_{S_{0}}^{e^{g(1)}}\frac{dx}{x\eta(x)\sqrt{2\lambda\left(\eta^{2}(x)-\eta^{2}(e^{g(1)})\right)}}
= \int_{0}^{1}e^{h(s)}ds.
\end{equation}

Moreover, from \eqref{g:prime:put}, we have
\begin{equation}
\frac{\eta(e^{g(t)})g'(t)}{\sqrt{2\lambda\left(\eta^{2}(e^{g(t)})-\eta^{2}(e^{g(1)})\right)}}
= \eta^{2}(e^{g(t)})e^{h(t)}.
\end{equation}
By integrating from $t=0$ to $t=1$, we obtain
\begin{equation}
\int_{e^{g(0)}}^{e^{g(1)}}\frac{\eta(x)dx}{x\sqrt{2\lambda\left(\eta^{2}(x)-\eta^{2}(e^{g(1)})\right)}}
= \int_{0}^{1}\eta^{2}(e^{g(t)})e^{h(t)}dt,
\end{equation}
which implies that
\begin{equation}\label{divide:2:put}
\int_{S_{0}}^{e^{g(1)}}\frac{\eta(x)dx}{x\sqrt{2\lambda\left(\eta^{2}(x)-\eta^{2}(e^{g(1)})\right)}}
= K.
\end{equation}
This yields that
\begin{equation}\label{lambda:eqn:put}
\sqrt{\lambda}
=\frac{1}{ K}\int_{S_{0}}^{e^{g(1)}}\frac{\eta(x)dx}{x\sqrt{2\left(\eta^{2}(x)-\eta^{2}(e^{g(1)})\right)}}.
\end{equation}

By plugging \eqref{lambda:eqn:put} into \eqref{eqn:plug}, we get
\begin{align}
&\int_{0}^{1}\frac{1}{2}\left(\frac{g'(t)}{\eta(e^{g(t)})\sqrt{e^{h(t)}}}\right)^{2}dt
\nonumber
\\
&=\frac{1}{K^{2}}
\left(\int_{S_{0}}^{e^{g(1)}}\frac{\eta(x)dx}{x\sqrt{2\left(\eta^{2}(x)-\eta^{2}(e^{g(1)})\right)}}\right)^{2}
\left(K-\eta^{2}(e^{g(1)})\int_{0}^{1}e^{h(t)}dt\right).
\end{align}

Finally, by dividing \eqref{divide:1:put} by \eqref{divide:2:put}, 
we get
\begin{equation}
\frac{\int_{S_{0}}^{e^{g(1)}}\frac{dx}{x\eta(x)\sqrt{\eta^{2}(x)-\eta^{2}(e^{g(1)})}}}
{\int_{S_{0}}^{e^{g(1)}}\frac{\eta(x)dx}{x\sqrt{\eta^{2}(x)-\eta^{2}(e^{g(1)})}}}
=\frac{\int_{0}^{1}e^{h(s)}ds}{K},
\end{equation}
which determines the value of $g(1)$.

Hence, we conclude that 
with fixed $h$, 
\begin{align}
&\inf_{g(0)=\log S_{0},
\int_{0}^{1}e^{h(t)}\eta^{2}(e^{g(t)})dt=K}
\frac{1}{2}\int_{0}^{1}\left(\frac{g'(t)}{\eta(e^{g(t)})\sqrt{e^{h(t)}}}\right)^{2}dt
\nonumber
\\
&=\frac{1}{K^{2}}
\left(\int_{S_{0}}^{e^{g(1)}}\frac{\eta(x)dx}{x\sqrt{2\left(\eta^{2}(x)-\eta^{2}(e^{g(1)})\right)}}\right)^{2}
\left(K-\eta^{2}(e^{g(1)})\int_{0}^{1}e^{h(t)}dt\right),
\end{align}
where $g(1)$ is a function of $z:=\int_{0}^{1}e^{h(s)}ds$
such that
\begin{align}
&\inf_{g(0)=\log S_{0},
\int_{0}^{1}e^{h(t)}\eta^{2}(e^{g(t)})dt=K}
\frac{1}{2}\int_{0}^{1}\left(\frac{g'(t)}{\eta(e^{g(t)})\sqrt{e^{h(t)}}}\right)^{2}dt
\nonumber
\\
&=\frac{1}{K^{2}}
\left(\int_{S_{0}}^{G_{p}(z)}\frac{\eta(x)dx}{x\sqrt{2\left(\eta^{2}(x)-\eta^{2}(G_{p}(z))\right)}}\right)^{2}
\left(K-\eta^{2}(G_{p}(z))z\right),
\end{align}
where $G_{p}(z):=e^{g(1)}$ is the solution of the equation
\begin{equation}
\frac{\int_{S_{0}}^{G_{p}(z)}\frac{dx}{x\eta(x)\sqrt{\eta^{2}(x)-\eta^{2}(G_{p}(z))}}}
{\int_{S_{0}}^{G_{p}(z)}\frac{\eta(x)dx}{x\sqrt{\eta^{2}(x)-\eta^{2}(G_{p}(z))}}}
=\frac{z}{K}.
\end{equation}
Hence, we conclude that
\begin{align}
\mathcal{I}(S_{0},V_{0},K)
&=\inf_{z}
\Bigg\{\frac{1}{K^{2}}
\left(\int_{S_{0}}^{G_{p}(z)}\frac{\eta(x)dx}{x\sqrt{2\left(\eta^{2}(x)-\eta^{2}(G_{p}(z))\right)}}\right)^{2}
\left(K-\eta^{2}(G_{p}(z))z\right)
\nonumber
\\
&\qquad\qquad\qquad\qquad\qquad
+\inf_{h(0)=\log V_{0},
\int_{0}^{1}e^{h(t)}dt=z}\frac{1}{2}\int_{0}^{1}\left(\frac{h'(t)}{\sigma(e^{h(t)})}\right)^{2}dt\Bigg\}.
\end{align}
This completes the proof.
\end{proof}


We give here a few details of computation for the relation appearing in Remark \ref{cor:variational}.

\begin{proof}
(i) For OTM variance call options, i.e. $V_{0}\eta^{2}(S_{0})<K$,
let $z_{c}$ be the optimizer in \eqref{I:formula:call}. Then, 
from the proof of Proposition~\ref{prop:variational}, 
$g_{0}(1)=\log G_{c}(z_{c})$ and the optimal Lagrange multiplier is given by:
\begin{equation}
\lambda_{0}=-\frac{1}{K^{2}}\left(\int_{S_{0}}^{G_{c}(z_{c})}\frac{\eta(x)dx}{x\sqrt{2(\eta^{2}(G_{c}(z_{c}))-\eta^{2}(x))}}\right)^{2}.
\end{equation}
We also obtain from the proof of Proposition~\ref{prop:variational} that
\begin{equation}
\int_{S_{0}}^{e^{g_{0}(t)}}\frac{dx}{x\eta(x)\sqrt{2\lambda_{0}\left(\eta^{2}(x)-\eta^{2}(e^{g_{0}(1)})\right)}}
= - \int_{0}^{t}e^{h_{0}(s)}ds.
\end{equation}
For this case $\lambda_{0}<0$. As shown in the proof of Proposition \ref{prop:variational}, the assumed monotonicity of $\eta(.)$ implies that $g_0(t)$ is decreasing.
Substituting here \eqref{lambda:eqn} for $\sqrt{-\lambda_0}$ 
this is written equivalently as
\begin{equation}
\int_{S_{0}}^{e^{g_{0}(t)}}\frac{dx}{x\eta(x)\sqrt{2\left(\eta^{2}(G_{c}(z_{c}))-\eta^{2}(x)\right)}}
=\frac{\int_{S_{0}}^{G_{c}(z_{c})}\frac{\eta(x)dx}{x\sqrt{2(\eta^{2}(G_{c}(z_{c}))-\eta^{2}(x))}}}{K}\int_{0}^{t}e^{h_{0}(s)}ds.
\end{equation}
Moreover, the optimal $h_{0}(t)$ solves the variational problem:
\begin{equation}
\inf_{h(0)=\log V_{0},
\int_{0}^{1}e^{h(t)}dt=z_{c}}\frac{1}{2}\int_{0}^{1}\left(\frac{h'(t)}{\sigma(e^{h(t)})}\right)^{2}dt,
\end{equation}
which has already been solved in \cite{PZAsian}. For convenience we quote the solution below. 
Writing the optimal $h_{0}(t)=\log V_{0}+f_{0}(t;z)$, we have
\begin{equation}\label{31}
f'_{0}(t;z) = 
\begin{cases}
\sqrt{-2\lambda_{-}} \sigma(V_{0}e^{f_0(t;z)}) \sqrt{e^{f_{0}(1;z)} - e^{f_{0}(t;z)}} \,, & z>V_0\,, \lambda_{-}<0  
\\
-\sqrt{2\lambda_{+}} \sigma(V_{0}e^{f_0(t;z)}) \sqrt{e^{f_{0}(t;z)} - e^{f_{0}(1;z)}} \,, & z<V_0\,, \lambda_{+}>0
\end{cases},
\end{equation}
where $\lambda_{-}=\frac{-1}{2}(F^{(-)}(\alpha_{-}))^{2}$
and $\lambda_{+}=\frac{1}{2}(F^{(+)}(\alpha_{+}))^{2}$.

The parameters $\alpha_\pm$ and $\lambda_\pm$ are defined as follows.

a) $z\geq V_0$. For this case $\alpha_{-}=f_{0}(1;z) \geq  0$ is given by the solution of the equation
\begin{equation}
e^{\alpha_{-}} - \frac{z}{V_0} = \frac{G^{(-)}(\alpha_{-})}{F^{(-)}(\alpha_{-})},
\end{equation}
with
\begin{align}
G^{(-)}(\alpha_{-})
=\int_{0}^{\alpha_{-}} \frac{\sqrt{e^{\alpha_{-}} - e^{y}}}{\sigma(V_0 e^{y})} dy,
\quad
F^{(-)}(\alpha_{-})=\int_{0}^{\alpha_{-}} \frac{1}{\sigma(V_0 e^{y})} 
\frac{1}{\sqrt{e^{\alpha_{-}} -  e^{y}}}dy \,.
\end{align}

b) $z \leq V_0$. For this case
$\alpha_{+}=-f_{0}(1;z) \geq 0$ is the solution of the equation
\begin{equation}
\frac{z}{V_0} - e^{-\alpha_{+}} = \frac{G^{(+)}(\alpha_{+})}{F^{(+)}(\alpha_{+})},
\end{equation}
with
\begin{align}
G^{(+)}(\alpha_{+}) = \int_{0}^{\alpha_{+}} \frac{ \sqrt{e^{-y} - e^{-\alpha_{+}}}}{\sigma( V_0 e^{-y})} dy, \quad
F^{(+)}(\alpha_{+}) = \int_{0}^{\alpha_{+}} \frac{1}{\sigma( V_0 e^{-y})} 
\frac{1}{\sqrt{e^{-y} - e^{-\alpha_{+}}}}dy \,,
\end{align}

Finally, we can solve \eqref{31} to obtain
that when $z>V_{0}$ 
\begin{equation}
\int_{0}^{f_{0}(t;z)}\frac{dy}{\sigma(V_{0}e^{y})\sqrt{e^{\alpha_{-}}-e^{y}}}=F^{(-)}(\alpha_{-})t,
\end{equation}
and when $z<V_{0}$, 
\begin{equation}
\int_{0}^{f_{0}(t;z)}\frac{dy}{\sigma(V_{0}e^{y})\sqrt{e^{y}-e^{-\alpha_{+}}}}=-F^{(+)}(\alpha_{+})t.
\end{equation}

(ii) For OTM variance put options, i.e. $V_{0}\eta^{2}(S_{0})>K$,
let $z_{p}$ be the optimizer in \eqref{I:formula}. Then, 
from the proof of Proposition~\ref{prop:variational}, 
$g_{0}(1)=\log G_{p}(z_{p})$ and the optimal Lagrange multiplier is given by:
\begin{equation}
\lambda_{0}=\frac{1}{K^{2}}\left(\int_{S_{0}}^{G_{p}(z_{p})}\frac{\eta(x)dx}{x\sqrt{2(\eta^{2}(x)-\eta^{2}(G_{p}(z_{p})))}}\right)^{2}.
\end{equation}
We also obtain from the proof of Proposition~\ref{prop:variational} that
\begin{equation}
\int_{S_{0}}^{e^{g_{0}(t)}}\frac{dx}{x\eta(x)\sqrt{2\lambda_{0}\left(\eta^{2}(x)-\eta^{2}(e^{g_{0}(1)})\right)}}
= \int_{0}^{t}e^{h_{0}(s)}ds.
\end{equation}
For this case $\lambda_{0}>0$ and $g_0(t)$ is increasing.

Using \eqref{lambda:eqn:put} for $\sqrt{\lambda_0}$ this is written equivalently 
as
\begin{equation}
\int_{S_{0}}^{e^{g_{0}(t)}}\frac{dx}{x\eta(x)\sqrt{2\left(\eta^{2}(x)-\eta^{2}(G_{p}(z_{p}))\right)}}
=\frac{\int_{S_{0}}^{G_{p}(z_{p})}\frac{\eta(x)dx}{x\sqrt{2(\eta^{2}(x)-\eta^{2}(G_{p}(z_{p})))}}}{K}\int_{0}^{t}e^{h_{0}(s)}ds.
\end{equation}
Moreover, the optimal $h_{0}(t)$ solves the variational problem:
\begin{equation}
\inf_{h(0)=\log V_{0},
\int_{0}^{1}e^{h(t)}dt=z_{p}}\frac{1}{2}\int_{0}^{1}\left(\frac{h'(t)}{\sigma(e^{h(t)})}\right)^{2}dt,
\end{equation}
and the optimal $h_{0}(t)=\log V_{0}+f_{0}(t;z_{p})$ where $f_{0}(t;z)$ is defined in the proof of (i).
This completes the proof.
\end{proof}


\begin{proof}[Proof of Proposition~\ref{prop:bounds}]
One can compute that
\begin{align}
\Lambda_{\rho}[g,h]
&=\frac{1}{2(1-\rho^{2})}\int_{0}^{1}\left(\frac{g'(t)}{\eta(e^{g(t)})\sqrt{e^{h(t)}}}\right)^{2}dt
+\frac{1}{2(1-\rho^{2})}\int_{0}^{1}\left(\frac{h'(t)}{\sigma(e^{h(t)})}\right)^{2}dt
\nonumber
\\
&\qquad\qquad\qquad
-\frac{2\rho}{2(1-\rho^{2})}\int_{0}^{1}\frac{g'(t)}{\eta(e^{g(t)})\sqrt{e^{h(t)}}}\frac{h'(t)}{\sigma(e^{h(t)})}dt
\nonumber
\\
&\leq
\frac{1}{2(1-\rho^{2})}\int_{0}^{1}\left(\frac{g'(t)}{\eta(e^{g(t)})\sqrt{e^{h(t)}}}\right)^{2}dt
+\frac{1}{2(1-\rho^{2})}\int_{0}^{1}\left(\frac{h'(t)}{\sigma(e^{h(t)})}\right)^{2}dt
\nonumber
\\
&\qquad\qquad\qquad
+\frac{|\rho|}{2(1-\rho^{2})}\int_{0}^{1}\left[\left(\frac{g'(t)}{\eta(e^{g(t)})\sqrt{e^{h(t)}}}\right)^{2}+\left(\frac{h'(t)}{\sigma(e^{h(t)})}\right)^{2}\right]dt
\nonumber
\\
&=\frac{1+|\rho|}{2(1-\rho^{2})}\int_{0}^{1}\left(\frac{g'(t)}{\eta(e^{g(t)})\sqrt{e^{h(t)}}}\right)^{2}dt
+\frac{1+|\rho|}{2(1-\rho^{2})}\int_{0}^{1}\left(\frac{h'(t)}{\sigma(e^{h(t)})}\right)^{2}dt
\nonumber
\\
&=\frac{1+|\rho|}{1-\rho^{2}}\Lambda_{0}[g,h].
\end{align}
Hence, we conclude that
\begin{equation}
\mathcal{I}_{\rho}(S_{0},V_{0},K)
\leq\frac{1+|\rho|}{1-\rho^{2}}\mathcal{I}_{0}(S_{0},V_{0},K)
=\frac{1}{1-|\rho|}\mathcal{I}_{0}(S_{0},V_{0},K),
\end{equation}
where $\mathcal{I}_{0}(S_{0},V_{0},K)$ is computed in closed form in Proposition~\ref{prop:variational}.

Similarly, one can compute that
\begin{align}
\Lambda_{\rho}[g,h]
&\geq
\frac{1}{2(1-\rho^{2})}\int_{0}^{1}\left(\frac{g'(t)}{\eta(e^{g(t)})\sqrt{e^{h(t)}}}\right)^{2}dt
+\frac{1}{2(1-\rho^{2})}\int_{0}^{1}\left(\frac{h'(t)}{\sigma(e^{h(t)})}\right)^{2}dt
\nonumber
\\
&\qquad\qquad\qquad
-\frac{|\rho|}{2(1-\rho^{2})}\int_{0}^{1}\left[\left(\frac{g'(t)}{\eta(e^{g(t)})\sqrt{e^{h(t)}}}\right)^{2}+\left(\frac{h'(t)}{\sigma(e^{h(t)})}\right)^{2}\right]dt
\nonumber
\\
&=\frac{1-|\rho|}{1-\rho^{2}}\Lambda_{0}[g,h],
\end{align}
which implies that
\begin{equation}
\mathcal{I}_{\rho}(S_{0},V_{0},K)
\geq\frac{1-|\rho|}{1-\rho^{2}}\mathcal{I}_{0}(S_{0},V_{0},K)
=\frac{1}{1+|\rho|}\mathcal{I}_{0}(S_{0},V_{0},K),
\end{equation}
where $\mathcal{I}_{0}(S_{0},V_{0},K)$ is computed in closed form in Proposition~\ref{prop:variational}.
This completes the proof.
\end{proof}


\begin{proof}[Proof of Proposition~\ref{prop:bounds:2}]
We recall from \eqref{I:rho:1} that 
\begin{align}
\mathcal{I}_{\rho}(S_{0},V_{0},K)
=\inf_{\substack{g(0)=\log S_{0},h(0)=\log V_{0}\\
\int_{0}^{1}e^{h(t)}\eta^{2}(e^{g(t)})dt=K}}\Lambda_{\rho}[g,h],
\end{align}
where $\Lambda_{\rho}[g,h]$ is defined in \eqref{I:rho:2}.
Since $g_{0},h_{0}$ are the optimal solutions for the variational problem \eqref{I:rho:1} 
when $\rho=0$, they satisfy the constraints $g_{0}(0)=\log S_{0}$, $h_{0}(0)=\log V_{0}$
and $\int_{0}^{1}e^{h_{0}(t)}\eta^{2}(e^{g_{0}(t)})dt=K$.
Therefore, we conclude that
\begin{align}
\mathcal{I}_{\rho}(S_{0},V_{0},K)
\leq\Lambda_{\rho}[g_{0},h_{0}],
\end{align}
and this completes the proof.
\end{proof}


\begin{proof}[Proof of Proposition~\ref{prop:bounds:3}]
By letting $g'(t)\equiv 0$ in \eqref{I:rho:1}, we get $g(t)=\log S_{0}$ for every $0\leq t\leq 1$
and
\begin{equation}
\mathcal{I}_{\rho}(S_{0},V_{0},K)
\leq
\frac{1}{2(1-\rho^{2})}\inf_{h(0)=\log V_{0},\int_{0}^{1}e^{h(t)}\eta^{2}(S_{0})dt=K}
\int_{0}^{1}\left(\frac{h'(t)}{\sigma(e^{h(t)})}\right)^{2}dt.
\end{equation}
By using the definition of $\mathcal{J}(\cdot,\cdot)$ in \eqref{J:formula}, 
we complete the proof.
\end{proof}


\begin{proof}[Proof of Proposition~\ref{prop:pm:1}]

We proceed in a similar way to the proof of Theorem \ref{thm:OTM}.
Under Assumptions~\ref{assump:bounded} and \ref{assump:LDP}, by the sample-path large deviations for small time diffusions \cite{Varadhan} and \cite{Robertson2010}),
one can see that $\mathbb{Q}(\{(\log S_{tT},\log V_{tT}),0\leq t\leq 1\}\in\cdot)$ in the $\rho= \pm 1$ limit
satisfies a sample-path large deviation principle with the rate function
\begin{equation}
J[g,h] = \frac{1}{2}\int_{0}^{1} \left( \frac{g'(t)}{\eta(e^{g(t)})\sqrt{e^{h(t)}}}\right)^{2}dt, 
\end{equation}
with $g(0)=\log S_{0}$, $h(0)=\log V_{0}$ and $g,h$ being absolutely continuous, and since both $S_t,V_t$ are driven by a common Brownian motion, 
$g(t)$ and $h(t)$ are related as
\begin{equation}\label{gh:pm:1}
\frac{g'(t)}{\eta(e^{g(t)})\sqrt{e^{h(t)}}} =
\pm \frac{h'(t)}{\sigma(e^{h(t)})} \,,
\quad 0\leq t\leq 1,
\end{equation}
where the signs correspond to $\rho = \pm 1$, 
and the rate function is $+\infty$ otherwise. 

By an application of the contraction principle we get
\begin{align}
\lim_{T\rightarrow 0}T\log\mathbb{Q}\left(\frac{1}{T}\int_{0}^{T}V_{s}\eta^{2}(S_{s})ds\geq K\right)
=-\inf_{\substack{g(0)=\log S_{0},h(0)=\log V_{0}\\
\int_{0}^{1}e^{h(t)}\eta^{2}(e^{g(t)})dt=K}}
\Bigg\{\frac{1}{2}\int_{0}^{1}\left(\frac{h'(t)}{\sigma(e^{h(t)})}\right)^{2}dt\Bigg\},
\end{align}
where the infimum is also subject to the constraint \eqref{gh:pm:1}.
Hence, we conclude that the rate function for $\rho =\pm 1$ is
\begin{align}
& \mathcal{I}_{\pm 1}(S_0,V_0,K) =-\inf_{\substack{g(0)=\log S_{0},h(0)=\log V_{0}\\
\int_{0}^{1}e^{h(t)}\eta^{2}(e^{g(t)})dt=K}}
\Bigg\{\frac{1}{2}\int_{0}^{1}\left(\frac{h'(t)}{\sigma(e^{h(t)})}\right)^{2}dt\Bigg\},
\end{align}
where the infimum is also subject to the constraint \eqref{gh:pm:1}.
The variational problem can be simplified by eliminating $g(t)$ using \eqref{gh:pm:1}.

This relation implies that
\begin{equation}
\int_{0}^{t}\frac{g'(s)}{\eta(e^{g(s)})}ds=\int_{0}^{t}\frac{\pm \sqrt{e^{h(s)}}h'(s)}{\sigma(e^{h(s)})}ds,\qquad 0\leq t\leq 1,
\end{equation}
which is equivalent to
\begin{equation}
\int_{\log S_{0}}^{g(t)}\frac{dx}{\eta(e^{x})}=\int_{\log V_{0}}^{h(t)}\frac{\pm \sqrt{e^{x}}dx}{\sigma(e^{x})},\qquad 0\leq t\leq 1,
\end{equation}
where we used the constraints $g(0)=\log S_{0}$ and $h(0)=\log V_{0}$.
We can further compute that this is equivalent to
\begin{equation}
\int_{\log S_{0}}^{g(t)}\frac{e^{x}dx}{e^{x}\eta(e^{x})}=\int_{\log V_{0}}^{h(t)}\frac{\pm e^{x}dx}{\sqrt{e^{x}}\sigma(e^{x})},\qquad 0\leq t\leq 1,
\end{equation}
which is equivalent to
\begin{equation}
\int_{S_{0}}^{e^{g(t)}}\frac{dx}{x\eta(x)}=\int_{V_{0}}^{e^{h(t)}}\frac{\pm dx}{\sqrt{x}\sigma(x)},\qquad 0\leq t\leq 1.
\end{equation}
Therefore, given $h$, the optimal $g$ is given by
\begin{equation}
e^{g(t)}=\mathcal{F}_{\pm}(e^{h(t)}),\qquad 0\leq t\leq 1,
\end{equation}
where $\mathcal{F}_{\pm}(\cdot)$ is defined as:
\begin{equation}
\int_{S_{0}}^{\mathcal{F}_{\pm}(x)}\frac{dy}{y\eta(y)}=\int_{V_{0}}^{x}\frac{\pm dy}{\sqrt{y}\sigma(y)},
\end{equation}
for any $x>0$.
Hence, we conclude that
\begin{equation}
\mathcal{I}_{\pm 1}(S_{0},V_{0},K)
=\inf_{\substack{h(0)=\log V_{0}\\
\int_{0}^{1}e^{h(t)}\eta^{2}(\mathcal{F}_{\pm}(e^{h(t)}))dt=K}}\frac{1}{2}\int_{0}^{1}\left(\frac{h'(t)}{\sigma(e^{h(t)})}\right)^{2}dt.
\end{equation}
This completes the proof.
\end{proof}


\begin{proof}[Proof of Proposition~\ref{prop:rho}]
We recall from \eqref{I:rho:1}-\eqref{I:rho:2} that
\begin{align}
\mathcal{I}_{\rho}(S_{0},V_{0},K)
=\inf_{\substack{g(0)=\log S_{0},h(0)=\log V_{0}\\
\int_{0}^{1}e^{h(t)}\eta^{2}(e^{g(t)})dt=K}}\Lambda_{\rho}[g,h],
\end{align}
where
\begin{align}
\Lambda_{\rho}[g,h]:=\frac{1}{2(1-\rho^{2})}\int_{0}^{1}\left(\frac{g'(t)}{\eta(e^{g(t)})\sqrt{e^{h(t)}}}-\frac{\rho h'(t)}{\sigma(e^{h(t)})}\right)^{2}dt
+\frac{1}{2}\int_{0}^{1}\left(\frac{h'(t)}{\sigma(e^{h(t)})}\right)^{2}dt.
\end{align}
In particular, 
\begin{equation}
\mathcal{I}_{\rho}(S_{0},V_{0},K)
\leq\frac{1}{2}\int_{0}^{1}\left(\frac{h'(t)}{\sigma(e^{h(t)})}\right)^{2}dt,
\end{equation}
for any $g,h$ that satisfies: $g(0)=\log S_{0}$, $h(0)=\log V_{0}$, 
$\int_{0}^{1}e^{h(t)}\eta^{2}(e^{g(t)})dt=K$ and
\begin{equation}
\frac{g'(t)}{\eta(e^{g(t)})\sqrt{e^{h(t)}}}=\frac{\rho h'(t)}{\sigma(e^{h(t)})},\qquad 0\leq t\leq 1.
\end{equation}
By following the proof of Proposition~\ref{prop:pm:1}, we can compute
that 
\begin{equation}
e^{g(t)}=\mathcal{F}_{\rho}(e^{h(t)}),\qquad 0\leq t\leq 1,
\end{equation}
where $\mathcal{F}_{\rho}$ is defined as:
\begin{equation}
\int_{S_{0}}^{\mathcal{F}_{\rho}(x)}\frac{dy}{y\eta(y)}=\int_{V_{0}}^{x}\frac{\rho dy}{\sqrt{y}\sigma(y)},\qquad\text{for any $x>0$}.
\end{equation}
Since this holds for any $h$ that satisfies the constraints 
$\int_{0}^{1}e^{h(t)}\eta^{2}(\mathcal{F}_{\rho}(e^{h(t)}))dt=K$ and $h(0)=\log V_{0}$, by taking the infimum over $h$, the proof is complete.
\end{proof}


\begin{proof}[Proof of Proposition~\ref{prop:first:order}]
We start with an expansion for the functions $g,h$ in powers of $x:=\log\left(\frac{K}{V_{0}\eta_0^{2}}\right)$ of the form
\begin{equation}\label{g:h:expansion}
g(t)=g_{0}(t)+xg_{1}(t)+O(x^{2}),
\qquad
h(t)=h_{0}(t)+xh_{1}(t)+O(x^{2}),
\end{equation}
We expand also the Lagrange multiplier as $\lambda=\lambda_{0}+x\lambda_{1}+O(x^{2})$ as $x\rightarrow 0$.

The zero-th order terms in these expansions are
$g_{0}(t)\equiv\log S_{0}$, $h_{0}(t)\equiv\log V_{0}$
such that $g'_{0}(t)\equiv 0$ and $h'_{0}(t)\equiv 0$.
The transversality conditions $g'(1) = h'(1) = 0$ are satisfied if and only if we have 
$g'_k(1) = h'_k(1) = 0$ for all $k\geq 1$. Also, the boundary conditions $g(0) = \log S_0, h(0)=\log V_0$ imply that one must have $g_k(0) = h_k(0)=0$ for all $k\geq 1$.


The constraint $\int_0^1 e^{h(t)}\eta^{2}(e^{g(t)})dt = K$ relates $g_k(t),h_{k}(t)$ to all $g_j(t),h_j(t)$ of lower order $0 \leq j < k$.
This is written equivalently as
\begin{equation}
\int_0^1 V_{0}e^{x h_1(t) + x^2 h_2(t) + O(x^{3}) }\eta^{2}\left(S_{0}e^{x g_1(t) + x^2 g_2(t) + O(x^{3}) }\right)dt = V_0\eta^{2}(S_0) e^x, 
\end{equation}
as $x\rightarrow 0$.
Expanding in $x$ and selecting terms of the same power of $x$ on both sides gives the constraints
\begin{align}
O(x): & \int_0^1 h_1(t) dt+ 2\tilde \eta_{1} \int_{0}^{1}g_{1}(t)dt = 1\,,\label{norm} \\
O(x^2): &  \int_0^1 \left( h_2(t) + 2\tilde \eta_1 g_2(t) + \frac12 (h_1(t))^2 + (2\tilde \eta_2 + \tilde \eta_1^2) (g_1(t))^2 + 2\tilde \eta_1 g_1(t) h_1(t) \right) dt  = \frac12\,,\label{norm:2}
\end{align}
and so on, where we denoted $\tilde \eta_k := \eta_k/\eta_0$.

We substitute the expansions \eqref{g:h:expansion} into the Euler-Lagrange equations 
\eqref{EL:1}-\eqref{EL:2} and expand in $x$. Let us consider the terms of given order in $x$ resulting from this expansion.

\textbf{Order $O(x^0)$.}
At order $O(x^0)$, the equation \eqref{EL:1} gives
$\lambda_0 V_0 = 0$ which gives $\lambda_0=0$.
Both sides of the equation \eqref{EL:2}  vanish identically at this order.

\textbf{Order $O(x)$.}
At order $O(x)$, the two equations become
\begin{align}
\frac{d}{dt}\left(\frac{1}{1-\rho^{2}}\frac{g'_{1}(t)}{\eta^{2}_0 V_{0}}
-\frac{\rho}{1-\rho^{2}}\frac{h'_{1}(t)}{\eta_0 \sqrt{V_{0}}\sigma_0}\right)
=2\lambda_{1}V_{0}\eta_{0}\eta_{1},\label{EL:1:first}
\end{align}
and
\begin{align}
\frac{d}{dt}\left(\frac{1}{1-\rho^{2}}\frac{h'_{1}(t)}{\sigma^{2}_0}
-\frac{\rho}{1-\rho^{2}}\frac{g'_{1}(t)}{\eta_0 \sqrt{V_{0}}\sigma_0 }\right)=\lambda_{1}V_{0}\eta_{0}^{2},\label{EL:2:first}
\end{align}
with the constraints that $g_{1}(0)=0$, $h_{1}(0)=0$ and $\int_0^1 h_1(t) dt+\frac{2\eta_{1}}{\eta_{0}}\int_{0}^{1}g_{1}(t)dt = 1$,
and the transversality condition gives $g_{1}'(1)=h_{1}'(1)=0$.
We can re-write \eqref{EL:1:first}-\eqref{EL:2:first} as
\begin{align}
&\frac{g''_{1}(t)}{\eta^{2}_0V_{0}}-\frac{\rho h''_{1}(t)}{\eta_0\sqrt{V_{0}}\sigma_0}=2\lambda_{1}V_{0}\eta_{0}\eta_{1}(1-\rho^{2}),
\\
&\frac{h''_{1}(t)}{\sigma^{2}_0}-\frac{\rho g''_{1}(t)}{\eta_0\sqrt{V_{0}}\sigma_0}=\lambda_{1}V_{0}\eta_{0}^{2}(1-\rho^{2}),
\end{align}
which implies that
\begin{align}
&g''_{1}(t)=\lambda_{1}\left(2V_{0}^{2}\eta_{0}^{3}\eta_{1}+V_{0}^{3/2}\eta_{0}^{3}\rho\sigma_{0}\right),\qquad\qquad g_{1}(0)=g'_{1}(1)=0,\label{g:1:twice}
\\
&h''_{1}(t)=\lambda_{1}\left(\sigma_{0}^{2}V_{0}\eta_{0}^{2}+2\rho\sigma_{0}V_{0}^{3/2}\eta_{0}^{2}\eta_{1}\right),\qquad\qquad h_{1}(0)=h'_{1}(1)=0,\label{h:1:twice}
\end{align}
with $\int_0^1 h_1(t) dt+\frac{2\eta_{1}}{\eta_{0}}\int_{0}^{1}g_{1}(t)dt = 1$. 
The equations \eqref{g:1:twice}-\eqref{h:1:twice} can be integrated using the boundary condition $g'_1(1)=h'(1)=0$ to give
\begin{align}
&g'_{1}(t)= 
\lambda_{1}\left(2\sqrt{V_{0}}\eta_{1}+\rho\sigma_{0}\right) V_0^{3/2} \eta_0^3
(t-1),\label{g:1:once}
\\
&h'_{1}(t)= \lambda_{1}\left(\sigma_{0}+2\rho\sqrt{V_{0}}\eta_{1}\right)\sigma_0 V_0 \eta_0^2 (t-1)  \,.\label{h:1:once}
\end{align}
Integrating \eqref{g:1:once}-\eqref{h:1:once} again using the boundary conditions $g_1(0)=h_1(0)=0$ gives
\begin{align}
&g_{1}(t)=\frac{1}{2} \lambda_{1}\left(2\sqrt{V_{0}}\eta_{1}+\rho\sigma_{0}\right)V_0^{3/2} \eta_0^3 (t^{2}-2t),
\\
&h_{1}(t)=\frac{1}{2}\lambda_{1}\left(\sigma_{0}+2\rho\sqrt{V_{0}}\eta_{1}\right) \sigma_0 V_0 \eta_0^2 (t^{2}-2t).
\end{align}

The constant $\lambda_1$ is determined from the first normalization condition (\ref{norm})
\begin{equation}
\int_0^1 h_1(t) dt+\frac{2\eta_{1}}{\eta_{0}}\int_{0}^{1}g_{1}(t)dt =-\frac{1}{2}\lambda_{1}\left(\sigma_{0}^{2}V_{0}\eta_{0}^{2}+4\rho\sigma_{0}V_{0}^{3/2}\eta_{0}^{2}\eta_{1}+4V_{0}^{2}\eta_{0}^{2}\eta_{1}^{2}\right)\frac{2}{3}=1,
\end{equation}
which gives $\lambda_{1}=-\frac{3}{V_0 \eta_0^2 (\sigma_{0}^{2}+4\rho\sigma_{0}V_{0}^{1/2}\eta_{1}+4V_{0}\eta_{1}^{2})}$. We conclude that
\begin{align}
&g_{1}(t)=\frac{3}{2} \frac{(2V_{0}^{1/2}\eta_{1}+\rho\sigma_{0})V_0^{1/2} \eta_0}{\sigma_{0}^{2}+4\rho\sigma_{0}V_{0}^{1/2}\eta_{1}+4V_{0}\eta_{1}^{2}}(2t-t^{2}),\label{g:1:function}
\\
&h_{1}(t)=\frac{3}{2}\frac{\sigma_{0}^{2}+2\rho\sigma_{0}V_{0}^{1/2}\eta_{1}}{\sigma_{0}^{2}+4\rho\sigma_{0}V_{0}^{1/2}\eta_{1}+4V_{0}\eta_{1}^{2}}(2t-t^{2}).\label{h:1:function}
\end{align}

Finally, by plugging \eqref{g:h:expansion} into \eqref{I:rho:2}, it follows from \eqref{I:rho:1}
that
\begin{align}
&\mathcal{I}_{\rho}(S_{0},V_{0},V_{0}\eta^{2}(S_{0})e^{x})
\nonumber
\\
&=\frac{x^{2}}{2(1-\rho^{2})}\int_{0}^{1}\left(\frac{g'_{1}(t)}{\eta(S_{0})\sqrt{V_{0}}}-\frac{\rho h'_{1}(t)}{\sigma_0}\right)^{2}dt
+\frac{x^{2}}{2}\int_{0}^{1}\left(\frac{h'_{1}(t)}{\sigma_0}\right)^{2}dt
+O(x^{3})
\nonumber
\\
&=\frac{x^{2}\lambda_{1}^{2}}{2(1-\rho^{2})}\int_{0}^{1}
\left(\frac{2V_{0}^{2}\eta_{0}^{3}\eta_{1}+V_{0}^{3/2}\eta_{0}^{3}\rho\sigma_{0}}{\eta(S_{0})\sqrt{V_{0}}}
-\frac{\rho}{\sigma_{0}}\left(\sigma_{0}^{2}V_{0}\eta_{0}^{2}+2\rho\sigma_{0}V_{0}^{3/2}\eta_{0}^{2}\eta_{1}\right)\right)^{2}(1-t)^{2}dt
\nonumber
\\
&\qquad
+\frac{x^{2}\lambda_{1}^{2}}{2}\int_{0}^{1}\left(\frac{\sigma_{0}^{2}V_{0}\eta_{0}^{2}+2\rho\sigma_{0}V_{0}^{3/2}\eta_{0}^{2}\eta_{1}}{\sigma_{0}}\right)^{2}(1-t)^{2}dt
+O(x^{3})
\nonumber
\\
&=\frac{x^{2}\lambda_{1}^{2}}{6(1-\rho^{2})}
\left(2V_{0}^{3/2}\eta_{0}^{2}\eta_{1}-2\rho^{2}V_{0}^{3/2}\eta_{0}^{2}\eta_{1}\right)^{2}
+\frac{x^{2}\lambda_{1}^{2}}{6}\left(\sigma_{0}V_{0}\eta_{0}^{2}+2\rho V_{0}^{3/2}\eta_{0}^{2}\eta_{1}\right)^{2}
+O(x^{3})
\nonumber
\\
&=\frac{x^{2}\lambda_{1}^{2}}{6}\left(\sigma_{0}^{2}V_{0}^{2}\eta_{0}^{4}+4\rho\sigma_{0}V_{0}^{5/2}\eta_{0}^{4}\eta_{1}+4V_{0}^{3}\eta_{0}^{4}\eta_{1}^{2}\right)
+O(x^{3})
\nonumber
\\
&=\frac{3x^{2}}{2}\frac{\sigma_{0}^{2}V_{0}^{2}\eta_{0}^{4}+4\rho\sigma_{0}V_{0}^{5/2}\eta_{0}^{4}\eta_{1}+4V_{0}^{3}\eta_{0}^{4}\eta_{1}^{2}}{\left(\sigma_{0}^{2}V_{0}\eta_{0}^{2}+4\rho\sigma_{0}V_{0}^{3/2}\eta_{0}^{2}\eta_{1}+4V_{0}^{2}\eta_{0}^{2}\eta_{1}^{2}\right)^{2}}+O(x^{3})
\nonumber
\\
&=\frac{3}{2\left(\sigma_{0}^{2}+4\rho\sigma_{0}V_{0}^{1/2}\eta_{1}+4V_{0}\eta_{1}^{2}\right)}x^{2}+O(x^{3}),
\end{align}
as $x\rightarrow 0$.
This completes the proof for the $O(x^2)$ term in the rate function.
\vspace{0.5cm}

\textbf{Order $O(x^2)$.} The $O(x^2)$ terms in the expansion of the Euler-Lagrange equations give a system of two linear equations in $g''_2(t), h''_2(t)$. They are solved with the solutions
\begin{align}\label{g2pp}
g''_2(t) &= a + b t + c t^2 \,,\qquad
h''_2(t) = \bar a + \bar b t + \bar c t^2 \,,
\end{align}
with coefficients $a,b,c$ and $\bar a, \bar b, \bar c$ which depend on known parameters $\sigma_i, \eta_j$ and the unknown $\lambda_2$. 

Taking into account the boundary conditions at $t=1$,
the equations (\ref{g2pp}) can be integrated as
\begin{align}
    g'_2(t) &= (t-1) \left( a + \frac12 b (t+1) + \frac13 c(t^2 + t + 1) \right)\,,\label{int:again:1}\\
    h'_2(t) &= (t-1) \left( \bar a + \frac12 \bar b (t+1) + \frac13 \bar c(t^2 + t + 1) \right)\,.\label{int:again:2}
\end{align}
Integrating again \eqref{int:again:1}-\eqref{int:again:2} using the boundary condition at $t=0$ gives
\begin{align}\label{g2sol}
    g_2(t) &= \frac12 a t(t-2) + \frac16 b t (t^2-3) + \frac{1}{12} c t (t^3-4)\,,\\
    \label{h2sol}
    h_2(t) &= \frac12 \bar a t(t-2) + \frac16 \bar b t (t^2-3) + \frac{1}{12} \bar c t (t^3-4) \,.
\end{align}

The coefficient $\lambda_2$ is determined from the normalization condition \eqref{norm:2} which can be re-written as
\begin{align}
& \int_0^1 \left(h_2(t) + 2\tilde \eta_1 g_2(t) \right) dt  \nonumber \\
& = \frac12 - \int_0^1 
\left(\frac12 (h_1(t))^2 + (\tilde \eta_1^2 + 2\tilde \eta_2) (g_1(t))^2 + 2\tilde \eta_1 g_1(t) h_1(t) \right) dt\,.\label{normalization:condition}
\end{align}
The integral on the right-hand side of \eqref{normalization:condition} depends only on the functions $g_1(t),h_1(t)$ which have been determined in the previous step; see \eqref{g:1:function}-\eqref{h:1:function}. Thus, this condition introduces a linear constraint on the coefficients $a,b,c,\bar a,\bar b,\bar c$ which is used to solve for $\lambda_2$.

The final expressions for the coefficients in (\ref{g2pp}) are rather lengthy so we give them in Appendix~\ref{sec:g2:h2}. 

Substituting the expressions for $g'_2(t),h'_2(t)$ from \eqref{int:again:1}-\eqref{int:again:2} into the $O(x^3)$ term in the expansion of the rate function \eqref{I:rho:lambda}
yields the stated result. 
\end{proof}


\begin{proof}[Proof of Theorem~\ref{thm:ATM}]
We only provide the proof for the ATM call option. 
The case for the put option is similar.
First of all, it is easy to see that
\begin{equation}
\left|C(T)-\mathbb{E}\left[\left(\frac{1}{T}\int_{0}^{T}V_{t}\eta^{2}(S_{t})dt-K\right)^{+}\right]\right|
=O(T),
\end{equation}
as $T\rightarrow 0$, which follows from the estimate:
\begin{align}
&\left|C(T)-\mathbb{E}\left[\left(\frac{1}{T}\int_{0}^{T}V_{t}\eta^{2}(S_{t})dt-K\right)^{+}\right]\right|
\nonumber
\\
&\leq
\left|1-e^{-rT}\right|\cdot\left|\mathbb{E}\left[\left(\frac{1}{T}\int_{0}^{T}V_{t}\eta^{2}(S_{t})dt-K\right)^{+}\right]\right|
\nonumber
\\
&\leq\left|1-e^{-rT}\right|\cdot\left(\frac{1}{T}\int_{0}^{T}\mathbb{E}[V_{t}\eta^{2}(S_{t})]dt+K\right)
\nonumber
\\
&\leq\left|1-e^{-rT}\right|\cdot\left(\frac{1}{T}\int_{0}^{T}V_{0}e^{(M_{\mu}+\frac{1}{2}M_{\sigma}^{2})t}M_{\eta}^{2}dt+K\right),
\end{align}
where we used the fact that $|\eta(\cdot)|\leq M_{\eta}$
and the bound $\mathbb{E}[V_{t}]\leq V_{0}e^{M_{\mu}t+\frac{1}{2}M_{\sigma}^{2}t}$ for every $t\geq 0$ from \eqref{V:p:bound}.

It has been shown in Theorem 3.2. \cite{PWZ2024} that 
under Assumptions~\ref{assump:bounded}, \ref{assump:lip} 
and $\max_{0\leq t\leq T}\mathbb{E}[(S_{t})^{4}]=O(1)$
as $T\rightarrow 0$,
we have that uniformly in $0\leq t\leq T$,
\begin{equation}\label{estimate:VIX:paper}
\mathbb{E}\left[\left|S_{t}-\hat{S}_{t}\right|^{2}\right]=O\left(T^{3/2}\right),
\qquad
\mathbb{E}\left[\left|V_{t}-\hat{V}_{t}\right|\right]=O(T),
\end{equation}
where
\begin{align}
&\hat{S}_{t}=S_{0}+S_{0}\eta(S_{0})\sqrt{V_{0}}\left(\sqrt{1-\rho^{2}}W_{t}+\rho Z_{t}\right),
\\
&\hat{V}_{t}=V_{0}+V_{0}\sigma(V_{0})Z_{t}.
\end{align}

Since $x\mapsto x^{+}$ is $1$-Lipschitz, we can compute that
\begin{align}
&\left|\mathbb{E}\left[\left(\frac{1}{T}\int_{0}^{T}\hat{V}_{s}\eta^{2}(\hat{S}_{s})ds-K\right)^{+}\right]
-\mathbb{E}\left[\left(\frac{1}{T}\int_{0}^{T}V_{s}\eta^{2}(S_{s})ds-K\right)^{+}\right]\right|
\nonumber
\\
&\leq
\mathbb{E}\left|\frac{1}{T}\int_{0}^{T}\hat{V}_{s}\eta^{2}(\hat{S}_{s})ds-\frac{1}{T}\int_{0}^{T}V_{s}\eta^{2}(S_{s})ds\right|
\nonumber
\\
&\leq
\frac{1}{T}\int_{0}^{T}\mathbb{E}\left|\hat{V}_{s}\eta^{2}(\hat{S}_{s})-V_{s}\eta^{2}(S_{s})\right|ds.
\end{align}
Moreover, uniformly in $0\leq s\leq T$, 
\begin{align}
\mathbb{E}\left|\hat{V}_{s}\eta^{2}(\hat{S}_{s})-V_{s}\eta^{2}(S_{s})\right|
\leq
\mathbb{E}\left|\hat{V}_{s}\eta^{2}(\hat{S}_{s})-\hat{V}_{s}\eta^{2}(S_{s})\right|
+\mathbb{E}\left|\hat{V}_{s}\eta^{2}(S_{s})-V_{s}\eta^{2}(S_{s})\right|.
\end{align}
Under the assumption that $|\eta(\cdot)|\leq M_{\eta}$ uniformly
and $\eta(\cdot)$ is $\ell_{\eta}$-Lipschitz, we have
\begin{align}
\mathbb{E}\left|\hat{V}_{s}\eta^{2}(\hat{S}_{s})-V_{s}\eta^{2}(S_{s})\right|
&\leq
2M_{\eta}\ell_{\eta}\mathbb{E}\left[|\hat{V}_{s}|\cdot\left|\hat{S}_{s}-S_{s}\right|\right]
+M_{\eta}^{2}\mathbb{E}\left|\hat{V}_{s}-V_{s}\right|
\nonumber
\\
&\leq
2M_{\eta}\ell_{\eta}\left(\mathbb{E}\left[|\hat{V}_{s}|^{2}\right]\right)^{1/2}
\left(\mathbb{E}\left[\left|\hat{S}_{s}-S_{s}\right|^{2}\right]\right)^{1/2}
+M_{\eta}^{2}\mathbb{E}\left|\hat{V}_{s}-V_{s}\right|
\nonumber
\\
&=O(T^{3/4}),
\end{align}
as $T\rightarrow 0$ by applying \eqref{estimate:VIX:paper}
and the simple bound $\mathbb{E}|\hat{V}_{s}|^{2}\leq 2V_{0}^{2}+2V_{0}^{2}\sigma^{2}(V_{0})T$
uniformly in $0\leq t\leq T$.

Therefore, the call option can be approximated by (with $K=V_{0}\eta^{2}(S_{0})$):
\begin{equation}
\left|\mathbb{E}\left[\left(\frac{1}{T}\int_{0}^{T}\hat{V}_{s}\eta^{2}(\hat{S}_{s})ds-K\right)^{+}\right]
-\mathbb{E}\left[\left(\frac{1}{T}\int_{0}^{T}V_{s}\eta^{2}(S_{s})ds-K\right)^{+}\right]\right|
=O(T^{3/4}),
\end{equation}
as $T\rightarrow 0$.

Next, we can compute that
\begin{align}
&\mathbb{E}\left[\left(\frac{1}{T}\int_{0}^{T}\hat{V}_{s}\eta^{2}(\hat{S}_{s})ds-V_{0}\eta^{2}(S_{0})\right)^{+}\right]
\nonumber
\\
&=\mathbb{E}\Bigg[\Bigg(\frac{1}{T}\int_{0}^{T}\left(V_{0}+V_{0}\sigma(V_{0})Z_{s}\right)
\nonumber
\\
&\qquad\qquad\qquad
\cdot\eta^{2}\left(S_{0}+S_{0}\eta(S_{0})\sqrt{V_{0}}\left(\sqrt{1-\rho^{2}}W_{s}+\rho Z_{s}\right)\right)ds
-V_{0}\eta^{2}(S_{0})\Bigg)^{+}\Bigg].
\end{align}
Since $x\mapsto x^{+}$ is $1$-Lipschitz, 
we have
\begin{align}
&\Bigg|\mathbb{E}\left[\left(\frac{1}{T}\int_{0}^{T}\left(V_{0}+V_{0}\sigma(V_{0})Z_{s}\right)
\eta^{2}\left(S_{0}+S_{0}\eta(S_{0})\sqrt{V_{0}}\left(\sqrt{1-\rho^{2}}W_{s}+\rho Z_{s}\right)\right)ds-V_{0}\eta^{2}(S_{0})\right)^{+}\right]
\nonumber
\\
&\qquad
-\mathbb{E}\Bigg[\Bigg(\frac{1}{T}\int_{0}^{T}\Bigg(V_{0}\eta^{2}(S_{0})
+V_{0}2\eta(S_{0})\eta'(S_{0})S_{0}\eta(S_{0})\sqrt{V_{0}}\left(\sqrt{1-\rho^{2}}W_{s}+\rho Z_{s}\right)
\nonumber
\\
&\qquad\qquad\qquad\qquad\qquad
+\eta^{2}(S_{0})V_{0}\sigma(V_{0})Z_{s}\Bigg)ds-V_{0}\eta^{2}(S_{0})\Bigg)^{+}\Bigg]\Bigg|
\nonumber
\\
&\leq
\mathbb{E}\Bigg[\frac{1}{T}\int_{0}^{T}\Big|\left(V_{0}+V_{0}\sigma(V_{0})Z_{s}\right)
\Big(\eta^{2}\left(S_{0}+S_{0}\eta(S_{0})\sqrt{V_{0}}\left(\sqrt{1-\rho^{2}}W_{s}+\rho Z_{s}\right)\right)
\nonumber
\\
&\qquad\qquad
-\eta^{2}(S_{0})-(\eta^{2}(S_{0}))'S_{0}\eta(S_{0})\sqrt{V_{0}}\left(\sqrt{1-\rho^{2}}W_{s}+\rho Z_{s}\right)\Big)
\Big|ds\Bigg]
\nonumber
\\
&\qquad
+\mathbb{E}\left[\frac{1}{T}\int_{0}^{T}\left|V_{0}\sigma(V_{0})Z_{s}(\eta^{2}(S_{0}))'S_{0}\eta(S_{0})\sqrt{V_{0}}\left(\sqrt{1-\rho^{2}}W_{s}+\rho Z_{s}\right)\right|ds\right].
\end{align}
Under the assumption that $\sup_{x\in\mathbb{R}^{+}}|(\eta^{2})''(x)|<\infty$, there exists some
constant $C>0$, such that
\begin{align}
&\mathbb{E}\Bigg[\frac{1}{T}\int_{0}^{T}\Big|\left(V_{0}+V_{0}\sigma(V_{0})Z_{s}\right)
\Big(\eta^{2}(S_{0}+S_{0}\eta(S_{0})\sqrt{V_{0}}\left(\sqrt{1-\rho^{2}}W_{s}+\rho Z_{s}\right))
\nonumber
\\
&\qquad\qquad
-\eta^{2}(S_{0})-(\eta^{2}(S_{0}))'S_{0}\eta(S_{0})\sqrt{V_{0}}\left(\sqrt{1-\rho^{2}}W_{s}+\rho Z_{s}\right)\Big)
\Big|ds\Bigg]
\nonumber
\\
&\leq
\frac{C}{T}\int_{0}^{T}\mathbb{E}\left[\left|\left(V_{0}+V_{0}\sigma(V_{0})Z_{s}\right)
\left(\sqrt{1-\rho^{2}}W_{s}+\rho Z_{s}\right)^{2}\right|\right]ds
\nonumber
\\
&\leq
\frac{C}{T}\int_{0}^{T}\left(\mathbb{E}\left[\left(V_{0}+V_{0}\sigma(V_{0})Z_{s}\right)^{2}\right]\right)^{1/2}
\left(\mathbb{E}\left[\left(\sqrt{1-\rho^{2}}W_{s}+\rho Z_{s}\right)^{4}\right]\right)^{1/2}ds
\nonumber
\\
&=\frac{C}{T}\int_{0}^{T}\left(V_{0}^{2}+V_{0}^{2}\sigma^{2}(V_{0})s\right)^{1/2}
\left(3s^{2}\right)^{1/2}ds=O(T),
\end{align}
as $T\rightarrow 0$.
In addition, we can compute that
\begin{align}
&\mathbb{E}\left[\frac{1}{T}\int_{0}^{T}\left|V_{0}\sigma(V_{0})Z_{s}(\eta^{2}(S_{0}))'S_{0}\eta(S_{0})\sqrt{V_{0}}\left(\sqrt{1-\rho^{2}}W_{s}+\rho Z_{s}\right)\right|ds\right]
\nonumber
\\
&=\left(V_{0}\sigma(V_{0})(\eta^{2}(S_{0}))'S_{0}\eta(S_{0})\sqrt{V_{0}}\right)^{2}
\frac{1}{T}\int_{0}^{T}\mathbb{E}\left[\left|Z_{s}\left(\sqrt{1-\rho^{2}}W_{s}+\rho Z_{s}\right)\right|\right]ds
\nonumber
\\
&\leq\left(V_{0}\sigma(V_{0})(\eta^{2}(S_{0}))'S_{0}\eta(S_{0})\sqrt{V_{0}}\right)^{2}
\nonumber
\\
&\qquad\qquad\cdot
\frac{1}{T}\int_{0}^{T}\left(\mathbb{E}\left[(Z_{s})^{2}\right]\right)^{1/2}
\left(\mathbb{E}\left[\left(\sqrt{1-\rho^{2}}W_{s}+\rho Z_{s}\right)^{2}\right]\right)^{1/2}ds
\nonumber
\\
&=\left(V_{0}\sigma(V_{0})(\eta^{2}(S_{0}))'S_{0}\eta(S_{0})\sqrt{V_{0}}\right)^{2}
\frac{1}{T}\int_{0}^{T}\sqrt{s}\sqrt{s}ds=O(T),
\end{align}
as $T\rightarrow 0$.

Finally, we can compute that
\begin{align}
&\mathbb{E}\Bigg[\Bigg(\frac{1}{T}\int_{0}^{T}\Bigg(V_{0}\eta^{2}(S_{0})
+V_{0}(\eta^{2}(S_{0}))'S_{0}\eta(S_{0})\sqrt{V_{0}}\left(\sqrt{1-\rho^{2}}W_{s}+\rho Z_{s}\right)
\nonumber
\\
&\qquad\qquad\qquad\qquad\qquad
+\eta^{2}(S_{0})V_{0}\sigma(V_{0})Z_{s}\Bigg)ds-V_{0}\eta^{2}(S_{0})\Bigg)^{+}\Bigg]
\nonumber
\\
&=\mathbb{E}\Bigg[\Bigg(2V_{0}\eta^{2}(S_{0})\eta'(S_{0})S_{0}\sqrt{V_{0}}\frac{1}{T}\int_{0}^{T}\left(\sqrt{1-\rho^{2}}W_{s}+\rho Z_{s}\right)ds
\nonumber
\\
&\qquad\qquad\qquad\qquad\qquad\qquad\qquad\qquad
+\eta^{2}(S_{0})V_{0}\sigma(V_{0})\frac{1}{T}\int_{0}^{T}Z_{s}ds\Bigg)^{+}\Bigg].
\end{align}

It is easy to check that
$\frac{1}{T}\int_{0}^{T}W_{s}ds$ 
and $\frac{1}{T}\int_{0}^{T}Z_{s}ds$ 
are independently and identically distributed as $\mathcal{N}\left(0,\frac{T}{3}\right)$ (see e.g. \cite{PZAsian}).
Therefore, we have
\begin{align}
&\mathbb{E}\left[\left(2V_{0}\eta^{2}(S_{0})\eta'(S_{0})S_{0}\sqrt{V_{0}}\frac{1}{T}\int_{0}^{T}\left(\sqrt{1-\rho^{2}}W_{s}+\rho Z_{s}\right)ds
+\eta^{2}(S_{0})V_{0}\sigma(V_{0})\frac{1}{T}\int_{0}^{T}Z_{s}ds\right)^{+}\right]
\nonumber
\\
&=\mathbb{E}\Bigg[\Bigg(2V_{0}\eta^{2}(S_{0})\eta'(S_{0})S_{0}\sqrt{V_{0}}\sqrt{1-\rho^{2}}\frac{1}{T}\int_{0}^{T}W_{s}ds
\nonumber
\\
&\qquad\qquad\qquad\qquad
+\left(2V_{0}\eta^{2}(S_{0})\eta'(S_{0})S_{0}\sqrt{V_{0}}\rho+\eta^{2}(S_{0})V_{0}\sigma(V_{0})\right)\frac{1}{T}\int_{0}^{T}Z_{s}ds\Bigg)^{+}\Bigg]
\nonumber
\\
&=\Bigg(\left(2V_{0}\eta^{2}(S_{0})\eta'(S_{0})S_{0}\sqrt{V_{0}}\sqrt{1-\rho^{2}}\right)^{2}
\nonumber
\\
&\qquad\qquad
+\left(2V_{0}\eta^{2}(S_{0})\eta'(S_{0})S_{0}\sqrt{V_{0}}\rho+\eta^{2}(S_{0})V_{0}\sigma(V_{0})\right)^{2}\Bigg)^{1/2}
\frac{\sqrt{T}}{\sqrt{3}}
\mathbb{E}\left[X^{+}\right]
\nonumber
\\
&=\Bigg(\left(2V_{0}\eta^{2}(S_{0})\eta'(S_{0})S_{0}\sqrt{V_{0}}\sqrt{1-\rho^{2}}\right)^{2}
\nonumber
\\
&\qquad\qquad
+\left(2V_{0}\eta^{2}(S_{0})\eta'(S_{0})S_{0}\sqrt{V_{0}}\rho+\eta^{2}(S_{0})V_{0}\sigma(V_{0})\right)^{2}
\Bigg)^{1/2}
\frac{\sqrt{T}}{\sqrt{3}}
\frac{1}{\sqrt{2\pi}},
\end{align}
where $X\sim\mathcal{N}(0,1)$.

Hence, we conclude that, for ATM case,
\begin{align}
&\lim_{T\rightarrow 0}\frac{1}{\sqrt{T}}
\mathbb{E}\left[\left(\frac{1}{T}\int_{0}^{T}V_{s}\eta^{2}(S_{s})ds-V_{0}\eta^{2}(S_{0})\right)^{+}\right]
\nonumber
\\
&=\frac{1}{\sqrt{6\pi}}\Bigg(\left(2V_{0}\eta^{2}(S_{0})\eta'(S_{0})S_{0}\sqrt{V_{0}}\sqrt{1-\rho^{2}}\right)^{2}
\nonumber
\\
&\qquad\qquad\qquad
+\left(2V_{0}\eta^{2}(S_{0})\eta'(S_{0})S_{0}\sqrt{V_{0}}\rho+\eta^{2}(S_{0})V_{0}\sigma(V_{0})\right)^{2}\Bigg)^{1/2}
\nonumber
\\
&=\frac{1}{\sqrt{6\pi}}\sqrt{4V_{0}^{3}\eta^{4}(S_{0})(\eta'(S_{0}))^{2}S_{0}^{2}
+\eta^{4}(S_{0})V_{0}^{2}\sigma^{2}(V_{0})
+4\rho\eta^{4}(S_{0})\eta'(S_{0})S_{0}V_{0}^{5/2}\sigma(V_{0})}.
\end{align}
This completes the proof.
\end{proof}

\section{Coefficients of $g_2(t),h_2(t)$}
\label{sec:g2:h2}

We give in this Appendix the coefficients appearing in the functions $g_2(t), h_2(t)$ relevant for the $O(x^3)$ term in the rate function.
The details of proof are given in the proof of Proposition~\ref{prop:first:order}.
Recall that these functions are given by
\begin{align}
    g_2(t) &= \frac12 a t(t-2) + \frac16 b t (t^2-3) + \frac{1}{12} c t (t^3-4)\,,\label{g:2:expression}\\
    h_2(t) &= \frac12 \bar a t(t-2) + \frac16 \bar b t (t^2-3) + \frac{1}{12} \bar c t (t^3-4) \,.\label{h:2:expression}
\end{align}

The coefficients of $g_2(t)$ in \eqref{g:2:expression} can be expressed as an expansion in the leading volatility-of-volatility coefficient $\sigma_0$ as follows:
\begin{align}\label{a:b:c}
a := \frac{1}{10 D^3} 3 \eta_0 \sqrt{V_0} \cdot \sum_{j=0}^5 c_j \sigma_0^j \,\quad
b := -\frac{9}{2 D^2} \sum_{j=0}^3 d_j \sigma_0^j \,,\quad
c := \frac{9}{4 D^2} \sum_{j=0}^3 d_j \sigma_0^j\,, 
\end{align}
where we denote $D := \sigma_0^2 + 4\eta_1 \sigma_0 \rho \sqrt{V_0} + 4\eta_1^2 V_0$ and the coefficients in \eqref{a:b:c} are
\begin{align}
c_0 &:=64 \eta_1^3 (14 \eta_1^2 + 9 \eta_0 \eta_2) V_0^{5/2}\,, \\
c_1 &:= 144 \eta_1^3 \rho^2 \sigma_1 V_0^{3/2} + 
  16 \eta_1^2 (149 \eta_1^2 + 54 \eta_0 \eta_2) \rho V_0^2\,, \\
c_2 &:= 144 \eta_1^2 \rho \sigma_1 V_0 + 72 \eta_1^2 \rho^3 \sigma_1 V_0 + 
  640 \eta_1^3 V_0^{3/2}+ 
  8 \eta_1 (227 \eta_1^2 + 54 \eta_0 \eta_2) \rho^2 V_0^{3/2}\,, \\
c_3 &:= 
36 \eta_1 \sigma_1 \sqrt{V_0} + 72 \eta_1 \rho^2 \sigma_1 \sqrt{V_0} + 
  864 \eta_1^2 \rho V_0 + 4 (91 \eta_1^2 + 18 \eta_0 \eta_2) \rho^3 V_0\,, \\
c_4 &:= 
  18 \rho \sigma_1 + 86 \eta_1 \sqrt{V_0} + 212 \eta_1 \rho^2 \sqrt{V_0}\,,   \\
c_5 &:= 28\rho\,, 
\end{align}
and
\begin{align}
d_0 &:= 8 \eta_0 \eta_1 V_0^2 (5 \eta_1^2 + 2 \eta_0 \eta_2)\,, \\
d_1 &:= 4 \eta_0 \eta_1 \rho^2 \sigma_1 V_0 + 
  8 \eta_0 (8 \eta_1^2 + \eta_0 \eta_2) \rho V_0^{3/2}\,,  \\
  d_2 &:= 2 \eta_0 \rho \sigma_1 \sqrt{V_0} + 16 \eta_0 \eta_1 V_0 + 16 \eta_0 \eta_1 \rho^2 V_0\,, \\
  d_3 &:= 5\eta_0 \rho \sqrt{V_0} \,.
\end{align}

The coefficients of $h_2(t)$ in \eqref{h:2:expression} have a similar expansion in $\sigma_0$ of the form:
\begin{align}\label{bar:a:b:c}
\bar a := \frac{1}{10 D^3} 3 \sigma_0 \cdot \sum_{j=0}^5 \bar c_j \sigma_0^j \,\quad
\bar b := -\frac{9}{D^2} \sigma_0 \sum_{j=0}^3 \bar d_j \sigma_0^j \,,\quad
\bar c := \frac{9}{2 D^2} \sigma_0 \sum_{j=0}^3 \bar d_j \sigma_0^j\,, 
\end{align}
where the coefficients in \eqref{bar:a:b:c} are given by:
\begin{align}
\bar c_0 &:= 32 \left(15 \eta_1^4 \rho^2 \sigma_1 V_0^2 + 13 \eta_1^5 \rho V_0^{5/2} + 
   18 \eta_0 \eta_1^3 \eta_2 \rho V_0^{5/2}\right)\,, \\
\bar c_1 &:= 16 \Big(30 \eta_1^3 \rho \sigma_1 V_0^{3/2} + 39 \eta_1^3 \rho^3 \sigma_1 V_0^{3/2} - 
   2 \eta_1^4 V_0^2 + 18 \eta_0 \eta_1^2 \eta_2 V_0^2 \\
   &\qquad\qquad+ 76 \eta_1^4 \rho^2 V_0^2 + 
   36 \eta_0 \eta_1^2 \eta_2 \rho^2 V_0^2\Big)\,, \nonumber \\
\bar c_2 &:= 8 \Big(15 \eta_1^2 \sigma_1 V_0 + 102 \eta_1^2 \rho^2 \sigma_1 V_0 + 
   66 \eta_1^3 \rho V_0^{3/2} + 36 \eta_0 \eta_1 \eta_2 \rho V_0^{3/2} \\
   &\qquad\qquad+ 
   91 \eta_1^3 \rho^3 V_0^{3/2} + 18 \eta_0 \eta_1 \eta_2 \rho^3 V_0^{3/2}\Big)\,,  \nonumber\\
\bar c_3 &:= 4 \left(87 \eta_1 \rho \sigma_1 \sqrt{V_0} + 20 \eta_1^2 V_0 + 137 \eta_1^2 \rho^2 V_0 + 
   18 \eta_0 \eta_2 \rho^2 V_0\right)\,,
 \\
\bar c_4 &:= 4 \left(12 \sigma_1 + 37 \eta_1 \rho \sqrt{V_0}\right)\,,
    \\
\bar c_5 &:= 13\,, 
\end{align}
and
\begin{align}
\bar d_0 &:= 4 \eta_1 \rho \left(3 \eta_1 \rho \sigma_1 + 2 \eta_1^2 \sqrt{V_0} + 
   2 \eta_0 \eta_2 \sqrt{V_0}\right) V_0\,, \\
\bar d_1 &:= 2 \left(7 \eta_1 \rho \sigma_1 \sqrt{V_0} + 7 \eta_1^2 \rho^2 V_0 + 
   2 \eta_0 \eta_2 \rho^2 V_0\right)\,, \\
 \bar d_2 &:=4 \sigma_1 + 7 \eta_1 \rho \sqrt{V_0}\,,  \\
 \bar d_3 &:= 1\,.
\end{align}

\end{document}